\tikzstyle{vertex}=[circle,fill=black,minimum size=3pt,inner sep=0pt]
\tikzstyle{edge} = [draw,thick,-]
\newtheorem{theorem}{Theorem}[section]
\newtheorem{lemma}{Lemma}[section]
\newtheorem{cor}{Corollary}[section]
\newcommand{\cykalgo}{%
%\alglanguage{pascal}
\begin{algorithm}[H]
\caption{CYK algorithm}
\label{alg:cyk}
\hspace*{\algorithmicindent} \textbf{Input:} Grammar $G = (N, \Sigma, P, S)$ in CNF, string $w = a_1 \ldots a_n, \, a_i \in \Sigma$ \\
\hspace*{\algorithmicindent} \textbf{Output:} True if $w \in \mathcal{L}(G)$, False otherwise

\begin{algorithmic}[1]
\If{$w = \epsilon$}
    \State \textbf{return} $S \rightarrow \epsilon \in P$    
\EndIf
\For{$i = 1$ to $n$}
    \State $T_{i-1, i} = \{A|A \rightarrow a_i \in P\}$
\EndFor
\For{$l = 2$ to $n$}
    \For{$j = 0$ to $n - l$}
        \State $j = i + l$
        \State $T_{i, j} = \emptyset$
        \For{$A \rightarrow BC \in P$}
            \For{$k = i + 1$ to $j - 1$}
                \If{$B \in T_{i, k}$ and $C \in T_{k, j}$}
                    \State $T_{i, j} = T_{i, j} \cup \{A\}$
                \EndIf
            \EndFor
        \EndFor
    \EndFor
\EndFor
\State \textbf{return} $S \in T_{0, n}$ 
\end{algorithmic}
\end{algorithm}
}
\newcommand{\cflalgo}{%
\begin{algorithm}[H]
\caption{CFL reachability algorithm}
\label{alg:cfl}
\hspace*{\algorithmicindent} \textbf{Input:} Grammar $G = (N, \Sigma, P, S)$ in CNF; directed labeled graph $D = (V, E, L), L \subset \Sigma$\\
\hspace*{\algorithmicindent} \textbf{Output:} Pairs of $S$-reachable vertices in $D$
\begin{algorithmic}[1]
\State $W = \emptyset$
\If{$S \rightarrow \epsilon \in P$}
    \For{$v \in V$}
        \State Insert edge $v \xrightarrow{S} v$ in $D$ and triple $(v, v, S)$ in $W$
    \EndFor
\EndIf
\For{$A \rightarrow a \in P$}
    \For{$v \xrightarrow{a} u \in E$}
        \State Insert edge $v \xrightarrow{A} u$ in $D$ and triple $(v, u, A)$ in $W$
    \EndFor
\EndFor
\While{$W \neq \emptyset$}
    \State Extract triple $(v, u, A)$ from $W$
    \For{$u \xrightarrow{B} w \in E$}
        \For{$C \rightarrow AB \in P$}
            \If{$v \xrightarrow{C} w \notin D$}
                \State Insert edge $v \xrightarrow{C} w$ in $D$ and triple $(v, w, C)$ in $W$
            \EndIf
        \EndFor
    \EndFor
    \For{$w \xrightarrow{B} v \in E$}
        \For{$C \rightarrow BA \in P$}
            \If{$w \xrightarrow{C} u \notin D$}
                \State Insert edge $w \xrightarrow{C} u$ in $D$ and triple $(w, u, C)$ in $W$
            \EndIf
        \EndFor
    \EndFor
\EndWhile
\State \textbf{return} $\{(u, v)| u \xrightarrow{S} v \in E\}$ 
\end{algorithmic}
\end{algorithm}
}
\let\origenumerate\enumerate
\let\origendenumerate\endenumerate
\renewenvironment{enumerate}{\origenumerate[topsep = 0pt, noitemsep]}{\origendenumerate}
\providecommand{\keywords}[1]
{
  \small	
  \textbf{Keywords:} #1
}
\begin{document}

\title{Fine-grained reductions around CFL-reachability}
\author{Aleksandra  Istomina \\ \small{st062510@student.spbu.ru}
\and
Semyon Grigorev \\ \small{s.v.grigoriev@spbu.ru}
\and
Ekaterina Shemetova \\ \small{katyacyfra@gmail.com} \\
\small{Saint-Petersburg University, Saint Petersburg, Russia} 
}
\date{}
\maketitle              % typeset the header of the contribution
%

%\linenumbers

\vspace{-2em}
% --------------------------------------------------------------------

\begin{abstract}
In this paper we study the fine-grained complexity of the CFL reachability problem. We first present one of the existing algorithms for the problem and an overview of conditional lower bounds based on widely believed hypotheses. We then use the existing reduction techniques to obtain new conditional lower bounds on CFL reachability and related problems. We also devise a faster algorithm for the problem in case of bounded path lengths and a technique that may be useful in finding new conditional lower bounds.

\small{\keywords{CFL reachability, fine-grained complexity, conditional lower bounds}}
\end{abstract}

 \vspace{-1.7em}

\begin{section}{Introduction}
    Context-free language (CFL) reachability is a framework for graph analysis which was introduced by Thomas Reps~\cite{REPS1998701} and Mihalis Yannakakis~\cite{10.1145/298514.298576} and allows one to specify path constraints in terms of context-free languages. CFL reachability finds application in such fields of research as static code analysis (e.g. type-based flow analysis~\cite{10.1145/373243.360208} or points-to analysis~\cite{10.1145/1103845.1094817, 10.1145/1133255.1134027}), graph databases~\cite{10.1145/298514.298576}, bioinformatics~\cite{SubgraphQueriesbyContextfreeGrammars}.
	
	CFL reachability problem is formulated as follows. One is given context-free language $\mathcal{L}$ over finite alphabet $\Sigma$ and a directed graph $D=(V, E, L)$ which edges are labeled with the symbols from $\Sigma$. It is needed to decide whether there exist a path between a pair of vertices in $D$ on which labels on the edges form a word from $\mathcal{L}$. There exist two possible variations of a problem, where the question of defined above reachability is asked for a selected pair $(s, t)$ of vertices --- s-t reachability, or for all pairs of them -- all-pairs reachability. CFL reachability problem complexity is often measured by the number of vertices in the graph: $n = |V|$, while the grammar is supposed to be fixed. There are several cubic~\cite{10.1145/298514.298576, 10.1145/199448.199462} and slightly subcubic~\cite{10.1145/1328438.1328460} (with time $\mathcal{O}(n^{3} / poly(\log n))$\footnote{Every logarithm in the paper is a logarithm with base 2.}) algorithms for all-pairs CFL reachability, so the problem clearly lies in complexity class \textbf{P}. 
	
	Fine-grained complexity theory studies the exact degree of the polynomial expressing the complexity of the problems. In respect to CFL reachability the big open question is whether a truly subcubic (with time $\tilde{\mathcal{O}}(n^{3 - \epsilon})$\footnote{Here and throughout the paper $\tilde{\mathcal{O}}$ notation hides polylogarithmic factors,  e.g. $\tilde{\mathcal{O}}(n^{3 - \epsilon})~=~\mathcal{O}(n^{3 - \epsilon}~\cdot~poly(\log(n))$.}) algorithm exists. Another, equally interesting, question is whether the fine-grained complexity of s-t and all-pairs CFL reachability are the same.
	
	One of the ways to answer the question about an exact complexity of a problem is to find a lower bound. It is hard to create unconditional lower bound at most of the times. At the moment for the most of the problems there are known only trivial lower bounds based on the necessity to read all the input and produce all the output. As an example of a non-trivial lower bound one can see $\mathcal{O}(n \log n)$ lower bound for sorting algorithms based on comparisons, but this case is the exception rather than the rule.
	
	In fine-grained complexity the usual way to answer a question about an exact complexity of a problem is to create a conditional lower bound: the lower bound is true if some widely believed hypothesis is true. The most popular hypotheses in the field are stated for the SAT, 3SUM and APSP problems.
	
	Conditional lower bounds are achieved via a fine-grained reduction which shows how to convert an algorithm for problem $A$ into an algorithm for problem $B$. In this case fast enough algorithm for $A$ can lead to breakthrough algorithm for $B$ which is not believed to exist. 

\end{section}

\section{Problem statement}

The aim of this work is to analyze CFL reachability problem from the fine-grained complexity point of view. In order to achieve the aim, the following objectives were set.

\begin{enumerate}
    \item Study existing algorithms for CFL reachability and explore possible ways of their enhancement.
    \item Study existing conditional lower bounds on the problem and their techniques.
    \item Find new conditional lower bounds on CFL reachability (preferably based on SAT, 3SUM and APSP hypotheses) or present reasons of their non-existence.
\end{enumerate}

The rest of the paper is organised as follows. We introduce the necessary definitions and basic algorithms in the field in Sec.~\ref{sec:prelim}. After that in Sec.~\ref{sec:map} we structure the results in the area and present a map with connections between the problems. In Sec.~\ref{sec:lower} we present some lower bounds and argue on the possibility of some reductions. Sec.~\ref{sec:bounded_paths} is devoted to the faster algorithms for CFL reachability if path lengths are bounded by some value. In Sec.~\ref{sec:line_edges} we discuss a technique that may be useful in creating new reductions and its limitations. 

\begin{section}{Preliminaries}
\label{sec:prelim}

In this section we introduce the necessary definitions, discuss basic algorithms for the CFL reachability and recognition problems and give an example of a fine-grained reduction.

\subsection{Definitions}
\label{subsec:def}

\textbf{CFL Reachability}
	\emph{Context-free grammar} (CFG) is a four-tuple $G=(N, \Sigma, P, S)$, where:
	
	\begin{itemize}
	    \item $N$ is a set of nonterminals,
	    \item $\Sigma$ is a set of terminals,
	    \item $P$ is a set of productions of the followings form: $A \to \alpha$, $\alpha \in (N \cup \Sigma)^*$
	    \item and $S \in N$ is a starting nonterminal.
	\end{itemize}
	
	We call $|G| = |N| + |\Sigma| + |P|$ the \textit{size of the grammar}. 
	
	We write $\alpha A \beta \Rightarrow \alpha \gamma \beta$ where $\alpha, \gamma, \beta \in (\Sigma \cup N)^*$ if $A \rightarrow \gamma$ is a production, we call it a \textit{derivation step}. \textit{Derivation} is a composition of $i \ge 0$ derivation steps and is represented as $\Rightarrow^*$. Denote a context-free language of words derived from the starting nonterminal as $\mathcal{L}(G)$, i.e. $\mathcal{L}(G) = \{w \in \Sigma^*| S \Rightarrow^* w\}$.
	
	Grammar is said to be in \textit{Chomsky-Normal Form (CNF)} if every production is in one of the following forms:
	
	\begin{enumerate}
	    \item $A \rightarrow BC, B, C \in N$
	    \item $A \rightarrow a, a \in \Sigma$
	    \item $S \rightarrow \epsilon$, if $S$ is not used in right-hand side of the productions
	\end{enumerate}
	
	Every CFG can be transformed to CFG in CNF in polynomial time of its size~\cite{CHOMSKY1959137}. Further in some algorithms we will need the input grammar be in CNF and we assume that it can be done in $poly(|G|)$ time as preprocessing without affecting the time complexity of the algorithm.
	
	\emph{CFG recognition} problem is to decide whether $w \in \mathcal{L}(G)$ given a CFG $G$ and a string $w \in \Sigma^*$. This problem is closely related to \emph{CFG parsing problem} where we want a possible derivation sequence, if $w \in \mathcal{L}(G)$. It is known~\cite{10.5555/646233.682379} that CFG recognition is as hard as CFG parsing up to logarithmic factors. 
	
	Let $D = (V, E, L)$ be a directed graph with $n$ vertices which edges are labeled with symbols from $L \subseteq \Sigma$. Denote by $v \xrightarrow{a} u$ and edge from vertex $v$ to vertex $u$ in $D$ labeled with symbol $a$. We call a path from vertex $v$ to vertex $u$ an \textit{$A$-path} if concatenation of labels on that path is a word that can be derived from the nonterminal $A \in N$. We say that $v$ is \textit{$A$-reachable} from $u$ (or that $(u, v)$ is an $A$-reachable pairs of vertices) if there exists an $A$-path from $u$ to $v$. In case if $A = S$ is the starting nonterminal we may also say that $v$ is \textit{$\mathcal{L}$-reachable} from $u$.
	
	\emph{Context-free language (CFL) reachability} problem~\cite{REPS1998701} is to determine if there exists an $S$-path between some vertices. In \emph{single source/single target (s-t)} CFL reachability there is one pair of vertices $s, t \in V(D)$ for which we want to determine the existence of an $S$-path from $s$ to $t$. This variation of a problem is also called \emph{on-demand problem}, because it can be useful in case of many reachability queries. On the other hand in  \emph{all-pairs} CFL reachability we are asked about $S$-path existence between all possible pairs of vertices in a graph, all at once. 
	
	\emph{Dyck-$k$} reachability problem is a CFL reachability problem where $G$ defines a Dyck language on $k$ types of parentheses. The corresponding grammar is $G=(N, \Sigma, P, S)$, where:

	\begin{itemize}
	    \item $N = \{S\}$
	    \item $\Sigma = \{(_i, )_i\}, \forall i = 1, \ldots, k$
	    \item productions rules are $S \rightarrow \epsilon | SS | (_1 S )_1 | \ldots | (_k S )_k$, where $\epsilon$ is the empty string. 
	\end{itemize}
	
	Dyck language is known~\cite{ch-sch} to be the hardest among the context-free languages in a way that every other language can be seen as a combination of some Dyck-$k$ with a regular language. Moreover s-t reachability problem can be transformed~\cite{schepper2018complexity} into Dyck-2 s-t reachability problem in $\mathcal{O}(|E|)$ time if we suppose the size of the given grammar to be fixed (which is a traditional assumption).
	
	The PDA Emptiness problem is closely connected to CFL reachability. We begin with the definitions used in it.
	
	\textit{Pushdown Automaton (PDA)}~\cite{schepper2018complexity} is six $\mathcal{A} = (Q, \Sigma, \Gamma, \delta, q_0,Q_f)$ where:

    \begin{itemize}
        \item $Q$ is a finite set of states
        \item $\Sigma$ is a finite string alphabet
        \item $\Gamma$ is a finite stack alphabet
        \item $\delta \subseteq Q \times (\Sigma \cup \{ \epsilon \}) \times (\Gamma \cup \{ \epsilon \}) \rightarrow Q \times \Gamma^*$ is the finite transition function 
        \item $q_0$ is a start state
        \item $Q_f \subseteq Q$ is a set of final states
    \end{itemize}

    For some string $w$ the PDA reads it starting in state $q_0$ and traverses through the states using transition function. String $w$ is said to be accepted by $A$ if after reading $w$ PDA $\mathcal{A}$ stops in state $q$ that lies in $Q_f$. $\mathcal{L}(\mathcal{A})$ is the language of words accepted by PDA $\mathcal{A}$. \emph{PDA Emptiness} problem is the problem of determining by $\mathcal{A}$ is $\mathcal{L}(\mathcal{A})$ empty.

    Reachability in push-down systems is the problem analogous to CFL reachability problem in model checking. \textit{A push-down system (PDS)~\cite{hansen2021tight}} $\mathcal{P}$ is a triple $(Q, \Gamma, \delta)$, where:

 \begin{itemize}
     \item $Q, |Q| = n$ is a finite set of control states
     \item $\Gamma$ is a finite stack alphabet
     \item $\delta \subseteq Q \times \Gamma \times Q \times \Gamma^*$ is a transition relation.
 \end{itemize}   
 
 \textit{A configuration of $\mathcal{P}$} is a pair $(q, w) \in Q \times \Gamma^*$, where $q$ is a control state and $w$ is the stack word. \textit{Configuration graph} $\mathcal{G}_{\mathcal{P}}$ if a graph on configurations as vertices with edge relation defined as follows: edge $(q_1, w_1) \rightarrow (q_2, w_2)$ exists when $(q_1,\gamma,q_2,w) \in \delta$, where $w \in \Gamma^*$ and $\gamma \in \Gamma \cup \{ \epsilon \}$ are such that either:

\begin{enumerate}
    \item $w_1 = \gamma = \epsilon$ and $w_2 =w$, or
    \item $\gamma \neq \epsilon$ and there exists $w^{\prime} \in \Gamma^*$ such that $w_1 =\gamma w^{\prime}$ and $w_2 = ww^{\prime}$.
\end{enumerate}

\textit{The (state) reachability} problem for a PDS $\mathcal{P}$ asks, given two states $q_s,q_t \in Q$, to decide whether there exists a path $P: (q_s , \epsilon) \rightarrow^* (q_t , \epsilon)$ in $\mathcal{G}_{\mathcal{P}}$, where $\rightarrow^*$ denotes a path consisting of non-negative number of edges between the corresponding vertices. We analogously define \emph{all-pairs PDS reachability}. The problem is called \textit{sparse} if $| \delta | = \mathcal{O}(|Q|)$. We call \emph{the stack depth} the maximum size of the stack that the system $\mathcal{P}$ can have. Below we will always work with systems that have stack depth no more than some $k$, we call $k$ the \emph{upper bound on a stack depth}.
	
\textbf{Fine-grained reductions}
	For problems $P, Q$ and time bounds $t_P, t_Q$, a \emph{fine-grained reduction}~\cite{bringmann2019fine} from $(P, t_P)$ to $(Q, t_Q)$ is an algorithm that, given an instance $I$ of $P$, computes an instance $J$ of $Q$ such that: 
	
	\begin{itemize}
		\item $I$ is a YES-instance of $P$ if and only if $J$ is a YES-instance of $Q$,
		\item for any $\epsilon > 0$ there is a $\delta > 0$ such that $t_Q(|J|)^{1 - \epsilon} = \mathcal{O}(t_P (|I|)^{1 - \delta})$, 
		\item the running time of the reduction is $\mathcal{O}(t_P (|I|)^{1 - \gamma})$ for some $\gamma > 0$.
	\end{itemize}

\subsection{Existing problems and hypotheses}
	
    Here we give definitions for several problems that are connected with CFL recognition and reachability. 
	
\textbf{Popular problems}
	\emph{Boolean satisfiability problem (SAT, $k$-SAT)} is to determine if there exists an interpretation of variables that satisfies a given Boolean formula on $n$ variables written in $k$-conjunctive normal form, $k \ge 2$ ($k$-CNF). The hypotheses about SAT, that we are interested about, are:
	
	\begin{itemize}
	    \item[SETH~\cite{10.1006/jcss.2000.1727}] which proposes that there is no $\epsilon > 0$ such that $k$-SAT can be solved in time $\mathcal{O}(2^{(1 - \epsilon) n})$ for any $k$.
	    \item[NSETH~\cite{10.1145/2840728.2840746}] which proposes that there is no $\epsilon > 0$ such that $k$-SAT can be solved co-nondeterministically in time $\mathcal{O}(2^{(1 - \epsilon) n})$ for any $k$.
	\end{itemize}
	
	The \textit{3SUM} problem~\cite{williams2018some} is as follows: given a set $S$ of $n$ integers from $\{- n^c, \ldots, n^c\}, c > 0$, determine whether there are a triple that sums in zero, i.e. $a, b, c \in S$ such that $a + b + c = 0$. 3SUM hypothesis states that there exists no algorithm that runs in time $\mathcal{O}(n^{2 - \epsilon}), \epsilon > 0$ for this problem.
	
	In \textit{All-Pairs Shortest Path} (APSP) problem~\cite{williams2018some} one is given an $n$ node edge-weighted graph $U = (V, E)$. As in the previous problem all weights are integers from $\{- n^c, \ldots, n^c\}, c > 0$. It is supposed that there are no cycles of total negative weight in the graph. For each pair of vertices $v, u$ one needs to find the weighted distance between them, i.e. the minimum over all paths from $v$ to $u$ of the total weight sum of the edges of the path. Hypothesis states that no algorithm with time $\mathcal{O}(n^{3 - \epsilon}), \epsilon > 0$ can solve APSP problem. 
	
	In \emph{Boolean Matrix Multiplication (BMM)} problem~\cite{williams2018some} it is needed to calculate matrix product of the two given $n \times n$ matrices over boolean semiring. The fastest algorithm currently known~\cite{10.5555/3458064.3458096} solves the task in $\mathcal{O}(n^{2.372..})$ using algebraic methods. The minimal constant $c$ such that the BMM problem can be solved in $\mathcal{O}(n^c)$ is known as $\omega$, matrix multiplication exponent. Combinatorial BMM hypothesis~\cite{10.1145/505241.505242} states that there is no $\mathcal{O}(n^{3 - \epsilon}), \epsilon > 0$ combinatorial algorithm for BMM. Combinatorial algorithms are not explicitly defined but the idea behind the term is that the algorithm does not use algebraic tricks to get smaller degree in polynomial; also combinatorial algorithms are more often used in practice because algebraic tricks lead to big constants behind $\mathcal{O}$-notation. 
	
	\emph{Orthogonal Vectors (OV)} problem decides whether two sets $X, Y$ of $n$ boolean $d~=~\omega(\log n)$-dimensional vectors contain a pair $x \in X, y \in Y$ which dot product equals zero. Conjecture states that OV problem cannot be solved in $\mathcal{O}(n^{2 - \epsilon} \cdot poly(d)), \forall \epsilon > 0$ time. 
	
	Given an undirected graph $U$ on $n$ vertices the \emph{$k$-Clique} problem~\cite{abboud2018if} seeks the clique on $k$ vertices in $U$.  If $0 \leq F \leq \omega$ and $0 \leq C \leq 3$ are the smallest numbers such that $k$-Clique can be solved combinatorially in $\mathcal{O}(n^{\frac{Ck}{3}})$ time and in $\mathcal{O}(n^{\frac{Fk}{3}})$ time by any algorithm, for any constant $k \geq 1$, a conjecture~\cite{williams2018some} is that $C = 3$ and $F = \omega$.
	
	\emph{Triangle detection} problem~\cite{hansen2021tight} is a case of $k$-Clique problem with $k=3$, it asks whether an undirected graph $G = (V , E)$ contains three nodes $i, j, k \in V$ with $(i,j),(j,k),(k,i) \in E$. It can be solved in $\mathcal{O}(n^3)$ time by combinatorial algorithms, and in $O(n^{\omega})$ time in general. 
	
	\emph{Language Editing Distance (LED)} problem is about determining the minimum number of corrections: insertions, deletions, substitutions, that are needed to transform the given string $w \in \Sigma^*$ into the string $w'$ that lies in the given context-free language $\mathcal{L}(G)$ over the same alphabet $\Sigma$. This problem is known~\cite{10.1137/0201022} to be solvable in $\mathcal{O}(n^3)$ time, where $n$ is the length of the string and the grammar is considered to be fixed.
	
\textbf{Dynamic problems}
	The following problems are connecting CFL reachability with dynamic problems, where for one input there can be multiple queries or updates. 
	
	In the \emph{Online boolean Matrix-Vector multiplication (OMV)} problem~\cite{10.1145/2746539.2746609} we are given an $n \times n$ boolean matrix $M$, we receive $n$ boolean vectors $v_1, \ldots, v_n$ one at a time, and are required to output $Mv_i$ (over the boolean semiring) before seeing the vector $v_{i+1}$, for all $i$. It is conjectured that there is no algorithm with total time $\mathcal{O}(n^{3-\epsilon})$ for this problem, even with polynomial time to preprocess $M$.
	
	The incremental \emph{Dynamic Transitive Closure (DTC)}~\cite{Hanauer2020FasterFD} problem asks to maintain reachability information in a directed graph $D = (V, E)$ between arbitrary pairs of vertices under insertions of edges. Conditional lower bound on DTC follows from OMV hypothesis and reduction from it~\cite{10.1145/2746539.2746609}: there is no algorithm with total update time $\mathcal{O}((mn)^{1 - \epsilon})$ ($n = |V|, m = |E|$) even with $poly(n)$ time preprocessing of the input graph and $m^{\delta - \epsilon}$ time per query for any $\delta \in (0, 1/2]$ such that $m = \Theta(n^{1/(1-\delta)})$ under OMV hypothesis.
	
\textbf{Problems with lower bound based on SAT, 3SUM or APSP}
	Below are listed two problems that have conditional lower bounds under SAT, 3SUM or APSP hypotheses and thus reductions from them to CFL reachability problem will create conditional lower bounds based on widely believed conjecture.
	
	Our first problem of interest is \textit{AE-Mono$\Delta$} problem where we are presented with a graph $G = (V, E)$ on $n$ vertices in which every edge $e_{uv}$ is colored with color $c(e_{uv})$, where $c$ is color function. AE-Mono$\Delta$ problem asks for every edge $e_{uv}$ whether there exists a monochromatic triangle that includes this edge, i.e. there is vertex $w$ that $e_{uw}, e_{vw} \in E(G)$ and $c(e_{uv}) = c(e_{uw}) = c(e_{vw})$. This problem is known~\cite{10.1007/11786986_24, 10.1145/1798596.1798597} to be solvable in $n^{(3 + \omega) / 2}$ time and has~\cite{https://doi.org/10.4230/lipics.itcs.2020.53} $n^{2.5}$ conditional lower bound under 3SUM and APSP hypotheses (which is tight if $\omega = 2$).
	
	Second problem is really a pair of problems, both of them have~\cite{10.1145/2746539.2746594} $n^{3}$ conditional lower bound under SETH, 3SUM and APSP hypotheses, where $n$ is the number of vertices in the given graph. In the \textit{Triangle collection} problem one is given a node-colored graph, and the question is whether for all triples of distinct colors there exists a triangle in a graph which vertices have these colors. \textit{$\Delta$ Matching Triangles} problem have the same input but in this problem one is asked about existence of a triple of colors for which there are at least $\Delta$ triangles in a graph which vertices are colored in this triple.

\subsection{Basic algorithms and examples}
\label{subsec:basic}

In this section we present the basic algorithms for the defined CFL problems and a classic example of the fine-grained reduction.

\textbf{CFG recognition}

Cocke–Younger–Kasami (CYK) algorithm~\cite{10.5555/524279} is a cubic algorithm for the CFL recognition problem. Recall that in CFL recognition problem it is needed to decide whether the given string $w = a_1 \ldots a_n$ lies in the given context-free language $\mathcal{L}(G)$. CYK algorithm for each substring $a_{i + 1} \ldots a_j, 0 \le i < j \le n$ of $w$ builds set $T_{i, j}$:

$$T_{i, j} = \{A \in N|A \Rightarrow^* a_{i + 1} \ldots a_j\}$$

After all the sets are built what is left is only to check if $S \in T_{0, n}$, which by construction of the sets is equivalent to $S \Rightarrow^* w$  .

CYK algorithm needs as an input a string and a grammar $G$ in CNF. Recall from Sec.~\ref{subsec:def} that every grammar can be translated into CNF in $poly(|G|)$ time. Algorithm is based on the idea that if $A \Rightarrow^* w'$ then there exists a production $A \rightarrow BC$ and a partition of the string $w' = w'' \cdot w'''$ such that $B \Rightarrow^* w''$ and $C \Rightarrow^* w'''$. For the pseudocode see Algorithm~\ref{alg:cyk}.

\cykalgo

Time complexity of the algorithm is $\mathcal{O}(|P| + n \cdot |P| + n^3 \cdot |P| \cdot |N|^2)$ which equals to $\mathcal{O}(n^3)$ as we suppose the grammar fixed.

\textbf{CFL reachability}

Classic algorithm for (all-pairs) CFL reachability problem~\cite{10.1145/258994.259006, 10.5555/1196416} is a generalisation of the CYK algorithm. It is needed to find all $S$-reachable pairs of vertices in the given graph $D=(V, E, L), \, |V|=n$ for the given grammar $G=(N, \Sigma, P, S)$. The idea of the algorithm is based on the transitive closure, it iteratively inserts edges $v \xrightarrow{A} u$ in $D$ for all $A$-reachable pairs of vertices $(v, u)$ for every nonterminal $A \in N$. For the pseudocode see Algorithm~\ref{alg:cfl}.

\cflalgo

The algorithm correctness proof is based on the induction over the derivation length of the word on $S$-path between each pair of vertices. For establishing the time complexity of the algorithm it is enough to notice the following. Between every two vertices it can be no more than $|\Sigma| + |N|$ labeled edges, every triple in $W$ corresponds to the inserted edge and is added and deleted from $W$ at most once. The time complexity of the algorithm is $\mathcal{O}(|P| \cdot n + |P| \cdot n^2 + (|\Sigma| + |N|) \cdot n^2 \cdot n \cdot |P|) = \mathcal{O}(n^3)$ as in processing triple $(v, u, A)$ we iterate only over neighbours of $v$ and $u$.

\textbf{SAT to OV fine-grained reduction}~\cite{10.1016/j.tcs.2005.09.023}

Below we build a (SAT, $2^n$)$\rightarrow$(OV, $n^2$) fine-grained reduction. This is one of the basic and easy to understand reductions, moreover it implies that the OV conjecture is weaker than SETH, and thus harder to refute. This makes OV problem a good candidate to make a reduction from.

\underline{Reduction} Let $F$ be a SAT formula in $k$-conjunctive normal form with $n$ variables $x_1, \ldots, x_n$ and $m$ clauses $c_1, \ldots, c_m$. Without loss of generality $n \, \vdots \, 2$, otherwise we can add one more variable $x_{n+1}$ and a clause $c_{m+1} = (x_{n+1} \vee \ldots \vee x_{n+1})$, this operation does not affect the satisfiability of $F$. We build two sets of vectors $X, Y$ of the equal size $|X|=|Y|=2^{\frac{n}{2}}$, where each vector has dimension $m$. Each vector $x^{\alpha}$ from $X$ corresponds to the assignment $\alpha$ of values of $x_1, \ldots, x_{\frac{n}{2}}$. Then 

$$x^{\alpha}_i = 
\begin{cases}
1 &\text{if $x^{\alpha}$ satisfies clause $c_i$}\\
0 &\text{otherwise}
\end{cases}$$

Each vector $y^{\beta}$ from $Y$ corresponds to the assignment $\beta$ of values of $x_{\frac{n}{2} + 1}, \ldots, x_n$. Values of $y^{\beta}$ are defined in the same way as for $x^{\alpha}$:

$$y^{\beta}_i = 
\begin{cases}
1 &\text{if $y^{\beta}$ satisfies clause $c_i$}\\
0 &\text{otherwise}
\end{cases}$$

\underline{Correctness} Correctness of the reduction follows from the series of equivalent statements below. 

$F$ is satisfiable if and only if there exists an assignment $\gamma$ which satisfied every clause $c_i, i \in \{1, \ldots, m\}$. Let $\gamma = (\alpha', \beta')$, where $\alpha'$ assigns the values of $x_1, \ldots, x_{\frac{n}{2}}$, $\beta'$ assigns the values of $x_{\frac{n}{2} + 1}, \ldots, x_n$. For each clause $c_i$ it is true that $c_i$ is satisfied by $\gamma$ if and only if it is satisfied by at least one of $\alpha'$ or $\beta'$. The last observation is equivalent to the fact that $\forall i \in \{1, \ldots, m\} \; x^{\alpha'}_i \cdot y^{\beta'}_i = 0$, which exactly means orthogonality of vectors $x^{\alpha'}$ and $y^{\beta'}$.

The following theorem concludes the proof of correctness.

\begin{theorem}
SETH implies OV hypothesis.
\end{theorem}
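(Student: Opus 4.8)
The construction of $X$ and $Y$ above, together with the displayed chain of equivalences --- $F$ is satisfiable iff some $x^{\alpha}\in X$ is orthogonal to some $y^{\beta}\in Y$ --- already does the combinatorial work; my plan is only to package it as a fine-grained reduction in the sense of Sec.~\ref{subsec:def}, with the time bounds $t_{\mathrm{SAT}}(n)=2^{n}$ (measured in the number of variables $n$) and $t_{\mathrm{OV}}(N)=N^{2}$ (measured in the number of vectors $N$), and then to read off the implication by contraposition.

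First I would pin down the parameters of the instance $J$ obtained from a $k$-SAT formula $F$ with $n$ variables (even, as arranged above) and $m$ clauses: the reduction outputs $X,Y$ with $|X|=|Y|=N:=2^{n/2}$ and vector dimension $d:=m$. For a fixed $k$ one may delete duplicate clauses, so $m\le(2n)^{k}=\mathrm{poly}(n)$; each of the $N\cdot m$ coordinates is filled in $O(k)$ time by checking whether a partial assignment satisfies a clause, so $J$ is written down in time $O(N\cdot m\cdot k)=O\!\left(2^{n/2}\,\mathrm{poly}(n)\right)=O\!\left(2^{(1-\gamma)n}\right)$ for, say, $\gamma=1/3$ --- this is the third requirement in the definition of a fine-grained reduction. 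To meet the proviso $d=\omega(\log N)$ in the statement of the OV hypothesis I would append to $F$ a batch of tautological clauses such as $x_{1}\vee\neg x_{1}$: they are satisfied by every assignment, hence change neither satisfiability nor, since they only add all-ones coordinates to every vector, orthogonality, while allowing $d$ to be padded to any desired $\mathrm{poly}(n)=\omega(\log N)$. The reason for splitting the assignment in half is exactly that then $N^{2}=2^{n}=t_{\mathrm{SAT}}(n)$, so $t_{\mathrm{OV}}(N)^{1-\epsilon}=2^{(1-\epsilon)n}=t_{\mathrm{SAT}}(n)^{1-\epsilon}$ and the second requirement holds with $\delta=\epsilon$; the first requirement is the correctness proven above.

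To finish I would argue by contradiction. Suppose the OV hypothesis fails, so that for some $\epsilon>0$, which we may take to be at most $1$, there is an algorithm solving OV in time $O(N^{2-\epsilon}\mathrm{poly}(d))$. Running the reduction and then this algorithm on $J$ decides a $k$-SAT instance on $n$ variables in time $O\!\left(2^{(1-\gamma)n}\right)+O\!\left((2^{n/2})^{2-\epsilon}\,\mathrm{poly}(m)\right)=O\!\left(2^{(1-\epsilon/2)n}\,\mathrm{poly}(n)\right)=O\!\left(2^{(1-\epsilon/3)n}\right)$. Hence for every fixed $k$, $k$-SAT is solvable in time $O(2^{(1-\epsilon/3)n})$ with one and the same constant $\epsilon/3>0$ independent of $k$, which contradicts SETH; therefore SETH implies the OV hypothesis.

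The main obstacle is not conceptual but a matter of quantitative bookkeeping. The delicate point is ensuring that the exponent saving $\delta$ inherited by $k$-SAT is uniform in $k$: this holds only because $\delta$ is produced from the OV algorithm's $\epsilon$ (which does not depend on $k$), while the dimension $m$ enters the running time merely through a $\mathrm{poly}(n)$ factor for each fixed $k$. Closely related is the need to keep the two complexity parameters straight, vectors for OV versus variables for SAT, which is precisely what forces halving the assignment so that both the reduction time ($\approx 2^{n/2}$) and $N^{2}=2^{n}$ land on the right side of $2^{n}$; a reduction producing $2^{n}$ vectors would be useless. The dimension padding needed for $d=\omega(\log N)$ is a minor wrinkle, dealt with by the tautological clauses.
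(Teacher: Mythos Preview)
Your proposal is correct and follows essentially the same contrapositive argument as the paper: build the OV instance with $N=2^{n/2}$ vectors of dimension $m$, bound the construction time by $O(2^{n/2}\cdot\mathrm{poly}(n))$, and observe that an $O(N^{2-\epsilon}\mathrm{poly}(d))$ OV algorithm yields an $O(2^{(1-\epsilon')n})$ algorithm for $k$-SAT with $\epsilon'$ independent of $k$. You are in fact more careful than the paper on two points it glosses over: the uniformity of the savings in $k$, and the padding by tautological clauses to guarantee $d=\omega(\log N)$.
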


\begin{proof}
Suppose OV problem can be solved in time $\mathcal{O}(n^{2 - \epsilon} \cdot poly(d))$ for some $\epsilon > 0$. 

Let $F$ be arbitrary $k$-CNF formula. From $F$ we create an OV instance as described above with $|X|=|Y|=2^{\frac{n}{2}}$ vectors in each set with dimension $m \le n^k$. Thus OV instance can be created in time $\mathcal{O}(2^{\frac{n}{2}} \cdot n^k)$. After that we solve OV problem with the existing algorithm in $\mathcal{O}((2^{\frac{n}{2}})^{2 - \epsilon} \cdot m^{\mathcal{O}(1)}) = \mathcal{O}(2^{n - \frac{\epsilon}{2}} \cdot n^{\mathcal{O}(k)}) = \mathcal{O}(2^{n - \epsilon'})$ time for arbitrary $k$. The answer to the OV instance is the answer to the original SAT instance. That implies that $k$-SAT problem can be solved in $\mathcal{O}(2^{\frac{n}{2}} \cdot n^k + 2^{n - \epsilon'}) = \mathcal{O}(2^{n - \epsilon''}), \epsilon''>0$ time for any $k$ which contradicts SETH. 
\end{proof}

\section{Overview of existing reductions}
\label{sec:map}
 
 \begin{figure}[!htp]
	
		\begin{center}  
			\includegraphics[scale = 0.9]{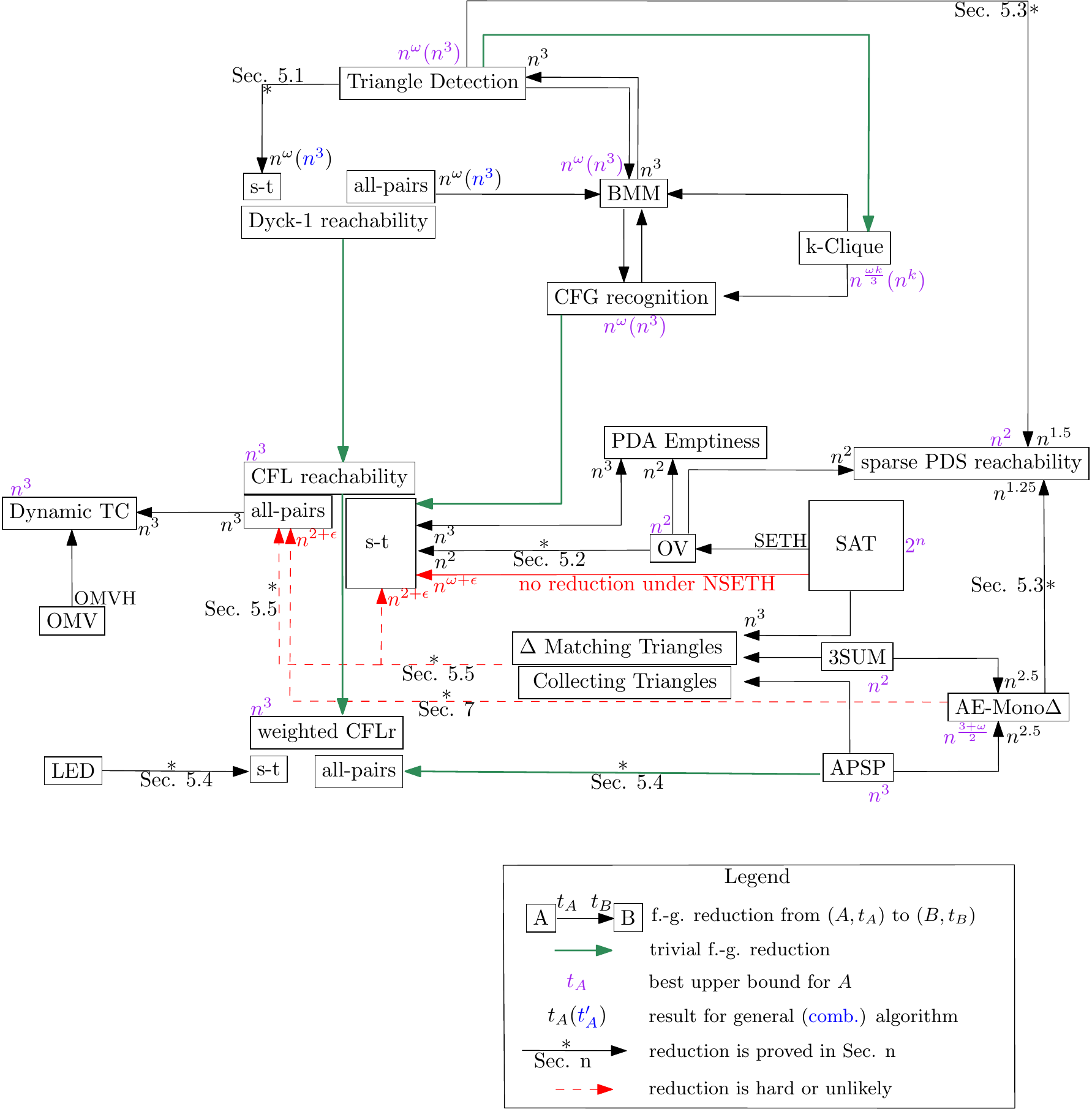}
		\end{center}
		
		\caption{Existing reductions concerning CFL reachability and CFG recognition. Arrow $a \rightarrow b$ represents a fine-grained reduction from $a$ to $b$. Note that asymptotics in upper bounds are written up to polylogarithmic (or polynomial in case of SAT) factors. When the asymptotic is not written on some side of the arrow it is equal to the upper bound of the problem (written in violet).}
		\label{fig:map}
		
	\end{figure}
	
	Before looking onto the reductions we need to mention the interconvertibility of CFL reachability problems and a class of set-constraint problems~\cite{10.1145/258994.259006} as this result allows us to reformulate our problem if we wish so.
	
	The CFL reachability problem has been shown to be complete for the class of two-way nondeterministic pushdown automata (2NPDA)~\cite{Rytter2005FastRO, 10.5555/788019.788876}. It means that subcubic algorithm for CFL reachability would lead to subcubic algorithms for the whole 2NPDA class and cubic upper bound has not been improved since the discovery of the class in 1968. Now we are ready to proceed to the fine-grained reductions.
	
	First of all we want to separate as subcase of CFL reachability --- Dyck-1 reachability problem. This subcase, when the grammar $G$ is fixed as the grammar of Dyck language on one type of parentheses, is known to be solvable in truly subcubic time and thus could be easier than the general CFL reachability problem.
	
	Concerning the upper bounds on this subcase one of the important reductions is a reduction from  all-pairs Dyck-1 reachability problem to BMM problem. It was firstly proved by Bradford~\cite{bradford2017efficient} via algebraic matrix encoding and then combinatorially by Mathiasen and Pavlogiannis~\cite{10.1145/3434315} by combining Dyck-1 path from bell-shaped paths. Via this reduction all-pairs (and s-t) Dyck-1 reachability can be solved in $n^3$ time by combinatorial algorithm and in $n^{\omega}$ time by general one. For this and the following reductions see Fig.~\ref{fig:map}.
	
	Through the subcubic combinatorial reductions between BMM and Triangle detection~\cite{Williams2009TriangleDV} and reduction from the latter to s-t Dyck-1 reachability (see Sec.~\ref{subsec:lower_from_tr} and~\cite{hansen2021tight}) we get that the above cubic upper bound is tight for combinatorial algorithms under BMM hypothesis. 
	%This result can be further translated by reduction from $k$-Clique problem to BMM~\cite{10.1016/j.tcs.2004.05.009} and thus we get: the upper bound for s-t Dyck-1 reachability is tight under combinatorial $k$-Clique hypothesis. 
	That implies that in the world of combinatorial algorithms Dyck-1 case with high probability is not easier than the general one. Next we discuss the results for CFG recognition and general CFL reachability.
	
	$\mathcal{O}(n^{\omega})$ upper bound on CFG recognition was shown by Valiant in his famous paper~\cite{valiant1975general}. He reduced finding sets from CYK algorithm to several calls of BMM. The other way reduction created by Lee~\cite{10.1145/505241.505242} was combinatorial and thus allows us to get BMM based lower bounds on CFG recognition problem in both combinatorial and general case.  Combining the latter reduction with trivial reduction from CFG recognition to s-t CFL reachability we get another cubic combinatorial lower bound on our problem under BMM hypothesis, but this time we also achieve $\mathcal{O}(n^{\omega})$ conditional lower bound in general case.
	
    Another $\mathcal{O}(n^{\omega})$ lower bound on s-t CFL reachability was achieved via combination of reduction of Abboud et al.~\cite{abboud2018if} from $k$-Clique to CFG recognition and the reduction from CFG recognition to CFL reachability. This lower bound is based on $k$-Clique hypothesis. There are some reductions that are unlikely to help improve upper or lower bounds on the problem but they give insight on the essence of CFL reachability. 
	
	In the recent paper of Shemetova et al.~\cite{shemetova2021algorithm} the reduction from all-pairs CFL reachability to incremental DTC have been proven. Still this reduction cannot give truly subcubic algorithm for CFL reachability without refuting OMV conjecture~\cite{8948597, 10.1145/2746539.2746609}.
	
	Subcubic equivalence between s-t CFL reachability and PDA Emptiness problem was proven by Schepper~\cite{schepper2018complexity}. This equivalence allows us to reformulate the problem when we are proving cubic conditional lower bound on CFL reachability.
	
	Natural question that rises considering the lower bounds on CFL reachability is whether there exist one based on 3SUM, APSP or SAT hypotheses. Partially this question has been answered. Recently it was discovered by Chistikov et al.~\cite{10.1145/3498702} that there exist subcubic certificates for s-t CFL reachability (for existence and non-existence of the valid paths). From this fact it follows that there are no reductions under NSETH from SAT problem (SETH) to CFL reachability problem that give lower bound stronger than $\mathcal{O}(n^{\omega})$.
	
\end{section}

\section{Lower bounds}
\label{sec:lower}

This section is devoted to fine-grained reductions that produce lower bounds on CFL reachability problem or the problems connected to it.

\subsection{Lower bound based on BMM hypothesis}
\label{subsec:lower_from_tr}

\begin{figure}[!htp]
		
	\begin{center}  
		\includegraphics[scale=0.85]{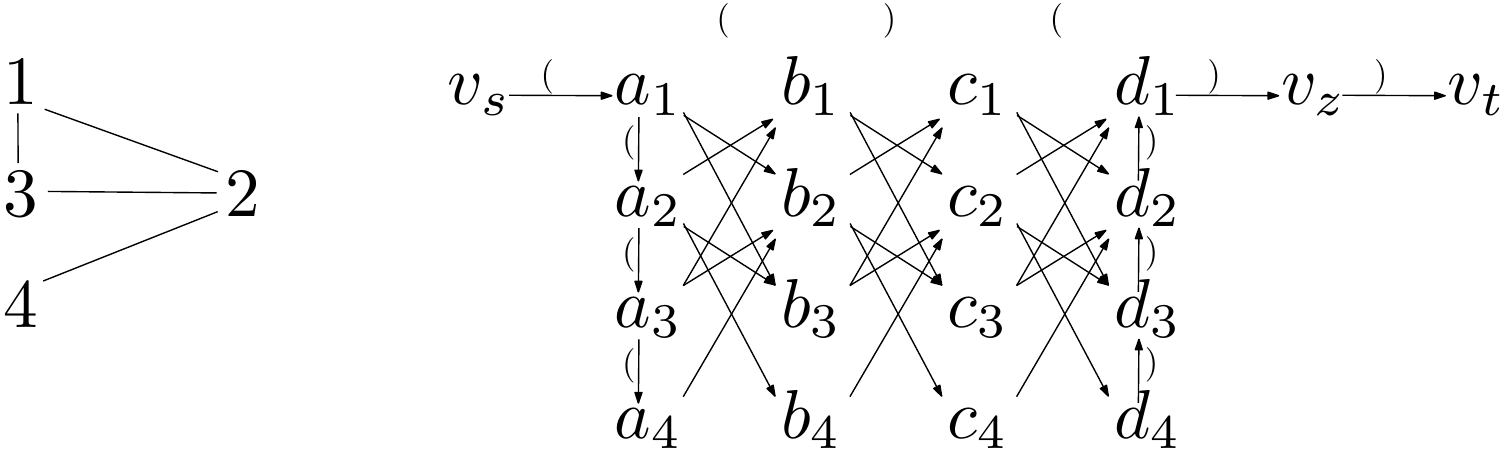}
	\end{center}
	
	\caption{The given graph in the triangle detection problem on the left and the corresponding graph with $(v_s, v_t)$-CFL Reachability problem on the right. Every vertex $i$ corresponds to vertices $a_i, b_i, c_i, d_i$; edges $(a_i, b_j), (b_i, c_j), (c_i, d_j)$ are marked with $(, ), ($ respectively.}
	\label{fig:triangle_detection_example}
	
\end{figure}

There exists a $\mathcal{O}(n^3)$ conditional combinatorial lower bound on s-t Dyck-1 reachability problem following from the fine-grained reduction from Triangle Detection problem. The reduction and the proof of its correctness is the same as the reduction from Triangle Detection to One-counter Net reachability problem (see Theorem 2 in ~\cite{hansen2021tight}). One-counter Net is a model in program verification isomorphic to Dyck-1 reachability. As formally the models are different, below we reformulate the result from~\cite{hansen2021tight} in terms of Dyck-1 reachability.

\underline{Reduction} Let $U = (V, E), V = \{x_1, \ldots, x_n\}$ be a graph in Triangle Detection problem. We build directed labeled graph $D = (V', E', L)$ for the Dyck-1 reachability problem as follows:

\begin{itemize}
    \item $V' = \{v_s\} \cup \{a_1, \ldots, a_n\} \cup \{b_1, \ldots, b_n\} \cup \{c_1, \ldots, c_n\} \cup \{d_1, \ldots, d_n\} \cup \{v_z\} \cup \{v_t\}$
    \item $(v_s, v_t)$ is a pair of vertices, for which we want to check whether it is Dyck-1-reachable
    \item $E' = \{(v_s, '(', a_1)\} \cup 
    \{(a_i, '(', a_{i+1})|i \in \{1, \ldots n - 1\}\} \cup$ 
    
    $\{(a_i, '(', b_j)|(x_i, x_j) \in E\} \cup
    \{(b_i, ')', c_j)|(x_i, x_j) \in E\} \cup$
    
    $\{(c_i, '(', d_j)|(x_i, x_j) \in E\} \cup
    \{(d_i, ')', d_{i-1})|i \in \{2, \ldots n\}\} \cup$
    
    $\{(d_1, ')', v_z)\} \cup \{(v_z, ')', v_t)\}$, where $(v, l, u)$ denotes a directed edge from $v$ to $u$ with label $l$ on it. Note that each undirected edge $(x_i, x_j)$ transforms into 6 directed edges, 2 between $a$- and $b$-vertices, $b$- and $c$-vertices, $c$- and $d$-vertices
    
    \item $L = \{'(', ')'\}$ that is exactly the Dyck-1 language alphabet 
\end{itemize}

For visual representation of the reduction see Fig.~\ref{fig:triangle_detection_example}.

\underline{Correctness} Let us first estimate the size of $D$:

\begin{itemize}
    \item $|V'| = 3 + 4 \cdot |V| = \mathcal{O}(n)$
    \item $|E'| = 3 + 2 \cdot (n - 1) + 6 \cdot |E| = \mathcal{O}(|E|) = \mathcal{O}(n^2)$
\end{itemize}

Thus the graph $D$ can be constructed in $\mathcal{O}(n^2)$ time.

\begin{lemma}
$v_t$ is Dyck-1-reachable from $v_s$ in $D$ if and only if $U$ contains a triangle.
\end{lemma}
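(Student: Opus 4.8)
The plan is to prove both implications by first determining the shape of every path from $v_s$ to $v_t$ in $D$ and then reading off the running bracket balance along it. Observe that $D$ is acyclic — a topological order is $v_s$, then $a_1,\dots,a_n$, then $b_1,\dots,b_n$, then $c_1,\dots,c_n$, then $d_n,d_{n-1},\dots,d_1$, then $v_z$, then $v_t$ — so every path is simple. The key structural claim is that any path $P$ from $v_s$ to $v_t$ has the form
$$v_s \to a_1 \to a_2 \to \cdots \to a_i \to b_j \to c_k \to d_l \to d_{l-1} \to \cdots \to d_1 \to v_z \to v_t$$
for some $1 \le i,j,k,l \le n$, and conversely such a sequence is a path in $D$ exactly when $(x_i,x_j),(x_j,x_k),(x_k,x_l) \in E$. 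This follows by inspecting the out-edges: the only edge out of $v_s$ enters $a_1$; from $a_m$ one may only advance to $a_{m+1}$ or jump to some $b_j$; the only edges out of a $b$-vertex enter $c$-vertices and out of a $c$-vertex enter $d$-vertices; and from $d_m$ with $m \ge 2$ the only move is to $d_{m-1}$, thereafter to $v_z$ and then to $v_t$.

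Next I would compute the word spelled by $P$. Before the first $)$ there are $i+1$ open brackets (one from $v_s \to a_1$, then $i-1$ along the $a$-chain, then one from $a_i \to b_j$); then $b_j \to c_k$ contributes a single $)$, $c_k \to d_l$ a single $($, and the tail contributes $l+1$ close brackets ($l-1$ steps down the $d$-chain, then $d_1 \to v_z$ and $v_z \to v_t$). So the word is $(^{\,i+1}\,)\,(\,)^{\,l+1}$, with $i+2$ opening and $l+2$ closing brackets. Such a string lies in the Dyck-$1$ language iff it is balanced, i.e. iff $i = l$; and when $i = l$ the Dyck prefix condition holds automatically, since the running balance of $(^{\,i+1}\,)\,(\,)^{\,i+1}$ goes $1,2,\dots,i{+}1,\,i,\,i{+}1,\,i,\dots,1,0$, never dropping below $0$. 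Combined with the structural claim, this shows that a Dyck-$1$ path from $v_s$ to $v_t$ exists if and only if there are indices $i,j,k$ with $(x_i,x_j),(x_j,x_k),(x_k,x_i) \in E$.

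It then remains to match this with the triangle condition. If $U$ contains a triangle on vertices $x_i,x_j,x_k$, all three of these edges are present, so the path above with $l := i$ is a genuine Dyck-$1$ path and $v_t$ is Dyck-$1$-reachable from $v_s$. Conversely, from a Dyck-$1$ path I extract indices $i,j,k$ (with $l = i$) such that $(x_i,x_j),(x_j,x_k),(x_k,x_i) \in E$; since $U$ has no self-loops these three vertices are pairwise distinct and hence form a triangle. I expect the only place needing care to be the bracket bookkeeping — ensuring that the lone $)($ in the middle of the word cannot break the Dyck prefix condition, and that no degenerate path (one stalling at a $b$- or $c$-vertex with no outgoing edge, or whose brackets never balance) is missed — but both are dispatched by the rigidity of the path shape together with the equality $i = l$.
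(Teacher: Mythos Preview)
Your proof is correct and follows essentially the same approach as the paper: both argue that $D$ is acyclic, pin down the unique shape of any $v_s$--$v_t$ path, compute the resulting bracket word, and observe it is Dyck-$1$ precisely when the $a$-index equals the $d$-index. Your version is in fact a bit more careful than the paper's (you explicitly verify the prefix condition and note that simplicity of $U$ forces $i,j,k$ to be distinct), but the core argument is identical.
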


\begin{proof}
We first prove the "if"-part of the claim. 

Suppose vertices $x_i, x_j, x_k$ form a triangle in $U$. Then in $D$ there exists a path 

$$v_s \rightarrow a_1 \rightarrow \ldots \rightarrow a_i \rightarrow b_j \rightarrow c_k \rightarrow d_i \rightarrow \ldots \rightarrow d_1 \rightarrow v_z \rightarrow v_t$$

The corresponding string of concatenated labels is 

$$( \overbrace{( \ldots (}^{\text{$i - 1$ times}} ()( \overbrace{) \ldots )}^{\text{$i - 1$ times}} )) = \overbrace{( \ldots (}^{\text{$i$ times}} ()() \overbrace{) \ldots )}^{\text{$i$ times}}$$

which is clearly a Dyck-1 word.

Now we proceed to the "only if"-part of the claim.

Suppose there exists a Dyck-1 path from $v_s$ to $v_t$. Note that $D$ is acyclic. Then from the construction of a graph we can see that the path should look like:

$$v_s \rightarrow a_1 \rightarrow \ldots \rightarrow a_l \rightarrow b_j \rightarrow c_k \rightarrow d_i \rightarrow \ldots \rightarrow d_1 \rightarrow v_z \rightarrow v_t$$

The corresponding word $w$, that we know is Dyck-1, is: 

$$( \overbrace{( \ldots (}^{\text{$l - 1$ times}} ()( \overbrace{) \ldots )}^{\text{$i - 1$ times}} )) = \overbrace{( \ldots (}^{\text{$l$ times}} ()() \overbrace{) \ldots )}^{\text{$i$ times}}$$ 

Then $w$ is Dyck-1 only if $l = i$. In that case by the construction of $E'$ we know, that in $U$ existed edges $(x_i, x_j), (x_j, x_k), (x_k, x_i)$. It implies that $x_i, x_j, x_k$ is a triangle in $U$.
\end{proof}

\begin{theorem}
\label{th:tr_to_dyck}
There exists a (Triangle Detection, $n^3$) to (Dyck-1 reachability, $n^3$) combinatorial fine-grained reduction. 
\end{theorem}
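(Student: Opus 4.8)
The plan is simply to check that the construction $U \mapsto D$ presented above satisfies, for $P = $ Triangle Detection with $t_P(n) = n^3$ and $Q = $ Dyck-1 reachability with $t_Q(n) = n^3$, the three defining conditions of a fine-grained reduction from Section~\ref{subsec:def}, and additionally that it is combinatorial.

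The first condition --- $U$ is a YES-instance if and only if $D$ is a YES-instance --- is precisely the statement of the preceding lemma, so no further work is needed there. For the second condition, recall that we already bounded $|V'| = 3 + 4n = \mathcal{O}(n)$, so if $D$ has $N$ vertices then $N = \mathcal{O}(n)$; consequently, for any $\epsilon > 0$ we have $t_Q(N)^{1-\epsilon} = N^{3(1-\epsilon)} = \mathcal{O}(n^{3(1-\epsilon)})$, and choosing $\delta = \epsilon$ gives $t_Q(N)^{1-\epsilon} = \mathcal{O}(n^{3(1-\delta)}) = \mathcal{O}(t_P(n)^{1-\delta})$, as required. For the third condition, the reduction writes down $\mathcal{O}(n)$ vertices and $\mathcal{O}(n^2)$ edges, each in constant time after a single scan of $V$ and $E$, and hence runs in time $\mathcal{O}(n^2) = \mathcal{O}(n^{3-1}) = \mathcal{O}(t_P(n)^{1-\gamma})$ with $\gamma = 1/3 > 0$.

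Finally, the map only copies and relabels vertices and edges, with no algebraic computation, so composing it with any combinatorial algorithm for Dyck-1 reachability yields a combinatorial algorithm for Triangle Detection; the reduction is therefore combinatorial. I do not expect a genuine obstacle: the statement is essentially a bookkeeping consequence of the lemma together with the size estimates. The one point to keep in mind is that the time bounds $n^3$ are stated in terms of the number of vertices of each graph, so the quadratic edge count of $D$ is harmless --- it enters only the running time of the reduction itself, which the definition permits to be as large as $t_P(n)^{1-\gamma}$ for any fixed $\gamma > 0$.
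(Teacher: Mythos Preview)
Your proposal is correct and follows essentially the same approach as the paper: both rely on the preceding lemma for correctness and on the $\mathcal{O}(n)$ vertex count and $\mathcal{O}(n^2)$ construction time of $D$ for the complexity bounds. Your version is simply more explicit in verifying each clause of the fine-grained reduction definition, whereas the paper compresses this into the single observation that an $\mathcal{O}(n^{3-\epsilon})$ Dyck-1 algorithm plus the $\mathcal{O}(n^2)$ construction yields an $\mathcal{O}(n^{3-\epsilon'})$ Triangle Detection algorithm.
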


\begin{proof}
We can construct $D$ in time $\mathcal{O}(n^2)$. If Dyck-1 reachability problem is solvable in $\mathcal{O}(n^{3 - \epsilon})$ time, then Triangle Detection is solvable in $\mathcal{O}(n^{3 - \epsilon} + n^2) =\mathcal{O}(n^{3 - \epsilon'})$ time.
\end{proof}

For the general algorithms we can state a similar theorem. It is proved in an analogous way.

\begin{theorem}
There exists a (Triangle Detection, $n^{\omega}$) to (Dyck-1 reachability, $n^{\omega}$) fine-grained reduction.
\end{theorem}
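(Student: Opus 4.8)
The plan is to reuse the same construction and correctness argument as in Theorem~\ref{th:tr_to_dyck}, observing that nothing in the reduction depends on which complexity regime we work in; only the bookkeeping of the time bounds changes. First I would recall the graph $D = (V', E', L)$ built from the Triangle Detection instance $U = (V, E)$ exactly as above, together with the designated pair $(v_s, v_t)$. The size estimates $|V'| = \mathcal{O}(n)$ and $|E'| = \mathcal{O}(n^2)$ are unchanged, so $D$ is still constructible in $\mathcal{O}(n^2)$ time, and by the Lemma, $v_t$ is Dyck-1-reachable from $v_s$ in $D$ if and only if $U$ contains a triangle. This establishes the YES/NO correspondence demanded by the definition of a fine-grained reduction.

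Next I would verify the two time-bound conditions of a fine-grained reduction with $t_P(n) = n^{\omega}$ (for Triangle Detection) and $t_Q(N) = N^{\omega}$ (for Dyck-1 reachability on $N$ vertices). Since $N = |V'| = \mathcal{O}(n)$, a hypothetical $\mathcal{O}(N^{\omega - \epsilon})$ algorithm for Dyck-1 reachability would solve Triangle Detection in time $\mathcal{O}(n^{\omega - \epsilon} + n^2)$; because $\omega \geq 2$, the $n^2$ term from building $D$ is absorbed, giving $\mathcal{O}(n^{\omega - \epsilon'})$ for some $\epsilon' > 0$, which matches the second bullet of the fine-grained reduction definition (with the standard quantifier unwinding: for any $\epsilon$ pick $\delta$ accordingly). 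The construction itself runs in $\mathcal{O}(n^2) = \mathcal{O}(n^{\omega - \gamma})$ for some $\gamma > 0$, again using $\omega \geq 2$ (in fact $\omega > 2$ is not even needed; if $\omega = 2$ one notes the reduction time is genuinely subquadratic in $n$ up to the constant, or more carefully one uses that the relevant inequality is about $t_P^{1-\epsilon}$). This settles the third bullet.

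The only point requiring a word of care — and the closest thing to an obstacle — is the boundary case $\omega = 2$, where $t_P(n) = n^2$ and the construction also costs $\Theta(n^2)$, so one must check that the reduction's running time is still $\mathcal{O}(t_P(|I|)^{1-\gamma})$ for some $\gamma > 0$. This is handled exactly as in the combinatorial case: the edge set has size $3 + 2(n-1) + 6|E|$, and for Triangle Detection the meaningful parameter is the number of vertices $n$, so one should phrase the input size of the Triangle Detection instance as $\Theta(n^2)$ (the graph itself) and then building $D$ in $\mathcal{O}(n^2)$ is linear in the input, hence trivially subpolynomial-factor smaller under any reasonable reading; alternatively, and more cleanly, observe that Theorem~\ref{th:tr_to_dyck}'s proof already tolerates an additive $n^2$ term, so the identical argument goes through verbatim with $n^3$ replaced by $n^{\omega}$ everywhere. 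I would therefore simply write: ``The construction, size bounds, and correctness are identical to Theorem~\ref{th:tr_to_dyck}; since $D$ is built in $\mathcal{O}(n^2)$ time and $\omega \geq 2$, an $\mathcal{O}(n^{\omega - \epsilon})$ algorithm for Dyck-1 reachability yields an $\mathcal{O}(n^{\omega - \epsilon'})$ algorithm for Triangle Detection.''
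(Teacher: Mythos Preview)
Your proposal is correct and matches the paper's own proof, which simply states that the theorem ``is proved in an analogous way'' to Theorem~\ref{th:tr_to_dyck}. In fact you are more careful than the paper: your discussion of the $\omega = 2$ boundary case (where the $\mathcal{O}(n^2)$ construction time threatens the third bullet of the fine-grained reduction definition) goes beyond what the paper writes, which does not address this point at all.
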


Combining the Theorem~\ref{th:tr_to_dyck} with the combinatorial reduction from BMM to Triangle Detection~\cite{Williams2009TriangleDV} we get the following corollary.

\begin{cor}
If combinatorial BMM hypothesis is true, then there exist no combinatorial algorithm for Dyck-1 reachability problem with time $\mathcal{O}(n^{3 - \epsilon}), \forall \epsilon > 0$.
\end{cor}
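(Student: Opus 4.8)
The plan is to chain together two facts that are already available in the excerpt: Theorem~\ref{th:tr_to_dyck}, which gives a combinatorial fine-grained reduction from (Triangle Detection, $n^3$) to (Dyck-1 reachability, $n^3$), and the known combinatorial subcubic reduction from BMM to Triangle Detection of Williams and Williams~\cite{Williams2009TriangleDV}. Composing these two reductions yields a combinatorial fine-grained reduction from (BMM, $n^3$) to (Dyck-1 reachability, $n^3$); the corollary is then just the contrapositive of what such a reduction provides, specialized to the combinatorial BMM hypothesis.

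**First I would** recall precisely what the combinatorial BMM-to-Triangle-Detection reduction buys us: if Triangle Detection on an $n$-vertex graph can be solved combinatorially in time $\mathcal{O}(n^{3-\epsilon})$ for some $\epsilon > 0$, then Boolean matrix multiplication of two $n \times n$ matrices can be done combinatorially in time $\mathcal{O}(n^{3-\epsilon'})$ for some $\epsilon' > 0$. (The standard argument splits each matrix into blocks of size $n/t$, builds a tripartite graph whose triangles encode nonzero entries of the product, and recovers the full product by $O(t^3)$ triangle-detection-style calls; choosing $t$ appropriately preserves truly-subcubic running time, and every step is combinatorial.) **Next I would** invoke Theorem~\ref{th:tr_to_dyck}: a combinatorial $\mathcal{O}(n^{3-\delta})$ algorithm for Dyck-1 reachability, $\delta > 0$, combined with the $\mathcal{O}(n^2)$-time graph construction of the reduction, solves Triangle Detection combinatorially in $\mathcal{O}(n^{3-\delta'})$ time for some $\delta' > 0$ (note the reduction is size-preserving, $|V'| = \mathcal{O}(n)$).

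**Then the conclusion follows by chaining:** assume, for contradiction, that there is a combinatorial algorithm for Dyck-1 reachability running in time $\mathcal{O}(n^{3-\epsilon})$ for some $\epsilon > 0$. By Theorem~\ref{th:tr_to_dyck} this gives a combinatorial truly-subcubic algorithm for Triangle Detection, and by the Williams--Williams reduction this in turn gives a combinatorial truly-subcubic algorithm for BMM, contradicting the combinatorial BMM hypothesis. Hence no such Dyck-1 reachability algorithm exists, which is exactly the statement of the corollary.

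**The main thing to be careful about** — rather than a genuine obstacle — is bookkeeping of the composition: one must check that the two reductions compose as fine-grained reductions in the sense defined in Sec.~\ref{subsec:def}, i.e. that the polynomial loss in instance size and the additive reduction-time overhead of each step still leave a truly-subcubic bound after composition, and that "combinatorial" is preserved throughout (both constituent reductions are explicitly combinatorial, and composing combinatorial reductions with a combinatorial algorithm yields a combinatorial algorithm). Since both reductions are size-preserving up to constant factors and run in $\mathcal{O}(n^{2})$-ish (in particular subcubic) time, this verification is routine; no new idea is needed beyond observing that truly-subcubic running times are closed under such composition.
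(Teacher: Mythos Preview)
Your proposal is correct and matches the paper's approach exactly: the paper simply states that the corollary follows by combining Theorem~\ref{th:tr_to_dyck} with the combinatorial BMM-to-Triangle-Detection reduction of~\cite{Williams2009TriangleDV}, which is precisely the chain you spell out. Your additional bookkeeping remarks about composing fine-grained reductions and preserving combinatoriality are sound and more detailed than what the paper writes, but no new idea is involved.
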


\subsection{Lower bound based on OV conjecture}

This section is devoted to the (OV, $n^2$) to (s-t CFL reachability, $n^2$) fine-grained reduction. As the previous one this reduction was firstly presented in~\cite{hansen2021tight} but as the reduction from (OV, $n^2$) to (sparse PDS reachability, $n^2$) reduction. PDS reachability is a problem isomorphic to CFL reachability, still as formally the problems are different we reformulate the reduction in terms of s-t CFL reachability. In addition we change the reduction from~\cite{hansen2021tight} a bit so it translates to CFL reachability problem without $\epsilon$-edges.

We need to mention that this reduction gives nontrivial lower bound on s-t CFL reachability only in case of graphs with truly subquadratic number of edges, otherwise in graph with $n$ vertices and $\mathcal{O}(n^2)$ edges $\mathcal{O}(n^2)$ time is needed only to read the input graph.

We proceed with the construction of the reduction.

\underline{Reduction}
Suppose we are given two sets $X, Y$ of $n$ $d$-dimensional vectors as an instance of OV problem, $X = \{x_1, \ldots, x_n\}, \forall i \; x_i = (x^i_1, \ldots, x^i_d)$, $Y = \{y_1, \ldots, y_n\}, \forall i \; y_i = (y^i_1, \ldots, y^i_d)$. We construct an instance of weighted CFL reachability problem $D~=~(V, E, L, \Omega)$, $G = (N, \Sigma, P, S)$ as follows:

Let denote by $h_x$ the following function: 

$$h_x(b) = 
\begin{cases}
( &\text{if $b = 0$}\\
[ &\text{if $b = 1$}
\end{cases} \;\;\; \forall i \in \{1, \ldots, n\} \; \forall j \in \{1, \ldots, d\}$$

Let denote by $h_y$ the following function: 

$$h_y(b) = 
\begin{cases}
) &\text{if $b = 0$}\\
] &\text{if $b = 1$}
\end{cases} \;\;\; \forall i \in \{1, \ldots, n\} \; \forall j \in \{1, \ldots, d\}$$

\begin{itemize}
    \item $G$ is the grammar of Dyck-2 language
    \item $V = \{v_s, v_l, v_t\} \cup (\cup_{i = 1}^n \{u^i_j|j \in \{1, \ldots, d - 1\}\}) \cup (\cup_{i = 1}^n \{w^i_j|j \in \{1, \ldots, d - 1\}\})$
    \item $(v_s, v_t)$ is a pair of vertices for which we want to know if it is Dyck-2-reachable
    \item $E = E_{sl} \cup E_{lt}$, where
    
    $E_{sl} = (\cup_{i=1}^n (\{(v_s, h_x(x^i_1), u^i_1)\} \cup$
    
    $\{(u^i_j, h_x(x^i_{j + 1}), u^i_{j + 1})|j \in \{1, \ldots, d - 2\})\} \cup $
    
    $\{(u^i_{d - 1}, h_x(x^i_d), v_l)\}))$
    
    $E_{lt} = (\cup_{i=1}^n (\{(v_l, h_y(0), w^i_1)\} \cup \{(v_l, h_y(1), w^i_1|y^i_d = 0)\} \cup$
    
    $\{(w^i_j, h_y(0), w^i_{j + 1})|j \in \{1, \ldots, d - 2\}\} \cup$

    $\{(w^i_j, h_y(1), w^i_{j + 1})|y^i_{d-j} = 0 \; j \in \{1, \ldots, d - 2\}\} \cup$
    
    $\{(w^i_{d - 1}, h_y(0), v_t)\} \cup
    \{(w^i_{d - 1}, h_y(1), v_t)|y^i_1 = 0\}))$,
    
    where $(v, l, u)$ denotes a directed edge from vertex $v$ to vertex $u$ labeled with symbol~$l$.
\end{itemize}

For visual representation of the reduction see Fig.~\ref{fig:ov_to_cflr}.

\begin{figure}[!htp]
		
	\begin{center}  
		\includegraphics[scale=1.2]{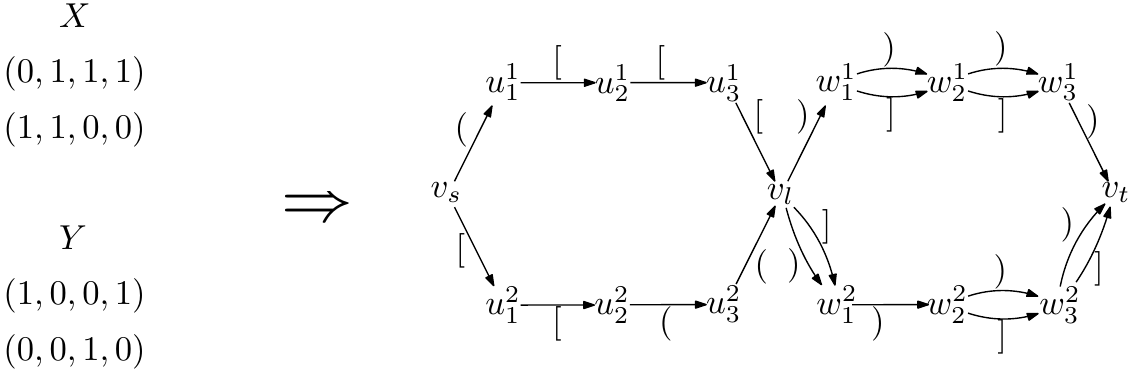}
	\end{center}
	
	\caption{The given sets $X, Y$ in the OV problem on the left and directed graph $D$ on the right.}
	\label{fig:ov_to_cflr}
	
\end{figure}

\underline{Correctness} Let us first estimate the size of $D$:

\begin{itemize}
    \item $V = 3 + 2 \cdot (d-1) \cdot n = \mathcal{O}(nd)$
    \item As each vertex of type $u^i_j, w^i_j, v_t$ has outgoing degree at most 2 and vertices $v_s, v_l$ have outgoing degree at most $2n$ we have $|E| = \mathcal{O}(nd)$.
\end{itemize}

Therefore the size of $D$ is $\mathcal{O}(nd)$ and it can be constructed in this time from OV instance.

\begin{lemma}
There exist two orthogonal vectors $x_i \in X, y_j \in Y$ if and only if $(v_s, v_t)$ is a Dyck-2-reachable pair of vertices in $D$.
\end{lemma}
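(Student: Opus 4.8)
The plan is to show that a path from $v_s$ to $v_t$ in $D$ decomposes canonically into an "$X$-part" through some $u^i_\bullet$ chain (from $v_s$ to $v_l$) followed by a "$Y$-part" through some $w^j_\bullet$ chain (from $v_l$ to $v_t$), and that the concatenated label word is a Dyck-2 word exactly when the two vectors $x_i$ and $y_j$ are orthogonal. First I would observe the structural rigidity of $D$: every edge in $E_{sl}$ moves from index level $j$ to level $j+1$ inside a fixed "row" $i$, so the only way to get from $v_s$ to $v_l$ is to pick a row $i$ and traverse $v_s \to u^i_1 \to u^i_2 \to \dots \to u^i_{d-1} \to v_l$, reading the word $h_x(x^i_1) h_x(x^i_2) \cdots h_x(x^i_d)$, i.e. a length-$d$ string of opening brackets where position $j$ is $($ if $x^i_j = 0$ and $[$ if $x^i_j = 1$. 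Symmetrically, the only way from $v_l$ to $v_t$ is to pick a row $j$ and traverse $v_l \to w^j_1 \to \dots \to w^j_{d-1} \to v_t$; here the edges labeled $]$ (i.e. $h_y(1)$) are present only when the corresponding coordinate of $y_j$ is $0$ — note the reversed indexing: the edge into $w^j_1$ looks at $y^j_d$, the edge $w^j_t \to w^j_{t+1}$ looks at $y^j_{d-t}$, and the edge into $v_t$ looks at $y^j_1$. So a $v_l \to v_t$ path through row $j$, if it exists, reads the word $h_y(b_d) h_y(b_{d-1}) \cdots h_y(b_1)$ where each $b_m \in \{0,1\}$ and $b_m = 1$ is only allowed when $y^j_m = 0$; this word is a length-$d$ string of closing brackets, position $m$-from-the-end being $)$ or $]$.

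Next I would match up the two halves. Concatenating, a full $v_s \to v_t$ path through $X$-row $i$ and $Y$-row $j$ reads a word of the form $o_1 o_2 \cdots o_d\, c_d c_{d-1} \cdots c_1$ where $o_p \in \{(, [\}$ is determined by $x^i_p$ and $c_p \in \{), ]\}$ is a choice constrained by $y^j_p$. Since all opening brackets come first and all closing brackets after, and there are exactly $d$ of each, this string is in the Dyck-2 language if and only if it is a properly nested string, which happens if and only if for every position $p \in \{1,\dots,d\}$ the bracket $c_p$ is the closing partner of $o_p$ — that is, $o_p = ($ forces $c_p = )$ and $o_p = [$ forces $c_p = ]$. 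Unpacking: if $x^i_p = 1$ then $o_p = [$, forcing $c_p = ]$, which by the edge constraint requires $y^j_p = 0$. Conversely if $x^i_p = 0$ then $o_p = ($, forcing $c_p = )$, which is always an available edge. Hence a valid (Dyck-2) $v_s \to v_t$ path using rows $(i,j)$ exists if and only if for all $p$, $x^i_p = 1 \Rightarrow y^j_p = 0$, i.e. $x^i_p \cdot y^j_p = 0$ for all $p$, i.e. $x_i$ and $y_j$ are orthogonal.

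To finish, I would assemble both directions. For the "if" direction: given orthogonal $x_i, y_j$, build the path along row $i$ then row $j$, at each closing step choosing $c_p = )$ when $x^i_p = 0$ and $c_p = ]$ when $x^i_p = 1$ (the latter edge exists because orthogonality gives $y^j_p = 0$); the resulting word is $\prod_p (\text{open}_p)\prod_p (\text{close}_p)$ in matched-nesting order, hence Dyck-2, so $v_t$ is $S$-reachable from $v_s$. For the "only if" direction: any $v_s \to v_t$ path must, by the rigidity observation, pass through $v_l$ exactly once and use a single $X$-row $i$ and a single $Y$-row $j$ (here I would note $D$ is a DAG with $v_l$ a cut vertex, so no repetition or skipping is possible), and by the matching argument its label word is Dyck-2 only if $x_i \perp y_j$. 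I expect the main obstacle to be purely bookkeeping rather than conceptual: getting the reversed index convention on the $Y$-side exactly right so that position $p$ of the opening block genuinely lines up with the partner of position $p$ of the closing block, and carefully justifying that no "mixed" path (e.g. one that would try to interleave brackets) can arise — this follows from the layered DAG structure but must be stated cleanly.
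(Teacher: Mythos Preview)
Your proposal is correct and follows essentially the same approach as the paper: decompose any $v_s\!\to\!v_t$ path into an $X$-row segment and a $Y$-row segment via $v_l$, observe that the label word has the shape (opens)$^d$(closes)$^d$, and argue that the Dyck-2 condition forces the $p$-th opening bracket to match the $p$-th-from-last closing bracket, which unpacks to the orthogonality condition on $x_i,y_j$. If anything, your write-up is slightly more explicit than the paper's about the layered DAG structure and the reversed indexing on the $Y$-side, but the underlying argument is the same.
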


\begin{proof}
We first prove the "only if"-part of the claim.

Suppose vectors $x_i \in X, y_j \in Y$ are orthogonal. We claim that there exists in $D$ a Dyck-2 path of the following construction (for now we do not choose one edge from multiple edges between consecutive vertices in the path, if there exist any):

\begin{equation}
v_s \rightarrow u^i_1 \rightarrow \ldots \rightarrow u^i_{d-1} \rightarrow v_l \rightarrow w^j_1 \rightarrow \ldots \rightarrow w^j_{d-1} \rightarrow v_t
\label{eq:path_constr}
\end{equation}

Note that the parenthesis on the outgoing edge from vertex $u^i_l$ should match with the parenthesis on the edge ingoing to vertex $w^j_{d-i}$. Case of vertices $v_s, v_t$ is similar and can be analyzed in the same way. Now we can build the Dyck-2 path iteratively: 

\begin{itemize}
    \item[-] If $x^i_l = 1$ we know that $y^j_l = 0$ as $x_i, y_j$ are orthogonal. By the construction an edge from $u^i_l$ is marked with "[" and one of the two edges to vertex $w^j_{d-i}$ is marked with "]". We choose these two edges with matching parentheses to our path.
    
    \item[-] If $x^i_l = 0$ we know that $y^j_l$ may be 0 and may be 1 as $x_i, y_j$ are orthogonal. By the construction an edge from $u^i_l$ is marked with "(" and in both cases ($y^j_l = 0$ and $y^j_l$ = 1) there exists an edge to vertex $w^j_{d-i}$ that is marked with ")". We choose these pair edges with matching parentheses to our path.
\end{itemize}

By the construction of the path it is a Dyck-2 path and $(v_s, v_t)$ is a Dyck-2-reachable pair of vertices in $D$.

Now we proceed to the "if"-part of the claim.

If $v_t$ is Dyck-2-reachable from $v_s$ in $D$ then there exists a Dyck-2 path between them. By the construction of $D$ the path is of the way described in~\eqref{eq:path_constr} for some $i$ and $j$. By the same observations as in the previous case we note that all pairs of parentheses on the edges should match. We prove for every pair of bits $x^i_m, y^i_m$ that they are orthogonal:

\begin{itemize}
    \item[-] If the corresponding pair of parentheses is a $()$-pair. Then $x^i_m = 0$ and $m$-th pair of bits of $x_i, y_j$ is orthogonal. 
    
    \item[-] If the corresponding pair of parentheses is a $[]$-pair, then $y^i_m = 0$, because only in this case we create an edge labeled with "]" ingoing to vertex $w^j_{d-m}$. Thus $m$-th pair of bits of $x_i, y_j$ is orthogonal.
\end{itemize}

We have proved that $x_i \in X, y_j \in Y$ are orthogonal.
\end{proof}

\begin{theorem}
There exists an (OV, $n^2$) to (s-t CFL reachability, $n^2$) fine-grained reduction.
\end{theorem}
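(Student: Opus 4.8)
The plan is to observe that essentially all the work has already been done: the preceding lemma established that $I$ is a YES-instance of OV if and only if the constructed pair $J=(D,G)$ is a YES-instance of s-t CFL reachability, which is exactly the first of the three conditions in the definition of a fine-grained reduction. What remains is bookkeeping on sizes and running times. First I would record the data computed just above the lemma: the grammar $G$ is \emph{fixed} (the Dyck-2 grammar, so it contributes only constants under the standard convention that the grammar is fixed), and $D$ has $|V(D)|=\Theta(nd)$ vertices and $|E(D)|=\mathcal{O}(nd)$ edges, and is produced by the construction in time $\mathcal{O}(nd)$. Using the standard moderate-dimension assumption for OV ($d=\mathrm{polylog}(n)$, and in any case $d=n^{o(1)}$, consistent with the conjecture being stated as ``no $\mathcal{O}(n^{2-\epsilon}\,\mathrm{poly}(d))$ algorithm''), this gives $nd=n^{1+o(1)}$.

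With these sizes in hand, verifying the remaining two conditions is immediate. For the running-time condition, the reduction takes time $\mathcal{O}(nd)=\mathcal{O}(n^{1+o(1)})=\mathcal{O}\big((n^2)^{1-\gamma}\big)$ for, say, $\gamma=\tfrac14$, so it is subquadratic in the OV size as required. For the second condition, take $t_P(N)=N^2$ (the conjectured OV bound, $N=n$ the number of vectors) and $t_Q(N)=N^2$ (the target bound for s-t CFL reachability, $N$ the number of graph vertices). Since $|J|=\Theta(nd)$, for any $\epsilon>0$ we have
$$t_Q(|J|)^{1-\epsilon}=\mathcal{O}\big((nd)^{2-2\epsilon}\big)=\mathcal{O}\big(n^{2-2\epsilon}\cdot d^{2-2\epsilon}\big)=\mathcal{O}\big(n^{2-2\epsilon+o(1)}\big),$$
and choosing $\delta=\epsilon/2$ this is $\mathcal{O}(n^{2-\epsilon})=\mathcal{O}\big(t_P(n)^{1-\delta}\big)$. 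Equivalently, a hypothetical $\mathcal{O}(N^{2-\epsilon})$-time algorithm for s-t CFL reachability on an $N$-vertex graph would, via the reduction, solve the OV instance in time $\mathcal{O}(nd)+\mathcal{O}\big((nd)^{2-\epsilon}\big)=\mathcal{O}(n^{2-\epsilon'})$ for some $\epsilon'>0$, contradicting the OV conjecture.

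The only genuine subtlety — and the step I would be most careful about — is the treatment of the dimension $d$: one must make sure the $\mathrm{poly}(d)$ overhead really is absorbed into the $n^{o(1)}$ slack, which is precisely why the reduction is formulated in the moderate-dimension regime. I would also reiterate the remark made before the construction: because $D$ is sparse ($|E(D)|=\mathcal{O}(|V(D)|)$), the target bound $N^2$ is meaningful (it matches the $\mathcal{O}(nm)$ upper bound of Algorithm~\ref{alg:cfl} on sparse inputs), whereas on dense graphs $\Theta(N^2)$ time is forced merely to read the input and the lower bound would be vacuous.
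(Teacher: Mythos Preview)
Your proposal is correct and follows essentially the same approach as the paper: use the preceding lemma for correctness, note the $\mathcal{O}(nd)$ construction time and graph size, and observe that an $\mathcal{O}(N^{2-\epsilon})$ algorithm on the $\Theta(nd)$-vertex instance yields an $\mathcal{O}(n^{2-\epsilon}\cdot\mathrm{poly}(d))$ OV algorithm. The one stylistic difference is that the paper keeps the $\mathrm{poly}(d)$ factor explicit throughout (which matches the OV hypothesis exactly as stated there), whereas you invoke the moderate-dimension regime $d=n^{o(1)}$ to absorb it; both treatments are standard and your discussion of this point is appropriate.
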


\begin{proof}
Suppose we are given an instance of OV problem: two sets $X, Y$ of $n$ $d$-dimensional vectors. We construct an instance $D, G$ of s-t CFL reachability problem in time $\mathcal{O}(nd)$ as described above.

If there exists an $\mathcal{O}(n^{2-\epsilon})$-time algorithm for CFL reachability, then we can get an answer to an instance $D, G$ in time $\mathcal{O}((nd)^{2-\epsilon} = \mathcal{O}(n^{2-\epsilon} \cdot poly(d))$ and it is the answer to original OV instance. Thus we can solve OV problem in time $\mathcal{O}(n^{2-\epsilon} \cdot poly(d))$.
\end{proof}

As in the reduction we have build a sparse graph $D$ we can get the following corollary.

\begin{cor}
There in no $n^{2-\epsilon}, \epsilon >0$ algorithm for s-t CFL reachability on sparse graphs unless OV hypothesis is false.
\end{cor}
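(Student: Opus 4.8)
The plan is to read the corollary directly off the fine-grained reduction established in the theorem above, paying attention only to the sparsity of the graph $D$ produced by the reduction. Recall that on an OV instance consisting of two sets of $n$ vectors of dimension $d$, the construction outputs a directed labeled graph $D$ with $|V| = 3 + 2(d-1)n = \Theta(nd)$ vertices and, since every vertex of type $u^i_j$, $w^i_j$, $v_t$ has out-degree at most $2$ while $v_s$ and $v_l$ have out-degree at most $2n$, only $|E| = \mathcal{O}(nd) = \mathcal{O}(|V|)$ edges. Hence $D$ is a sparse graph: its edge count is linear in its vertex count, so in particular $|E| \le |V|^{2-\delta}$ for every $\delta \le 1$ once $n$ is large. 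This is exactly the regime in which the $n^{2-\epsilon}$ bound is non-trivial, as noted before the reduction.

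Next I would argue by contradiction. Suppose s-t CFL reachability on sparse graphs can be solved in time $\mathcal{O}(N^{2-\epsilon})$ for some fixed $\epsilon > 0$, where $N$ denotes the number of vertices (renamed to avoid clashing with the $n$ of the OV instance). Given an OV instance, build $D$ together with the fixed Dyck-2 grammar $G$ in time $\mathcal{O}(nd)$ exactly as in the reduction, then run the assumed algorithm on $D$. Its running time is $\mathcal{O}((nd)^{2-\epsilon}) = \mathcal{O}(n^{2-\epsilon} \cdot d^{2-\epsilon}) = \mathcal{O}(n^{2-\epsilon} \cdot \mathrm{poly}(d))$, and by the correctness lemma the Boolean answer it returns — whether $(v_s, v_t)$ is a Dyck-2-reachable pair — is precisely the answer to the OV instance. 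Adding the $\mathcal{O}(nd)$ construction time, OV is solved in time $\mathcal{O}(n^{2-\epsilon} \cdot \mathrm{poly}(d))$, contradicting the OV hypothesis; hence no such algorithm exists unless the OV hypothesis is false.

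The only point that needs a moment's care — and it is the closest thing to an obstacle here — is checking that the extra $\mathrm{poly}(d)$ factor is harmless. The OV conjecture explicitly permits a $\mathrm{poly}(d)$ overhead (it rules out $\mathcal{O}(n^{2-\epsilon} \cdot \mathrm{poly}(d))$ algorithms for $d = \omega(\log n)$), so the $d^{2-\epsilon}$ blow-up that arises from measuring the CFL-reachability running time in $|V| = \Theta(nd)$ rather than in $n$ does not weaken the conclusion. Everything else is immediate from the preceding theorem and the size estimate for $D$ already computed in its correctness analysis.
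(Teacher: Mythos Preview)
Your proof is correct and follows exactly the approach the paper intends: the paper simply remarks that the graph $D$ built in the reduction is sparse and lets the corollary follow, while you spell out the contrapositive argument and the handling of the $\mathrm{poly}(d)$ factor explicitly. There is nothing missing or divergent here.
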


\subsection{Lower bound on push-down systems}
\label{subsec:pds}

Next we present a fine-grained combinatorial reduction (Triangle Detection, $n^3$) to (sparse PDS reachability with stack depth $b = \lceil \log n \rceil$, $n^{1.5}$). CFL reachability problem can be restated in terms of PDS reachability. In this section we want to add a restriction on stack depth which is a natural restriction in terms of PDS, therefore we use this problem.

\begin{figure}[!htp]
		
	\begin{center}  
		\includegraphics[scale=0.9]{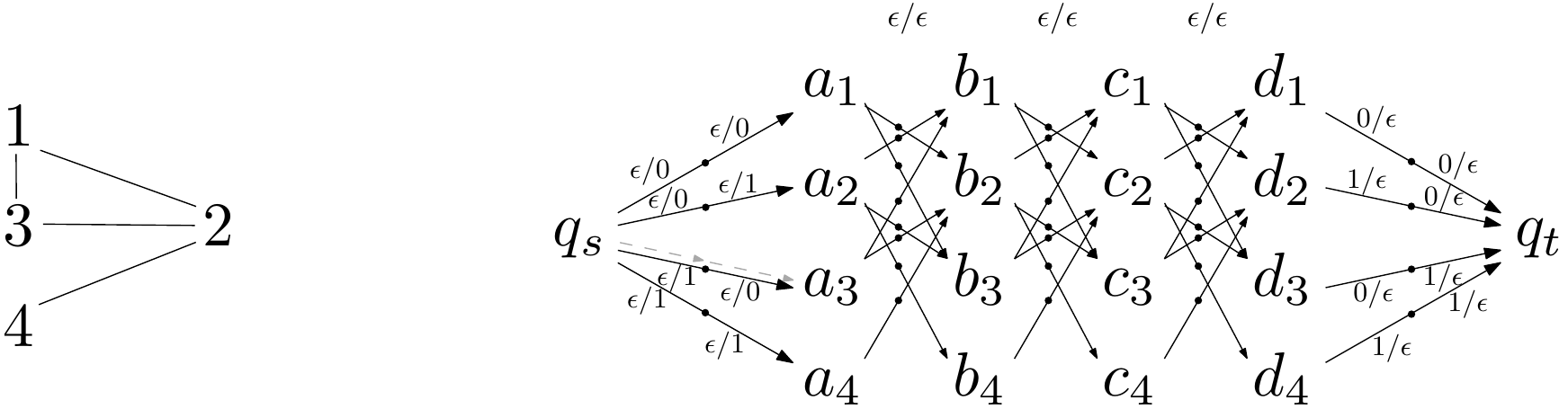}
	\end{center}
	
	\caption{The given graph in the triangle detection problem on the left and PDS $\mathcal{P}$ on the right. An edge from $v$ to $u$ labeled $x/y$ corresponds to a transition $(v,x,u,y) \in \sigma$. Each dot symbolizes auxiliary state, they split a transition into many transitions, e.g. see edge $(q_s, a_3)$.}
	\label{fig:triangle_to_pds}
	
\end{figure}

\underline{Reduction} 

Let $U = (V, E), V = \{x_1, \ldots, x_n\}$ be a graph in Triangle Detection problem. We build PDS $\mathcal{P} = (Q, \Gamma, \delta)$ as follows:

\begin{itemize}
    \item Set of states $Q$ consist of ordinary states $Q'$ and auxiliary states $Q''$. 
    
    \begin{itemize}
        \item[-] $Q' = \{q_s\} \cup \{a_1, \ldots, a_n\} \cup \{b_1, \ldots, b_n\} \cup \{c_1, \ldots, c_n\} \cup \{d_1, \ldots, d_n\} \cup \{q_t\}$.
        \item[-]$Q''$ are auxiliary states, they appear in transitions in which we want to read or write more than one symbol from the stack. If in transition $t$ we want to read $k_1$ symbols and write $k_2$ symbols, it is separated into $k_1 + k_2$ transitions. In each of the new transitions we either read one symbol from the stack without writing or write one symbol to the stack without reading. This procedure creates $k_1 + k_2 - 1$ auxiliary states. 
        
        Another auxiliary states can appear when we want to split a transition in two transitions. This type of auxiliary states will help to maintain the system sparse.
    \end{itemize}
     
    \item $q_s, q_t$ are the states, for which we want to check reachability 
    \item $\Gamma = \{0, 1\}$
    \item $\delta = 
    \{(q_s, \epsilon, [i]_2, a_i)|i \in \{1, \ldots n\}\} \cup$ 
    
    $\{(a_i, \epsilon, \epsilon, b_j)|(x_i, x_j) \in E\} \cup
    \{(b_i, \epsilon, \epsilon, c_j)|(x_i, x_j) \in E\} \cup$
    
    $\{(c_i, \epsilon, \epsilon, d_j)|(x_i, x_j) \in E\} \cup
    \{(d_i, \overline{[i]_2}, \epsilon, q_t)|i \in \{1, \ldots n\}\}$, where $[i]_2$ denotes a binary representation of $i$ (padded with leading 0-s if necessary) of length $\lceil \log n \rceil$, $\overline{[i]_2}$ denotes a reversed binary representation of $i$. Each transition which reads or writes $[i]_2$ is split in $\lceil \log n \rceil$ transitions with auxiliary states. Each transition which reads and writes $\epsilon$, nothing, is split in two transitions with one auxiliary state. 
\end{itemize}

For visual representation of the reduction see Fig.~\ref{fig:triangle_to_pds}.

\underline{Correctness} Let us first estimate the size of $\mathcal{P}$:

\begin{itemize}
    \item $|Q| = |Q'| + |Q''| \le (2 + 3 \cdot n) + (2 \cdot \log n + 3 \cdot n^2) = \mathcal{O}(n^2)$
    \item Each auxiliary state has exactly one transition from it. For each ordinary state transitions from it do not exceed $n$. As $|Q'| = \mathcal{O}(n)$, we get $|\delta| = \mathcal{O}(n^2) = \mathcal{O}(|Q|)$ 
\end{itemize}

Thus the system $\mathcal{P}$ can be constructed in $\mathcal{O}(n^2)$ time.

\begin{lemma}
$q_t$ is reachable from $q_s$ in $\mathcal{P}$ if and only if $U$ contains a triangle.
\end{lemma}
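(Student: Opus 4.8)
The plan is to mirror the structure of the earlier Triangle-Detection-to-Dyck-1 lemma, but now tracking the stack contents of $\mathcal{P}$ instead of a balanced-parentheses word. The key observation is that a run of $\mathcal{P}$ from $(q_s,\epsilon)$ must begin by pushing $[i]_2$ onto the stack (the only transitions out of $q_s$), move through an $a_i\to b_j\to c_k\to d_\ell$ chain without touching the stack (all those transitions read and write $\epsilon$), and finally, from some $d_\ell$, pop $\overline{[\ell]_2}$ while moving to $q_t$. Since the stack is LIFO, popping $\overline{[\ell]_2}$ succeeds against a stack holding $[i]_2$ (top symbol first) exactly when $\ell = i$. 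So a successful run corresponds precisely to a path $q_s \to a_i \to b_j \to c_k \to d_i \to q_t$ with $i=\ell$, and by construction of $\delta$ such edges exist iff $(x_i,x_j),(x_j,x_k),(x_k,x_i) \in E$, i.e. iff $x_i,x_j,x_k$ form a triangle.

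For the "if" direction I would take a triangle $x_i,x_j,x_k$, exhibit the run: from $(q_s,\epsilon)$ push $[i]_2$ (via the $\lceil\log n\rceil$ split transitions through auxiliary states, ending in state $a_i$ with stack $[i]_2$), then take the $\epsilon$-transitions $a_i\to b_j\to c_k\to d_i$ (each split into two via one auxiliary state, none altering the stack), then from $d_i$ pop $\overline{[i]_2}$ symbol by symbol, arriving at $(q_t,\epsilon)$. One must check that the reversed binary string $\overline{[i]_2}$ is exactly what a LIFO stack returns after pushing $[i]_2$ — this is the one small bookkeeping point, and it is why the reduction writes $[i]_2$ on push and $\overline{[i]_2}$ on pop. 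For the "only if" direction I would argue that any path $(q_s,\epsilon)\to^*(q_t,\epsilon)$ must, by inspection of the transition relation, have the shape described above for some indices $i,j,k,\ell$; the net stack effect is pushing $[i]_2$ then popping $\overline{[\ell]_2}$, and ending with empty stack forces $\ell=i$ and forces the popped string to match the pushed one. Then the presence of the transitions used certifies the three edges, giving the triangle.

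The main obstacle I expect is purely notational: being careful about the auxiliary states $Q''$ so that the "split" transitions genuinely realize the intended multi-symbol push and pop one symbol at a time, and confirming that the intermediate configurations stay within stack depth $\lceil\log n\rceil$ (the stack never exceeds the length of one binary index). There is no real combinatorial difficulty beyond what the Dyck-1 analogue already handled; the acyclicity-style argument there is replaced here by the observation that the only way to leave $q_s$ is to push and the only way to enter $q_t$ is to pop, together with the fact that the $a$-$b$-$c$-$d$ layer is stack-neutral, which pins down the global run shape.
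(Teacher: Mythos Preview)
Your proposal is correct and follows essentially the same approach as the paper: exhibit the explicit run through $(q_s,\epsilon)\to(a_i,[i]_2)\to(b_j,[i]_2)\to(c_k,[i]_2)\to(d_i,[i]_2)\to(q_t,\epsilon)$ for the ``if'' direction, and for the ``only if'' direction observe that the configuration graph is acyclic so any accepting run is forced into this shape, with the push/pop match forcing $\ell=i$. Your write-up is in fact more detailed than the paper's (which is quite terse), particularly in spelling out the LIFO bookkeeping for $[i]_2$ versus $\overline{[i]_2}$ and the role of the auxiliary states.
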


\begin{proof}
We first prove the "if"-part of the claim. 

Suppose vertices $x_i, x_j, x_k$ form a triangle in $U$. Then in $\mathcal{G}_{\mathcal{P}}$ there exists a path 

$$(q_s, \epsilon) \rightarrow \ldots \rightarrow (a_i, [i]_2) \rightarrow \ldots \rightarrow (b_j, [i]_2) \rightarrow \ldots$$

\begin{equation}
\ldots \rightarrow (c_k, [i]_2) \rightarrow \ldots \rightarrow (d_i, [i]_2) \rightarrow \ldots \rightarrow (q_t, \epsilon)
\label{eq:first}
\end{equation}

Now we proceed to the "only if"-part of the claim.

Graph $\mathcal{G}_{\mathcal{P}}$ is acyclic and by its construction path from $(q_s, \epsilon)$ to $(q_t, \epsilon)$ must be of the way described in equation \eqref{eq:first}. It implies  that $x_i, x_j, x_k$ form a triangle in $U$.
\end{proof}

\begin{theorem}
There exists a (Triangle Detection, $n^3$) to (sparse PDS reachability with stack depth $b = \lceil \log n \rceil$, $n^{1.5}$) combinatorial fine-grained reduction. 
\end{theorem}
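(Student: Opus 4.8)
The plan is to combine the reduction and correctness lemma just established with the known combinatorial lower bound on Triangle Detection, exactly in the style of Theorem~\ref{th:tr_to_dyck}. The reduction itself (construction of $\mathcal{P}$ from $U$) has already been given, and the correctness lemma states that $q_t$ is reachable from $q_s$ in $\mathcal{P}$ iff $U$ contains a triangle. So the only remaining work is to verify the two quantitative conditions in the definition of a fine-grained reduction: the running time of the reduction must be truly subcubic in the Triangle Detection size parameter $n$, and the target time bound must pull back appropriately.

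First I would record the parameters of the produced instance. From the size estimate, $|Q| = \mathcal{O}(n^2)$ and $|\delta| = \mathcal{O}(n^2) = \mathcal{O}(|Q|)$, so $\mathcal{P}$ is sparse in the sense defined in Sec.~\ref{subsec:def}; and the stack depth is bounded by $\lceil \log n \rceil$, since every stack word pushed is a binary encoding $[i]_2$ of length $\lceil \log n \rceil$ and is popped before any further pushes, so the stack never exceeds that height. The construction clearly takes $\mathcal{O}(n^2)$ time. The relevant size parameter for the target problem is $N := |Q| = \mathcal{O}(n^2)$, so a hypothetical algorithm running in time $N^{1.5}$ on this instance runs in time $\mathcal{O}((n^2)^{1.5}) = \mathcal{O}(n^3)$.

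Then I would assemble the contradiction argument. Suppose sparse PDS reachability with stack depth $b = \lceil \log n \rceil$ can be solved in time $\mathcal{O}(N^{1.5 - \epsilon})$ for some $\epsilon > 0$. Given a Triangle Detection instance $U$ on $n$ vertices, build $\mathcal{P}$ in $\mathcal{O}(n^2)$ time; it has $N = \mathcal{O}(n^2)$ states, is sparse, and has stack depth at most $\lceil \log n \rceil$. Running the assumed algorithm solves PDS reachability in time $\mathcal{O}((n^2)^{1.5 - \epsilon}) = \mathcal{O}(n^{3 - 2\epsilon})$, and by the correctness lemma this answers the Triangle Detection instance. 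The total time is $\mathcal{O}(n^2 + n^{3 - 2\epsilon}) = \mathcal{O}(n^{3 - \epsilon'})$ for $\epsilon' = \min\{1, 2\epsilon\} > 0$, so Triangle Detection would be solvable combinatorially in truly subcubic time. Since all steps of the reduction are combinatorial (no algebraic matrix tricks), this is a combinatorial fine-grained reduction, establishing the theorem; in particular, under the Triangle Detection (equivalently combinatorial BMM) hypothesis, no combinatorial algorithm solves sparse PDS reachability with logarithmic stack depth in time $N^{1.5 - \epsilon}$.

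The main obstacle, such as it is, is not in the final packaging but in being careful about which quantity plays the role of the instance size: the $n^{1.5}$ in the target must be read against $N = |Q| = \Theta(n^2)$, and one must double-check that the stack-depth bound $\lceil \log n \rceil$ is genuinely respected by every path in $\mathcal{G}_{\mathcal{P}}$ (this is where the "read binary encoding, then immediately unwind it" shape of the transitions, and the auxiliary-state splitting that keeps each individual transition a single push or single pop, matters). Once those points are checked, the inequalities in the definition of a fine-grained reduction follow immediately as above.
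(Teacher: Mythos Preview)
Your proposal is correct and follows essentially the same argument as the paper: you verify that $\mathcal{P}$ is sparse with $|Q| = \mathcal{O}(n^2)$ and stack depth $\lceil \log n \rceil$, note the $\mathcal{O}(n^2)$ construction time, and then observe that an $\mathcal{O}(N^{1.5-\epsilon})$ algorithm on the PDS instance yields $\mathcal{O}(n^2 + (n^2)^{1.5-\epsilon}) = \mathcal{O}(n^{3-\epsilon'})$ for Triangle Detection. Your write-up is in fact a bit more careful than the paper's (you explicitly justify the stack-depth bound and flag which parameter plays the role of the instance size), but the substance is identical.
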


\begin{proof}
By the observations above constructed PDS $\mathcal{P}$ is sparse and of size $|Q| = \mathcal{O}(n^2)$. stack depth is bounded by $[n]_2 = \lceil \log n \rceil$: we can only write on stack the number of one vertex in binary encoding.

$\mathcal{P}$ can be constructed in $\mathcal{O}(n^2)$ time. If there was $\mathcal{O}(n^{1.5 - \epsilon})$ algorithm for PDS reachability it would lead to $\mathcal{O}(n^2 + (n^2)^{1.5 - \epsilon}) = \mathcal{O}(n^{3 - \epsilon'})$ algorithm for Triangle Detection.
\end{proof}

We must mention that in~\cite{10.1145/3158118} it was proven conditional $\mathcal{O}(n^3)$ lower bound on PDS reachability based on BMM hypothesis, and by the similar technique of splitting edges it can be further translated into the conditional lower bound the was described above. Nevertheless the presented reduction is easier, has additional restriction on a stack depth and helps us in construction of the reduction below. The following conditional lower bound is based on 3SUM and APSP hypotheses.

We proceed with describing (AE-Mono$\Delta$, $n^{2.5}$) to (sparse all-pairs PDS reachability with stack depth $b = 4 \lceil \log n \rceil$, $n^{1.25}$) fine-grained reduction.

\underline{Reduction} Suppose we are given graph $U = (V, E), V = \{v_1, \ldots, v_n\}$ as an input in AE-Mono$\Delta$ problem. The construction of PDS is very similar to the one described above and based on the construction described in~\cite{10.1145/3571252} (see Theorem 3.2 in~\cite{10.1145/3571252}):

\begin{itemize}
    \item Set of states $Q$ analogously consists of ordinary states $Q'$ and auxiliary states $Q''$. As rules for the auxiliary states remain the same we describe only set of ordinary states. 
    
    $Q' = \{x_1, \ldots, x_n\} \cup \{y_1, \ldots, y_n\} \cup \{z_1, \ldots, z_n\} \cup \{x'_1, \ldots, x'_n\} \cup \{y'_1, \ldots, y'_n\}$.
     
    \item $(x_i, y'_j), i, j \in \{1, \ldots, n\}$ are the pairs of states, for which we want to check reachability, for that we will run all-pairs reachability algorithm
    \item $\Gamma = \{0, 1\}$
    \item $\delta = 
    \{(x_i, \epsilon, [i]_2 \, [j]_2 \, [c(e_{ij})]_2, y_j)|(x_i, x_j) \in E\} \cup$
    
    $\{(y_i, \overline{[c(e_{ij})]_2}, [c(e_{ij})]_2, z_j)|(x_i, x_j) \in E\} \cup$
    
    $\{(z_i, \overline{[c(e_{ij})]_2}, \epsilon, x'_j)|(x_i, x_j) \in E\} \cup$
    
    $\{(x'_i, \overline{[j]_2} \, \overline{[i]_2}, \epsilon, y'_j)|(x_i, x_j) \in E\}$, where $[i]_2$ denotes a binary representation of $i$ (padded with 0-s if necessary) of length $\lceil \log n \rceil$, $[c(e_{ij})]_2$ denotes a binary representation of number of color of the edge $e_ij = (v_i, v_j)$ in $U$ and is of length $2 \lceil \log n \rceil$  (padded with 0-s if necessary). 
\end{itemize}

For visual representation of the reduction see Fig.~\ref{fig:ae_mono_to_cflr}.

\underline{Correctness} Let us first estimate the size of $\mathcal{P}$:

\begin{itemize}
    \item $|Q| = |Q'| + |Q''| \le (5 \cdot n) + (4 \cdot (4 \log n + 1) \cdot n^2) = \mathcal{O}(n^2 \log n)$
    \item Each auxiliary state has exactly one transition from it. For each ordinary state transitions from it do not exceed $n$. As $|Q'| = \mathcal{O}(n)$, we get $|\delta| = \mathcal{O}(n^2) = \mathcal{O}(|Q|)$ 
\end{itemize}

Thus the system $\mathcal{P}$ can be constructed in $\mathcal{O}(n^2 \log n)$ time.

\begin{lemma}
For every pair of vertices $x_i, x_j$ there is a monochromatic triangle in $U$ containing an edge $e_{ij}$ if and only if $x_i$ is reachable from $y'_j$ in PDS $\mathcal{P}$.
\end{lemma}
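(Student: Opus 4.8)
The argument mirrors the Triangle Detection to PDS reduction proved above. The plan is first to pin down the skeleton of $\mathcal{P}$, then to show that the only paths in $\mathcal{G}_{\mathcal{P}}$ from $(x_i,\epsilon)$ to $(y'_j,\epsilon)$ are exactly those that ``simulate'' a monochromatic triangle through the edge $e_{ij}$, reading the triangle off from such a path. For the skeleton: among ordinary states the only transitions go $x_\bullet \to y_\bullet \to z_\bullet \to x'_\bullet \to y'_\bullet$, the states $y'_\bullet$ have no outgoing transition, and every auxiliary state has a unique successor and is private to a single ``macro-transition'' (the block of $k_1+k_2$ single-symbol transitions that replaces one rule reading $k_1$ and writing $k_2$ symbols); moreover $\mathcal{G}_{\mathcal{P}}$ is acyclic. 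Hence any path from $(x_i,\epsilon)$ to $(y'_j,\epsilon)$ consists of exactly four completed macro-transitions and visits ordinary states $x_i, y_a, z_b, x'_c, y'_j$ for some $a,b,c$ with $(v_i,v_a),(v_a,v_b),(v_b,v_c),(v_c,v_j)\in E$, the auxiliary states contributing only fixed-length detours and no branching. Throughout I write ``$(x_i, y'_j)$ is a reachable pair'' for $(x_i,\epsilon)\to^*(y'_j,\epsilon)$, as in the statement of the problem.

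For the ``only if'' direction, let $v_i,v_j,v_l$ be a monochromatic triangle with common colour $\kappa = c(e_{ij}) = c(e_{jl}) = c(e_{li})$, and take the path through $x_i \to y_j \to z_l \to x'_i \to y'_j$ along the edges $(v_i,v_j),(v_j,v_l),(v_l,v_i),(v_i,v_j)$. I would then track the stack from $\epsilon$ through the four macro-transitions: the first writes the word encoding the triple $(i,j,\kappa)$, with $[i]_2$ at the bottom and the colour segment on top (in the bit order dictated by the serialisation); the second pops a colour segment matching $\overline{[c(e_{jl})]_2}=\overline{[\kappa]_2}$ — the pop succeeds precisely because $c(e_{jl})=\kappa$ — and rewrites $[\kappa]_2$, restoring the stack; the third pops a colour segment matching $\overline{[c(e_{li})]_2}=\overline{[\kappa]_2}$ (again succeeding because $c(e_{li})=\kappa$) and writes nothing, leaving the segments $[i]_2,[j]_2$; the fourth, with rule $(x'_i,\overline{[j]_2}\,\overline{[i]_2},\epsilon,y'_j)$, pops exactly those two remaining segments, leaving $\epsilon$. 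Hence $(x_i,\epsilon)\to^*(y'_j,\epsilon)$.

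For the ``if'' direction, take any path $(x_i,\epsilon)\to^*(y'_j,\epsilon)$; by the skeleton it visits $x_i,y_a,z_b,x'_c,y'_j$ along edges $(v_i,v_a),(v_a,v_b),(v_b,v_c),(v_c,v_j)$. Replaying the stack: the first macro-transition pushes the encoding of $(i,a,c(e_{ia}))$; the second can fire only if its pop $\overline{[c(e_{ab})]_2}$ matches the top colour segment $\overline{[c(e_{ia})]_2}$, forcing $c(e_{ab})=c(e_{ia})$, and it restores the colour segment; the third can fire only if $\overline{[c(e_{bc})]_2}$ matches, forcing $c(e_{bc})=c(e_{ab})$, and it removes the colour segment, leaving the index segments for $a$ and $i$; the fourth, which out of $x'_c$ with target $y'_j$ is the rule $(x'_c,\overline{[j]_2}\,\overline{[c]_2},\epsilon,y'_j)$, can fire only if those two segments are $[j]_2$ and $[c]_2$, forcing $a=j$ and $c=i$, after which the stack is empty. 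Substituting $a=j$, $c=i$ gives $(v_i,v_j),(v_j,v_b),(v_b,v_i)\in E$ and $c(e_{ij})=c(e_{jb})=c(e_{bi})$, i.e. $v_i,v_j,v_b$ is a monochromatic triangle containing $e_{ij}$ (take $l=b$).

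\textbf{Main obstacle.} The delicate part is the stack bookkeeping once the multi-symbol reads and writes are serialised into single-symbol transitions through auxiliary states: one must verify that the paper's alternation of $[\cdot]_2$ on pushes with $\overline{[\cdot]_2}$ on pops is arranged so that a serialised pop against a segment previously pushed as $[w]_2$ succeeds exactly when the popped value equals $w$ (and likewise for the $2\lceil\log n\rceil$-bit colour words), and that the intermediate partial configurations inside a macro-transition cannot be escaped or interleaved with another macro-transition. The latter holds because the auxiliary states of distinct macro-transitions are disjoint, each has a single successor, and $\mathcal{G}_{\mathcal{P}}$ is acyclic, so no macro-transition is entered part-way or revisited. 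Everything else — the size and sparsity estimates, and that the peak stack height $4\lceil\log n\rceil$ is attained right after the first macro-transition (the segments for $i$, $j$, and $c(e_{ij})$ together) — is routine.
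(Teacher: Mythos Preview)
Your proposal is correct and follows essentially the same approach as the paper: both directions trace the unique shape of a path $(x_i,\epsilon)\to^*(y'_j,\epsilon)$ through the four layers $x\to y\to z\to x'\to y'$ and read off the equalities $c(e_{ij})=c(e_{jb})=c(e_{bi})$ and the index matches from the stack constraints at each macro-transition. Your write-up is in fact more careful than the paper's, which glosses over the auxiliary-state serialisation and somewhat conflates the intermediate index with the target index $j$ before deriving that equality; your explicit ``skeleton'' paragraph and the discussion of why macro-transitions cannot be interleaved fill exactly those gaps.
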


\begin{proof}
We first proof the "only if"-part of the claim. 

Suppose there is a monochromatic triangle $x_i, x_j, x_k$ in $U$. Then there is a path in $\mathcal{G}_{\mathcal{P}}$:

$$(x_i, \epsilon) \rightarrow \ldots \rightarrow 
(y_j, [i]_2 \, [j]_2 \, [c(e_{ij})]_2) \xrightarrow{(1)} \ldots \rightarrow 
(z_k, [i]_2 \, [j]_2 \, [c(e_{jk})]_2) \xrightarrow{(2)} $$

$$\ldots \rightarrow 
(x'_i, [i]_2 \, [j]_2) \xrightarrow{(3)} \ldots \rightarrow 
(y'_j, \epsilon)$$
Note that as $c(e_{ij}) = c(e_{jk})$ transition (1) is valid, as $c(e_{jk}) = c(e_{ki})$ transition (2) is valid.

Now we proceed to the "if"-part of the claim.

Suppose there is a path from $(x_i, \epsilon)$ to $(y'_j, \epsilon)$ in $\mathcal{G}_{\mathcal{P}}$. By the construction of $\mathcal{P}$ and $\mathcal{G}_{\mathcal{P}}$ the path should be of the form:

$$(x_i, \epsilon) \rightarrow \ldots \rightarrow 
(y_j, [i]_2 \, [j]_2 \, [c(e_{ij})]_2) \xrightarrow{(1)} \ldots \rightarrow 
(z_k, [i]_2 \, [j]_2 \, [c(e_{jk})]_2) \xrightarrow{(2)} $$

$$(x'_{i'}, [i]_2 \, [j]_2) \xrightarrow{(3)} \ldots \rightarrow 
(y'_{j'}, \epsilon)$$

As transition (1) is valid we get that $c(e_{ij}) = c(e_{jk})$, as transition (2) is valid we get that $c(e_{jk}) = c(e_{ki})$, as transition (3) is valid we get that $i' = i, j' = j$. Thus $x_i, x_j, x_k$ is a monochromatic triangle in $U$.
\end{proof}

\begin{theorem}
There exists an (AE-Mono$\Delta$, $n^{2.5}$) to (sparse all-pairs PDS reachability with stack depth $b = 4 \lceil \log n \rceil$, $n^{1.25}$) fine-grained reduction. 
\end{theorem}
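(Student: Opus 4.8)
The plan is to reuse the template of the preceding PDS reduction: assemble the construction cost, the sparsity, the stack-depth bound, and the correctness Lemma into the standard fine-grained argument. First I would invoke the size analysis established just above the statement --- $\mathcal{P}$ is sparse, has $|Q| = \mathcal{O}(n^2 \log n)$ control states, and is built from the AE-Mono$\Delta$ instance in $\mathcal{O}(n^2 \log n)$ time --- so the reduction itself runs in time truly subquadratic in $|Q|$, as the definition of a fine-grained reduction requires.

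Second, I would verify the claimed stack-depth bound $b = 4\lceil \log n \rceil$. Along any configuration path the only thing ever pushed is the concatenation $[i]_2\,[j]_2\,[c(e_{ij})]_2$, whose three blocks have lengths $\lceil\log n\rceil$, $\lceil\log n\rceil$ and $2\lceil\log n\rceil$, so the stack word never exceeds $4\lceil\log n\rceil$ symbols; and since every multi-symbol transition is split into auxiliary-state transitions that read before they write, this maximum is not raised by the splitting. I would also remark that the pops of $\overline{[c(e_{ij})]_2}$ and of $\overline{[j]_2}\,\overline{[i]_2}$ are exactly what enforce the colour equalities $c(e_{ij}) = c(e_{jk}) = c(e_{ki})$ and the index equalities that the Lemma relies on, so the Lemma together with the stack-depth observation covers correctness.

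Third comes the fine-grained argument proper. Running an all-pairs PDS reachability algorithm on $\mathcal{P}$ recovers, by the Lemma, for every edge $e_{ij}\in E$ whether $U$ has a monochromatic triangle through $e_{ij}$ --- that is, the entire AE-Mono$\Delta$ answer (taking the all-pairs output also sidesteps the direction in which the Lemma happens to be phrased). Hence an $\mathcal{O}(N^{1.25-\epsilon})$-time algorithm for sparse all-pairs PDS reachability of stack depth $4\lceil\log n\rceil$, with $N = |Q|$, would solve AE-Mono$\Delta$ in time $\mathcal{O}(n^2\log n + (n^2\log n)^{1.25-\epsilon}) = \mathcal{O}(n^{2.5-\epsilon'})$ for some $\epsilon' > 0$, contradicting the $n^{2.5}$ conditional lower bound under the 3SUM and APSP hypotheses~\cite{https://doi.org/10.4230/lipics.itcs.2020.53}. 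The only point needing any care is absorbing the $\log n$ factor: writing $(n^2\log n)^{1.25-\epsilon} = n^{2.5-2\epsilon}\,(\log n)^{1.25-\epsilon}$, any $\epsilon' < 2\epsilon$ works, so this is bookkeeping rather than a real obstacle --- the substance was front-loaded into the construction and the Lemma, and there is no genuinely hard step left.
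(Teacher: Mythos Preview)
Your proposal is correct and follows essentially the same approach as the paper: assemble the size/sparsity/stack-depth observations, invoke the correctness Lemma, and run the standard fine-grained arithmetic $(n^2)^{1.25-\epsilon} = n^{2.5-\epsilon'}$. You are in fact a bit more careful than the paper in tracking the $\log n$ factor through the final calculation and in noting explicitly that the auxiliary-state splitting reads before it writes, so the stack-depth bound is not violated.
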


\begin{proof}
By the observations above constructed PDS $\mathcal{P}$ is sparse and of size $|Q| = \mathcal{O}(n^2 \log n)$. Stack depth is bounded by $[n]_2 = 4 \lceil \log n \rceil$: we write two numbers of vertices and a color number (which does not exceed $n^2$ as there at most $n^2$ edges in the graph) on the path from $x$-state to $y$-state, this gives $4 \lceil \log n \rceil$-upper bound on the depth of the stack. Transitions between $y$- and $z$-vertices first read $2  \lceil \log n \rceil$ symbols from the stack and then write the back, transitions between $z$- and $x'$-vertices and between $x'$- and $y'$-vertices only read symbols from the stack. This implies that our upper bound is correct.

$\mathcal{P}$ can be constructed in $\mathcal{O}(n^2 \log n)$ time. If there was $\mathcal{O}(n^{1.25 - \epsilon})$ algorithm for PDS reachability it would lead to $\mathcal{O}(n^{2} + (n^2)^{1.25 - \epsilon}) = \mathcal{O}(n^{2.5 - \epsilon'})$ algorithm for AE-Mono$\Delta$.
\end{proof}

As AE-Mono$\Delta$ has no $ \mathcal{O}(n^{2.5 - \epsilon}), \epsilon > 0$ algorithm unless both 3SUM and APSP hypotheses are false, we get the following corollary.

\begin{cor}
The is no algorithm for sparse all-pairs PDS reachability with stack depth $b = 4 \lceil \log n \rceil$ with time complexity  $\mathcal{O}(n^{1.25 - \epsilon}), \epsilon > 0$ if at least one of 3SUM and APSP hypotheses is true.
\end{cor}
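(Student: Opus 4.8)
The plan is to obtain this corollary immediately from the fine-grained reduction of the preceding theorem, combined with the conditional hardness of AE-Mono$\Delta$ recalled in Section~\ref{sec:prelim}: AE-Mono$\Delta$ admits no $\mathcal{O}(n^{2.5-\epsilon})$-time algorithm, for any $\epsilon>0$, unless both the 3SUM and the APSP hypotheses fail. I would argue by contraposition.

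First I would suppose, toward a contradiction, that for some $\epsilon>0$ there is an algorithm $\mathcal{A}$ solving sparse all-pairs PDS reachability with stack depth $4\lceil\log m\rceil$ in time $\mathcal{O}(m^{1.25-\epsilon})$, where $m$ denotes the number of control states. Given an arbitrary AE-Mono$\Delta$ instance $U=(V,E)$ on $n$ vertices, I would apply the reduction of the previous theorem to build, in $\mathcal{O}(n^2\log n)$ time, a sparse PDS $\mathcal{P}$ with $|Q|=\mathcal{O}(n^2\log n)$ control states, binary stack alphabet, and stack depth at most $4\lceil\log n\rceil\le 4\lceil\log|Q|\rceil$. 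By the accompanying lemma, the all-pairs reachability relation of $\mathcal{P}$ records, for every pair of states $(x_i,y'_j)$, exactly whether the edge $e_{ij}$ lies on a monochromatic triangle of $U$; hence a single run of $\mathcal{A}$ on $\mathcal{P}$ followed by a scan of its output answers the AE-Mono$\Delta$ instance.

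Next I would bound the total running time: it is $\mathcal{O}(n^2\log n)+\mathcal{O}\bigl((n^2\log n)^{1.25-\epsilon}\bigr)$. Since $(n^2\log n)^{1.25-\epsilon}=\mathcal{O}\bigl(n^{2.5-2\epsilon}(\log n)^{1.25}\bigr)=\mathcal{O}(n^{2.5-\epsilon'})$ for a suitable $\epsilon'>0$ (the polylogarithmic factor is absorbed into an arbitrarily small decrease of the exponent, and the $\mathcal{O}(n^2\log n)$ preprocessing term is dominated), this would yield an $\mathcal{O}(n^{2.5-\epsilon'})$-time algorithm for AE-Mono$\Delta$, contradicting the stated lower bound whenever at least one of the 3SUM and APSP hypotheses holds. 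This is exactly the claimed conclusion.

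Since the reduction and its correctness lemma are already in hand, no genuine obstacle remains; the only point requiring a word of care is the polylogarithmic gap between the PDS size $|Q|$ and the AE-Mono$\Delta$ parameter $n$ (and the attendant rescaling of the stack-depth bound from $4\lceil\log n\rceil$ to $4\lceil\log|Q|\rceil$), which is harmless for a fine-grained reduction precisely because polylogarithmic factors may be traded against an arbitrarily small polynomial slack, as in the definition in Section~\ref{sec:prelim}.
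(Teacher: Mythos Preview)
Your proposal is correct and follows exactly the paper's approach: the corollary is obtained immediately by combining the preceding (AE-Mono$\Delta$, $n^{2.5}$) to (sparse all-pairs PDS reachability, $n^{1.25}$) reduction with the known $\mathcal{O}(n^{2.5-\epsilon})$ conditional lower bound on AE-Mono$\Delta$ under 3SUM and APSP. Your extra care with the polylogarithmic factor in $|Q|$ is a welcome bit of rigor that the paper leaves implicit.
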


\subsection{Reductions from LED variations}
\label{subsec:led}

Recall that LED problem is devoted to finding the minimum number of operations needed to transform the given word $w = w_1 \ldots w_n \in \Sigma^*$ into the word from the given CFL $\mathcal{L}(G), G = (N, \Sigma, P, S)$. It can be reformulated in a way, that every operation on a word has a cost: each symbol replacement costs $\omega_{repl}$, symbol deletion costs $\omega_{del}$, symbol insertion costs $\omega_{ins}$. The goal is to find transformation of $w$ into some $w' \in \mathcal{L}(G)$ of minimum total weight. In the canonical formulation of a problem $\omega_{repl} = \omega_{del} = \omega_{ins} = 1$.

LED problem is close to CFL reachability problem as it combines in itself CFG recognition with Edit Distance problem (for which we want to find distance from string $w$ to string $w'$). There are several cubic algorithms for LED~\cite{10.1137/0201022, 10.1016/0020-0190(95)00007-Y}. It is known of operations are restricted only to the insertions then a sub-cubic algorithm is unlikely to exists, this fact is proved via a reduction from APSP~\cite{DBLP:journals/corr/Saha14, 10.1145/3186893}. On the other hand, if insertions and deletions (in both cases: with or without replacement) are allowed, LED is solvable in truly subcubic time: there is $\tilde{\mathcal{O}}(n^{2.8603})$ algorithm for that~\cite{DBLP:journals/corr/BringmannGSW17}.

Next we present a reduction from LED to weighted s-t CFL reachability problem. The reduction works with every variation of LED: each operation may be allowed or restricted. Moreover different kind of operations may have different costs, if necessary. 

Let \textit{weighted CFL reachability problem}~\cite{lspidwnapew} be a generalisation of CFL reachability problem, where each edge in the given graph $D=(V, E, L, \Omega)$ has its integer weight $\omega$, let $\Omega$ be the set of weights of the edges. We suppose that there are no cycles in $D$ of total negative edge weight. Aside from the information about existence of $\mathcal{L}(G)$-path between some pairs of vertices in weighted CFL reachability problem we want to know the minimal possible weight of such path. There exists~\cite{lspidwnapew} an $\mathcal{O}(n^3)$ algorithm for weighted CFL reachability.

We allow edges to be labeled with $\epsilon$, the empty string. This assumption does not affect the complexity of a problem: we can change the grammar $G~=~(N, \Sigma, P, S)$ into the new grammar $G' = (N', \Sigma', P', S)$ in such a way that it operates with $\epsilon$ as with a new symbol of an alphabet. For that we suppose that $G$ is in CNF (otherwise we transform $G$ to it) and add a new nonterminal $E$ and a symbol $\epsilon'$ corresponding to $\epsilon$ ($\Sigma' = \Sigma \cup \{\epsilon'\}$), the empty string, and $2 \cdot |N| + 1$ new rules to productions $P$ to get production set $P'$:

\begin{itemize}
    \item[-] $E \rightarrow \epsilon'$
    \item[-] $A \rightarrow EA | AE \;\; \forall A \in N$
\end{itemize}

In this case new language $\mathcal{L}(G')$ contains exactly the words which, without symbol $\epsilon'$ in them, lie in $\mathcal{L}(G)$. In the given graph $D$ we change all edges with label $\epsilon$ to the same edges with label $\epsilon'$ and obtain a new graph $D'$. As $\epsilon'$ corresponds to the empty string the answer for CFL reachability problem for grammar $G'$ and graph $D'$ will be the same as the answer for CFL reachability problem for grammar $G$ and graph $D$.

Is it easy to see that CFL reachability problem is a subcase of its weighted version: we can assign each edge weight 1 (that way no negative cycles can appear) and look only on information about reachability in the answer for weighted CFL reachability.

It is also easy to see that APSP problem is a subcase of all-pairs weighted CFL reachability as we can define $G$ in such way that $\mathcal{L}(G) = \Sigma^*$, duplicate each edge and orient them in both ways and label the edges arbitrarily. In that case every path will be an $\mathcal{L}(G)$-path and we look for the path with minimal total edge weight between a pair of vertices. 

Now we proceed to (LED, $n^c$) to (weighted CFL reachability, $n^c$), $c > 1$ fine-grained reduction.

\underline{Reduction} Suppose we are given string $w = w_1\ldots w_n \in \Sigma^*$ and CFG $G = (N, \Sigma, P, S)$ as an instance of LED problem. We construct an instance of weighted CFL reachability problem $D = (V, E, L, \Omega)$, $G'$ as follows:

\begin{itemize}
    \item $G' = G$
    \item $V = \{s = v_1, \ldots, v_{n+1} = t\}$
    \item $s = v_1, v_{n+1} = t$ are vertices for which we want to know the minimal weight of $\mathcal{S}$-path between them
    \item $E = \{(v_i, s(\omega_{ins}), v_i)|s \in \Sigma, i \in \{1, \ldots, n+1\}\} \cup$ 
    
    $\{(v_i, \epsilon(\omega_{del}), v_{i + 1})|i \in \{1, \ldots, n\}\} \cup$
    
    $\{(v_i, s'(\omega_{repl}), v_{i + 1})|s' \in \Sigma \setminus \{w_i\}, i \in \{1, \ldots, n\}\}$, where $(v, s(\omega), u)$ denotes a directed edge from vertex $v$ to vertex $u$ labeled with $s$ and having weight $\omega$. Edges corresponding to the insertion procedure have weight $\omega_{ins}$ (\emph{insertion edges}), \emph{replacement edges} (have weight $\omega_{repl}$) and \emph{deletion edges} (have weight $\omega_{del}$) are defined analogously.
\end{itemize}

For visual representation of the reduction see Fig.~\ref{fig:led}.

\begin{figure}[!htp]
		
	\begin{center}  
		\includegraphics[scale=1]{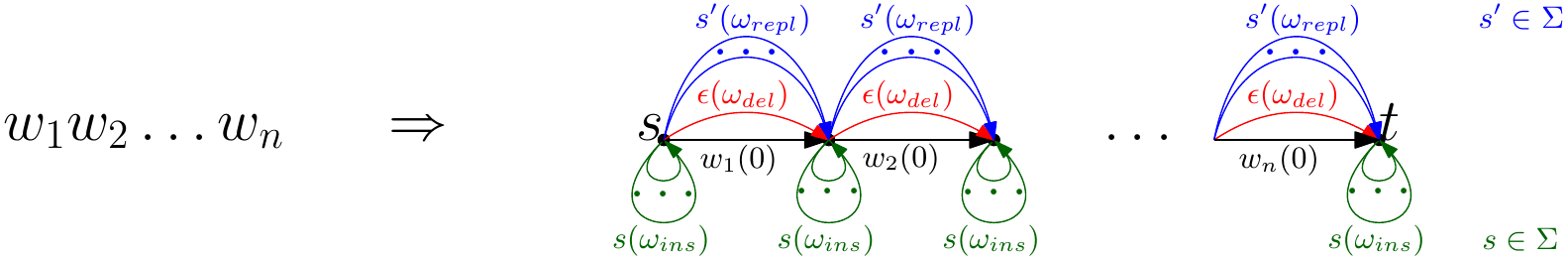}
	\end{center}
	
	\caption{Reduction from LED to CFL reachability with weights. On every edge mark $x(\omega)$ means that the edge is labeled with symbol $x \in \Sigma$ and is assigned weight $\omega$.}
	\label{fig:led}
	
\end{figure}

\underline{Correctness} The size of the created directed graph is linear:

\begin{itemize}
    \item $|V| = n + 1$
    \item $|E| = n \cdot (|\Sigma| + 1) + (n + 1) \cdot |\Sigma| = \mathcal{O}(n)$
\end{itemize}

Thus $D$ can be constructed in linear time.

Note that is some operations are forbidden, we can exclude the corresponding (insertion, deletion, replacement) edges from the graph, the upper bound on size of $D$ will remain the same.

\begin{lemma}
\label{th:led}
Let $k$ be the answer to instance $w, G$ of LED problem, $k'$ be the answer to constructed from $w$ above instance $D, G'$ of s-t weighted CFL reachability problem, then $k = k'$.
\end{lemma}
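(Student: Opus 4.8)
The plan is to prove both inequalities $k \le k'$ and $k' \le k$, by showing that the $S$-paths from $s$ to $t$ in $D$ are in (weight-preserving) correspondence with sequences of edit operations transforming $w$ into a word of $\mathcal{L}(G)$. The key observation underlying the construction is that any path from $v_1$ to $v_{n+1}$ decomposes into blocks: at each vertex $v_i$ we may take some number of insertion self-loops (contributing fresh symbols of cost $\omega_{ins}$ each), and then exactly one ``advancing'' edge $v_i \to v_{i+1}$, which is either a deletion edge (reading $\epsilon$, cost $\omega_{del}$, corresponding to deleting $w_i$) or a replacement edge (reading some $s' \ne w_i$, cost $\omega_{repl}$, corresponding to replacing $w_i$ by $s'$). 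Reading $w_i$ unchanged is modeled by a replacement edge only if we also add the ``identity'' case; since the paper restricts replacement edges to $s' \in \Sigma \setminus \{w_i\}$, I would note that keeping $w_i$ corresponds to an advancing edge labeled $w_i$ of weight $0$ — I should check whether such edges are implicitly present, and if the intended reading is that the identity is free one must include $\{(v_i, w_i(0), v_{i+1})\}$; I will assume the reduction is stated so that keeping a symbol is available at cost $0$ (this is the natural reading and matches Figure~\ref{fig:led}), and flag this as the one place where the statement's edge set must be read carefully.

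First I would prove $k' \le k$. Take an optimal edit sequence of total cost $k$ transforming $w$ into some $w' = u_1 \ldots u_m \in \mathcal{L}(G)$, so $S \Rightarrow^* w'$. Process the operations left to right along $w$: each position $w_i$ is either deleted, kept, or replaced, and between consecutive positions (and before $w_1$, and after $w_n$) some symbols are inserted. Translate this into a walk from $v_1$ to $v_{n+1}$: at vertex $v_i$, take one insertion self-loop for each symbol inserted ``just before position $i$'', each labeled with the inserted symbol; then take the advancing edge $v_i \to v_{i+1}$ labeled $\epsilon$ (if $w_i$ deleted), labeled $w_i$ at cost $0$ (if kept), or labeled $s'$ (if replaced by $s'$); finally at $v_{n+1}$ take insertion loops for the trailing inserted symbols. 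The concatenation of labels along this walk is exactly $w'$, so it is an $S$-path, and its total weight equals the total cost $k$ of the edit sequence by construction. Hence $k' \le k$.

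For the reverse inequality $k \le k'$, take a minimum-weight $S$-path $\pi$ from $s$ to $t$ in $D$; let $w'$ be its label word, so $S \Rightarrow^* w'$. Decompose $\pi$ as above into, for each $i \in \{1,\dots,n+1\}$, a (possibly empty) run of self-loops at $v_i$ followed (for $i \le n$) by one advancing edge $v_i \to v_{i+1}$; this decomposition is forced because the only edges leaving $v_i$ are its self-loops and the advancing edges to $v_{i+1}$, and the path must reach $v_{n+1}$. From this decomposition read off an edit script: insert the self-loop symbols at the appropriate gaps, and for position $i$ delete / keep / replace according to the type and label of the advancing edge out of $v_i$. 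This script transforms $w$ into exactly $w'$, and since $w' \in \mathcal{L}(G)$ its cost is an upper bound on the LED answer; that cost equals the weight of $\pi$, which is $k'$. Therefore $k \le k'$, and combining the two bounds gives $k = k'$.

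The main obstacle is purely bookkeeping rather than conceptual: making the block decomposition of an arbitrary $S$-path rigorous (every $v_1$–$v_{n+1}$ path in $D$ has this normal form, with self-loops not interleaved across different $v_i$ simply because you cannot return to $v_i$ once you leave it), and correctly handling the boundary gaps (insertions before $w_1$ at $v_1$ and after $w_n$ at $v_{n+1}$), the empty string $w$, and the ``keep a symbol at cost $0$'' edges. I would also remark explicitly that the weight assumption (no negative-weight cycles) combined with the fact that the only cycles in $D$ are the nonnegative-weight insertion self-loops guarantees the minimum-weight $S$-path is well-defined, so $k'$ is finite and attained whenever $\mathcal{L}(G)$ is nonempty; if $\mathcal{L}(G) = \emptyset$ both $k$ and $k'$ are $+\infty$ and the equality is trivial.
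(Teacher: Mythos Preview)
Your proposal is correct and follows essentially the same two-inequality strategy as the paper's proof: build an $S$-path from an optimal edit script to get $k' \le k$, and read an edit script off an optimal $S$-path to get $k \le k'$. Your flagged concern about the ``keep $w_i$ at cost $0$'' edges is well-founded---the paper's listed edge set $E$ omits them, but its proof (and Figure~\ref{fig:led}) tacitly uses a zero-weight path spelling $w$, so your assumption matches the intended construction.
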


\begin{proof}
We first prove that $k' \le k$. 

Let $w'$ be the closest string to $w$ and $T$ be the list of transformations needed to transform $w$ to $w'$. We take as an initial $S$-path from $s$ to $t$ a zero-weight path $P$ over edges labeled with symbols of $w$. After that we apply every transformation from $T$ to our path:

\begin{itemize}
    \item[-] If transformation inserts a symbol $s$ after $w_j$ we add an edge $(v_{j+1}, s(\omega_{ins}), v_{j+1})$ to our path. If a symbol is inserted in the beginning of the string we add an edge $(v_{0}, s(\omega_{ins}), v_{0})$.
    \item[-] If transformation deletes a symbol $w_j$ we replace an edge $(v_{j - 1}, w_{j}(0), v_{j})$ with an edge $(v_{j-1}, \epsilon(\omega_{del}), v_{j})$ in our path.
    \item[-] If transformation replaces a symbol $w_j$ with $s'$ we replace an edge $(v_{j - 1}, w_{j}(0), v_{j})$ with an edge $(v_{j-1}, s'(\omega_{repl}), v_{j})$ in our path.
\end{itemize}

If several symbols are inserted between $w_i$ and $w_{i+1}$ we suppose that the loops that were added in vertex $v_{i+1}$ in path $P$ are traversed by $P$ in the same order as they are in $w'$.

By construction the path $P$ now is an $S$-path between $s$ and $t$, which labels form the word $w'$. $P$ has weight $k$. That means that $k' \le k$.

We proceed to the second part of the proof: $k \le k'$.

Suppose there is an optimal $S$-path of weight $k'$ --- $P$ between $s$ and $t$, which labels form the word $w' \in \mathcal{L}(G)$. We translate it to the list of transformations of $w$ to obtain $w'$.

\begin{itemize}
    \item[-] Edge $(v_{i}, s(\omega_{ins}), v_{i})$ translates into insertion of $s$ after $w_{i - 1}$ or before $w_1$ if $i = 0$.
    \item[-] Edge $(v_{j-1}, \epsilon(\omega_{del}), v_{j})$ in $P$ translates into deletion of symbol $w_j$ in $w$.
    \item[-] Edge $(v_{j-1}, s'(\omega_{repl}), v_{j})$ in $P$ translates into replacement of $w_j$ with $s'$ in $w$.
\end{itemize}

If there were several loops through $v_i$ in path $P$ they are processed in the inversed order from the order in path. With this construction after application of all transformations to $w$ we will get $w'$ and the cost of the transformation is exactly $k'$. It implies that $k \le k'$.
\end{proof}

Once again note, that if some operation are restricted Lemma~\ref{th:led} works.

We get the following results.

\begin{theorem}
There is a (LED, $n^c$) to (weighted s-t CFL reachability, $n^c$), $c > 1$ fine-grained reduction.
\end{theorem}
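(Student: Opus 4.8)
The plan is to combine the correctness lemma (Lemma~\ref{th:led}) with the linear-time bound on the construction of the graph $D$ to produce a fine-grained reduction in the sense of the definition in Sec.~\ref{subsec:def}. The reduction algorithm is exactly the construction given above: from an instance $(w, G)$ of LED with $|w| = n$ we output the instance $(D, G')$ of weighted s-t CFL reachability, where $G' = G$ and $D$ has $|V| = n+1$ vertices and $|E| = \mathcal{O}(n)$ edges; by the size estimates this takes $\mathcal{O}(n)$ time, which is certainly $\mathcal{O}((n^c)^{1-\gamma})$ for a suitable $\gamma > 0$ whenever $c > 1$.

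First I would verify the YES/NO-equivalence condition. Strictly speaking LED is an optimization (value) problem rather than a decision problem, so the cleanest route is to phrase both problems in threshold form: ``is the optimum at most $K$?'' for a parameter $K$ supplied with the instance. Lemma~\ref{th:led} states that the two optima coincide, $k = k'$, so the threshold instances are equivalent for every $K$; this gives the first bullet of the fine-grained reduction definition. I would remark here that the $\epsilon$-labels used on deletion edges are harmless, since the paper already showed that $\epsilon$-labeled edges can be simulated by introducing a fresh terminal $\epsilon'$ and a few extra productions, in $\mathrm{poly}(|G|)$ preprocessing time that does not affect the asymptotics.

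Next I would check the running-time transfer condition. Suppose weighted s-t CFL reachability on instances of size $m$ can be solved in time $t_Q(m) = m^{c}$ with the hypothetical improvement $t_Q(m)^{1-\epsilon}$; since $|D| = \mathcal{O}(n)$, running that algorithm on $(D, G')$ costs $\mathcal{O}(n^{c(1-\epsilon)})$, and adding the $\mathcal{O}(n)$ construction time keeps the bound $\mathcal{O}(n^{c(1-\epsilon)}) = \mathcal{O}((n^{c})^{1-\epsilon})$, so we may take $\delta = \epsilon$ in the definition. Because the linear blow-up is so mild, the three conditions of a fine-grained reduction all hold with essentially no loss, and the theorem follows. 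I would also note in passing that the same construction, with the deletion and/or replacement edges removed, yields the analogous reduction for the restricted variants of LED, so in particular the APSP-based lower bound for insertion-only LED transfers to weighted CFL reachability.

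The only real subtlety — and the step I expect to need the most care — is making the reduction genuinely fine-grained rather than merely polynomial-time: one must confirm that the parameter against which complexity is measured on the target side (number of vertices $n' = |V| = n+1$) is linearly related to the LED parameter $n$, so that a truly sub-$n^c$ algorithm on one side really does yield a truly sub-$n^c$ algorithm on the other. Since $n' = \Theta(n)$ and the grammar is fixed, this is immediate, but it is the point worth stating explicitly. Everything else is bookkeeping already carried out in Lemma~\ref{th:led} and the size estimates above.
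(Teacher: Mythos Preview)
Your proposal is correct and follows essentially the same approach as the paper: combine the linear-time construction of $D$ with Lemma~\ref{th:led} to conclude that an $\mathcal{O}(n^{c-\epsilon})$ algorithm for weighted s-t CFL reachability yields an $\mathcal{O}(n^{c-\epsilon'})$ algorithm for LED. Your write-up is in fact more careful than the paper's two-line proof, explicitly verifying each clause of the fine-grained reduction definition and noting the decision-vs-optimization and $\epsilon$-edge technicalities.
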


\begin{proof}
The construction of $D$ takes $\mathcal{O}(n)$ time. If weighted s-t CFL reachability is solvable in $\mathcal{O}(n^{c - \epsilon}), 0 < \epsilon < c - 1$ time, then LED is solvable in $\mathcal{O}(n^{c - \epsilon'}), 0 < \epsilon' < c - 1$ time.
\end{proof}

In case when insertions are the only allowed operations, combining the conditional lower bound on LED based on APSP hypothesis~\cite{DBLP:journals/corr/Saha14, 10.1145/3186893} with the existence of subcubic reduction from LED to weighted s-t CFL reachability we get the following corollary.

\begin{cor}
There is no $\mathcal{O}(n^{3 - \epsilon}), \epsilon > 0$ algorithm for weighted s-t CFL reachability problem under APSP hypothesis.
\end{cor}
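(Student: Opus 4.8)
The plan is to obtain the bound by composing two facts already at our disposal: the APSP-conditional cubic lower bound for the insertions-only variant of LED~\cite{DBLP:journals/corr/Saha14, 10.1145/3186893}, and the subcubic reduction from LED to weighted s-t CFL reachability from the preceding theorem, instantiated with $c = 3$. The key observation is that this reduction is oblivious to which edit operations are permitted: for insertions-only LED we simply keep the insertion self-loops $(v_i, s(\omega_{ins}), v_i)$ and omit the deletion edges $(v_i, \epsilon(\omega_{del}), v_{i+1})$ and the replacement edges $(v_i, s'(\omega_{repl}), v_{i+1})$ from $E$. By the remark following Lemma~\ref{th:led} the identity $k = k'$ between the LED optimum and the minimal $S$-path weight still holds, and the size bounds $|V| = n+1$, $|E| = \mathcal{O}(n)$ are unaffected, so $D$ is still constructed in linear time.

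Given this, I would argue by contradiction. Assume weighted s-t CFL reachability is solvable in $\mathcal{O}(n^{3-\epsilon})$ time for some $\epsilon > 0$. For an instance $w$ of insertions-only LED of length $n$ with its fixed grammar $G$, build $D, G'$ in $\mathcal{O}(n)$ time, run the assumed algorithm, and return the minimal weight of an $S$-path from $s$ to $t$; by Lemma~\ref{th:led} this equals the optimal edit cost. Since $|V(D)| = \Theta(n)$, the overall time is $\mathcal{O}(n) + \mathcal{O}(n^{3-\epsilon}) = \mathcal{O}(n^{3-\epsilon'})$ for some $\epsilon' > 0$, i.e. a truly subcubic algorithm for insertions-only LED. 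This contradicts the APSP-conditional lower bound for that problem, so no $\mathcal{O}(n^{3-\epsilon})$ algorithm for weighted s-t CFL reachability exists unless the APSP hypothesis fails.

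I do not anticipate a genuine obstacle, since both ingredients are already in hand; the only points requiring care are verifying that the LED-to-CFL-reachability reduction remains a legitimate fine-grained reduction under the insertions-only restriction (size unchanged, Lemma~\ref{th:led} applicable verbatim) and that the merely linear blow-up in instance size keeps every inequality in the definition of a fine-grained reduction satisfied — both of which are routine.
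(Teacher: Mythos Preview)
Your proposal is correct and matches the paper's own argument: the corollary is obtained precisely by combining the APSP-based cubic lower bound for insertions-only LED with the reduction of the preceding theorem (specialised to the insertions-only case, which the paper also explicitly notes is handled by omitting the deletion and replacement edges). The contradiction argument and size check you spell out are the routine details the paper leaves implicit.
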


It is worth mentioning that this result applies to the subcase when the weights in weighted CFL reachability problem are restricted to only 0 and 1, i.e. there is no truly subcubic algorithm for small weights (integer weights with absolute value bounded by some constant $M$)  CFL reachability problem. For the analogous small weight version of APSP problem (in both directed and undirected case) there are known~\cite{10.1006/jcss.1997.1388} truly subcubic algorithms with time complexity $\mathcal{O}(n^{\frac{3+\omega}{2}})$. Together these two results imply that small weights CFL reachability problem in terms of fine-grained complexity is strictly harder than small weights APSP problem if APSP hypothesis it true.

\subsection{Other possible reductions}
\label{subsec:other_red}

The SAT, APSP and 3SUM problems are ones the most popular to get a reduction from. We have considered three problems which lower bounds are based on the SAT, APSP and 3SUM hypotheses as a possible candidates for creating a reduction.

Collecting Triangles and $\Delta$ Matching Triangles are a pair of problems, both of them have no subcubic lower bound unless all three of SETH, 3SUM and APSP hypotheses are false~\cite{10.1145/2746539.2746594}.

These problems are good in a way that a reduction from one of them would create the most believable lower bound for CFL reachability. Nevertheless creating a reduction from them seems unlikely.

\begin{theorem}
There is no reduction from (Collecting Triangles/$\Delta$ Matching Triangles, $n^3$) to (CFL reachability, $n^c, c > 2$) unless at least one of the following is false:

\begin{itemize}
    \item $\omega = 2$, where $\omega$ is the matrix multiplication exponent
    \item NSETH
    \item There exists an (all-pairs CFL reachability, $n^{c}$) to (s-t CFL reachability, $n^{c'}, c' > 2$) fine-grained reduction.
\end{itemize}
\end{theorem}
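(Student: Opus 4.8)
The plan is to argue by contradiction, combining the three hypothesized conditions to derive a truly subcubic \emph{nondeterministic} algorithm for Collecting Triangles (resp. $\Delta$ Matching Triangles), contradicting the fact from~\cite{10.1145/2746539.2746594} that these problems have an $n^3$ lower bound under SETH, 3SUM and APSP. The mechanism is the NSETH obstruction in the style of Carmosino et al.~\cite{10.1145/2840728.2840746}: a problem with an $n^3$ SETH-based lower bound cannot have a nondeterministic \emph{or} co-nondeterministic truly subcubic algorithm without refuting NSETH, since the hardness of that problem under SETH is itself derived via a deterministic fine-grained reduction, and NSETH says SETH-hard problems stay hard even against (co-)nondeterminism. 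So the strategy is: assume a reduction (Collecting Triangles, $n^3$) $\to$ (CFL reachability, $n^c$) with $c > 2$ exists, and show it yields a (co-)nondeterministic $n^{3-\delta}$ algorithm for Collecting Triangles.

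First I would reduce to the s-t case: by the third hypothesized item, if an (all-pairs CFL reachability, $n^c$) $\to$ (s-t CFL reachability, $n^{c'}$) fine-grained reduction exists with $c' > 2$, then the postulated reduction composes with it to give (Collecting Triangles, $n^3$) $\to$ (s-t CFL reachability, $n^{c'}$) for some $c' > 2$. (If the hypothesized CFL-reachability reduction already targets the s-t version, this step is vacuous; the all-pairs-to-s-t item is needed only to handle the all-pairs target.) Next I would invoke the result cited at the end of Sec.~\ref{sec:map} from Chistikov et al.~\cite{10.1145/3498702}: s-t CFL reachability has subcubic certificates for both existence and non-existence of a valid path, i.e.\ it lies in $\mathrm{NTIME}[n^{\omega+o(1)}] \cap \mathrm{coNTIME}[n^{\omega+o(1)}]$ (more precisely, the certificates have size/verification time $\tilde{O}(n^\omega)$). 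Under the assumption $\omega = 2$ this gives s-t CFL reachability a nondeterministic and co-nondeterministic algorithm running in $\tilde{O}(n^2)$ time. Composing the certificate with the fine-grained reduction: given a Collecting Triangles instance of size $n$, the reduction produces an s-t CFL-reachability instance of size $N$ with $N^{c'(1-\epsilon)} = O(n^{3(1-\delta)})$ for suitable $\epsilon, \delta$; guessing and verifying the $\tilde{O}(N^2)$ certificate then decides the original instance (co-)nondeterministically. Since $c' > 2$, choosing $\epsilon$ small enough that $N^2 \le N^{c'(1-\epsilon)} = O(n^{3-3\delta})$ makes this a truly subcubic (co-)nondeterministic algorithm for Collecting Triangles.

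Finally I would close the contradiction with NSETH. The $n^3$ SETH-based lower bound on Collecting Triangles from~\cite{10.1145/2746539.2746594} is proved by a \emph{deterministic} fine-grained reduction from CNF-SAT (via an intermediate like OV or hyperclique); such a deterministic reduction plus a (co-)nondeterministic $n^{3-\delta}$ algorithm for Collecting Triangles yields a co-nondeterministic algorithm for CNF-SAT running in $2^{(1-\epsilon')n}$ time for some $\epsilon' > 0$, contradicting NSETH. Hence not all three hypothesized conditions can hold, which is exactly the statement. The same argument runs verbatim for $\Delta$ Matching Triangles, whose SETH-based $n^3$ lower bound in~\cite{10.1145/2746539.2746594} is likewise via a deterministic reduction.

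The main obstacle I anticipate is bookkeeping the quantifier order in the fine-grained-reduction definition so that the polynomial blowup in instance size really does compose through both reductions and the subcubic certificate without eating the savings — in particular making sure the certificate's $\tilde{O}(N^\omega)$ cost, after substituting $\omega = 2$ and the relation $N = n^{\Theta(1)}$, lands strictly below $n^3$ for \emph{some} $\delta > 0$ uniformly, and that the "deterministic reduction from SAT" used in the last step genuinely has the one-sided form needed to transfer a co-nondeterministic upper bound. A secondary subtlety is stating precisely which of the two co-nondeterministic directions (existence vs.\ non-existence certificate) is the one contradicting NSETH — both are available by~\cite{10.1145/3498702}, so either suffices, but the write-up should commit to one and track YES/NO-instances consistently through the chain of equivalences.
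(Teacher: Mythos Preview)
Your proposal is correct and follows essentially the same route as the paper: reduce to the s-t case via the third item, then combine the subcubic certificates of Chistikov et al.\ with the deterministic SAT-to-Collecting-Triangles reduction to contradict NSETH when $\omega=2$. The only difference is presentational: the paper invokes the conclusion ``no SAT-based lower bound on s-t CFL reachability better than $n^{\omega}$ unless NSETH fails'' as a black box from~\cite{10.1145/3498702}, whereas you unpack that black box inline by explicitly composing the $\tilde{O}(n^{\omega})$ certificates through the reduction chain and tracking the savings --- which is exactly the Carmosino et al.\ mechanism underlying that cited statement.
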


\begin{proof}
First of all note that if there exists a reduction to the s-t CFL reachability, then we get an $n^c, c>2$ lower bound via a reduction from SAT. It is known~\cite{10.1145/3498702} that no lower bound on s-t CFL reachability better than $n^{\omega}$ can be achieved via a reduction from SAT unless NSETH is false. This implies that either $\omega > 2$ or NSETH is false.

Let us proceed to the case when a reduction was to the all-pairs CFL reachability. If there exists (all-pairs CFL reachability, $n^{c}$) to (s-t CFL reachability, $n^{c'}, c' > 2$) fine-grained reduction then we are in the previous case. Combining that reduction with the given reduction from Collecting Triangles/$\Delta$ Matching Triangles and with the reduction to the latters from SAT we get the desired reduction from SAT to s-t CFL reachability.
\end{proof}

Another problem that we have looked onto was AE-Mono$\Delta$. As its $\mathcal{O}(n^{2.5})$ conditional lower bound~\cite{https://doi.org/10.4230/lipics.itcs.2020.53} is based on 3SUM and APSP hypotheses, reduction from it cannot refute NSETH. We show in the end of Sec.~\ref{sec:line_edges} that the reduction from this problem to all-pairs CFL reachability is either nontrivial or highly unlikely.

\section{CFL reachability with bounded paths length}
\label{sec:bounded_paths}

Is is known that s-t CFL reachability on directed acyclic graph (DAG) can be solved faster that in general case:

\begin{theorem}[\cite{schepper2018complexity}, Th.5.9]
The s-t CFL reachability on a DAG $H = (V, E)$ can be solved in $\mathcal{O}(|H|^{\omega})$ time for a fixed grammar.
\end{theorem}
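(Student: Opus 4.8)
The plan is to reduce this to Valiant's matrix-multiplication algorithm for context-free recognition~\cite{valiant1975general}, the acyclicity of $H$ being exactly what supplies the triangular structure that Valiant's technique needs. As preprocessing I would put $G$ into CNF (Sec.~\ref{subsec:def}), which for a fixed grammar costs $\mathcal{O}(1)$, and dispose of the trivial cases: if $s=t$ then, since $H$ is a DAG, there is no nonempty path from $s$ to itself, so $t$ is $\mathcal{L}$-reachable from $s$ iff $S\to\epsilon\in P$. Otherwise $s\neq t$ and $\epsilon$-derivations are irrelevant, because in CNF the presence of $S\to\epsilon$ forbids $S$ from occurring on any right-hand side, so $S$ derives only the empty word, realized only by the empty path.

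First I would topologically sort $V=\{v_1,\dots,v_n\}$ in time $\mathcal{O}(|V|+|E|)$ so that every edge $v_i\to v_j$ has $i<j$ (if $s$ follows $t$ in this order there is no path at all, so return that $t$ is not $\mathcal{L}$-reachable from $s$). Form the $n\times n$ matrix $M$ whose $(i,j)$-entry is the set of all $A\in N$ with $v_i\xrightarrow{a}v_j\in E$ for some $a$ such that $A\to a\in P$; by the ordering, $M$ is strictly upper triangular. By the standard CNF-splitting argument, for $i<j$ the nonterminal $A$ derives some path from $v_i$ to $v_j$ iff $A$ lies in the $(i,j)$-entry of the least solution $M^{+}$ of $T\supseteq M$, $T\supseteq T\cdot T$, where $(X\cdot Y)_{i,j}=\{A:\exists k,\ B\in X_{i,k},\ C\in Y_{k,j},\ A\to BC\in P\}$; here the split index $k$ is automatically forced into $i<k<j$ precisely because $H$ is acyclic. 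This $M^{+}$ is exactly the context-free transitive closure of an upper triangular matrix that Valiant's algorithm computes, and the fixed grammar contributes only a constant multiplicative factor.

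Then I would run Valiant's divide-and-conquer closure on $M$: split the index set in two, recursively close the upper-left and lower-right triangular sub-blocks, and fill in the upper-right block with a (poly)logarithmic number of Boolean matrix products against the already-closed halves. For a fixed grammar this takes $\tilde{\mathcal{O}}(n^{\omega})$ time, which is $\mathcal{O}(|H|^{\omega})$ under the usual convention that $n^{\omega}$ absorbs sub-polynomial factors (using $n\le|H|$; building $M$ costs only $\mathcal{O}(|E|)$). The answer to the s-t instance is then read off as whether $S\in M^{+}_{i,j}$, where $i,j$ are the topological indices of $s,t$.

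The hard part is the upper-right-block step of Valiant's recursion, which I would import rather than reprove. An $A$-derivation from a left-half vertex to a right-half vertex may split at a midpoint in either half, and derivations newly found in that block can enable still more left-to-right derivations, so the block is itself the least fixed point of a system, not a single product; Valiant's key lemma is that, once the two halves are closed and because the whole matrix is upper triangular, this inner fixed point stabilizes after $\mathcal{O}(\log n)$ multiplication rounds, which is what keeps the recurrence at $\tilde{\mathcal{O}}(n^{\omega})$. The genuinely new content is thus only the reduction — the topological sort, the observation that reachability matrices on a DAG are upper triangular, and the check that their least fixed point coincides with Valiant's closure — while the $n^{\omega}$ bound itself comes from~\cite{valiant1975general}.
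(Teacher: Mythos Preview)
Your proposal is correct and follows essentially the same route the paper sketches (following~\cite{schepper2018complexity}): topologically sort the DAG so that the CYK-style nonterminal matrix becomes strictly upper triangular, then invoke Valiant's divide-and-conquer closure to get the $\mathcal{O}(|H|^{\omega})$ bound. You supply more detail on the closure semantics and the handling of the $\epsilon$-rule than the paper does, but the reduction and the imported machinery are the same.
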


The proof of this theorem is based on generalisation of Valiant's parser for CFG recognition problem. Valiant parser works as follows. It builds an $n \times n$ upper triangular matrix $M$ where the cell $(i, j)$ contains the set $T_{i, j}$ from CYK algorithm. Matrix is build through computation of the transitive closure of the matrix of CYK sets in the beginning of the algorithm. 

It was noticed in~\cite{schepper2018complexity} that the vertices in DAG have topological sorting and if we number the vertices according to it, the analogous $M$ matrix is upper triangular. Thus Valiant's algorithm applies to DAGs too with minor corrections. Recall that topological sorting can be built in $\mathcal{O}(|V| + |E|) = \mathcal{O}(n^2)$ and sorting vertices according to it does not affect the time complexity of the algorithm.

With some further observations we get that after computing transitive closure $M$ contains information about reachability of all pairs of vertices and we can formulate the following theorem.

\begin{theorem}
\label{th:cflr_on_dag}
The all-pairs CFL reachability problem on a DAG $H$ can be solved in $\mathcal{O}(|H|^{\omega})$ time for a fixed grammar.
\end{theorem}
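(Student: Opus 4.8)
\emph{Proof proposal.} The plan is to run exactly the Valiant-style algorithm behind the preceding s-t theorem and to observe that this algorithm already produces the all-pairs information at no extra asymptotic cost, so that only a final linear scan of the computed matrix is added.

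First I would fix a topological order of $H$ and renumber the vertices $1, \ldots, n$ accordingly, so that every edge goes from a smaller to a larger index; this costs $\mathcal{O}(|V| + |E|) = \mathcal{O}(n^2)$ time, and as preprocessing I would put the fixed grammar $G$ into CNF in $poly(|G|) = \mathcal{O}(1)$ time. Next I would form the $n \times n$ matrix $M$ over the semiring whose elements are subsets of $N$, with union as addition and with ``product'' $U \cdot V = \{A \mid \exists B \in U,\, C \in V,\, A \to BC \in P\}$, initialising $M[i][j]$ for $i < j$ to $\{A \mid A \to a \in P \text{ and } i \xrightarrow{a} j \in E\}$ and leaving all other entries empty. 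Because the vertices are topologically ordered, $M$ is strictly upper triangular, which is precisely the structure Valiant's transitive-closure trick needs: the closure of an upper-triangular block matrix with diagonal blocks $B_{11}, B_{22}$ and upper block $B_{12}$ has diagonal blocks $B_{11}^{+}, B_{22}^{+}$ and upper block $B_{11}^{+} B_{12} B_{22}^{+}$, and each matrix product occurring here is over the fixed-size semiring above, hence reducible to a constant number of Boolean matrix multiplications (the constant depending only on $|N|$) and so computable in $\mathcal{O}(m^{\omega})$ time on $m \times m$ blocks. The recurrence $T(n) = 2 T(n/2) + \mathcal{O}(n^{\omega})$ solves to $\mathcal{O}(n^{\omega})$, exactly as in Valiant's parser and in the s-t theorem.

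For correctness I would prove, by induction on path length (equivalently on the length of a shortest derivation; every path in a DAG on $n$ vertices has length at most $n-1$), the invariant that after the closure is computed $M^{+}[i][j] = \{A \in N \mid \text{there is an } A\text{-path from } i \text{ to } j\}$. This is the graph analogue of the CYK/Valiant invariant and uses the CNF decomposition of $A \Rightarrow^{*} w$ into $A \to BC$, $B \Rightarrow^{*} w''$, $C \Rightarrow^{*} w'''$ together with the corresponding split of the path; the empty-word case ($S \to \epsilon$, contributing every pair $(v,v)$) is handled separately exactly as in Algorithm~\ref{alg:cyk}. Then $j$ is $\mathcal{L}$-reachable from $i$ iff $S \in M^{+}[i][j]$ (plus the diagonal $\epsilon$-entries), so scanning all $\mathcal{O}(n^2)$ entries of $M^{+}$ outputs all $S$-reachable pairs; since $\omega \ge 2$, the topological sort and this scan are absorbed into $\mathcal{O}(n^{\omega})$.

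The only point that genuinely needs the ``further observations'' mentioned in the text is that the closure matrix already carries all-pairs information rather than the single bit the s-t algorithm inspects: once the invariant above is in place, the all-pairs statement is immediate because the s-t algorithm computes the whole of $M^{+}$ anyway. I expect the main obstacle — and the place where the DAG hypothesis is essential — to be precisely the strict upper-triangularity of $M$ obtained from the topological order, since that is what makes Valiant's recursive splitting both correct and terminating; for general (possibly cyclic) graphs no such reordering exists, which is why this argument does not give a subcubic algorithm for CFL reachability in general.
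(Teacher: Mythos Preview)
Your proposal is correct and follows essentially the same approach as the paper: topologically order the DAG so that the Valiant/CYK matrix is upper triangular, run the Valiant-style closure in $\mathcal{O}(n^{\omega})$ time, and then observe that the resulting closure matrix already contains the reachability information for every pair, so a final $\mathcal{O}(n^2)$ scan suffices. In fact you supply more detail (the semiring product, the recurrence, and the invariant) than the paper's own one-line justification, but the underlying argument is the same.
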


Next we use Theorem~\ref{th:cflr_on_dag} to obtain faster all-pairs CFL reachability algorithm for finding paths of bounded length. 

\begin{theorem}\label{lemma:constant-paths}
There exists an algorithm for all-edges CFL reachability with time complexity $\tilde{\mathcal{O}}(n^{\omega})$ that finds all $S$-reachable pairs of vertices that are reachable with paths of length no more than $k = \tilde{\mathcal{O}}(1)$.
\end{theorem}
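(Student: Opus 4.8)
The plan is to reduce the bounded-length all-pairs problem to a single all-pairs CFL reachability instance on a DAG and then invoke Theorem~\ref{th:cflr_on_dag}. The key idea is a "layered" (time-expanded) construction: create $k+1$ copies $V_0, V_1, \ldots, V_k$ of the vertex set $V$ of the input graph $D = (V, E, L)$, where the copy $v_\ell$ of a vertex $v$ in layer $\ell$ represents ``$v$, reached using $\ell$ edges so far''. For every labeled edge $u \xrightarrow{a} v$ in $D$ and every layer $\ell \in \{0, \ldots, k-1\}$, add the edge $u_\ell \xrightarrow{a} v_{\ell+1}$ to the new graph $H$. This $H$ is acyclic by construction, since every edge strictly increases the layer index. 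A path of length exactly $m \le k$ from $u$ to $v$ in $D$ corresponds bijectively, starting from layer $0$, to a path from $u_0$ to $v_m$ in $H$ carrying the same sequence of labels; hence $v$ is $S$-reachable from $u$ by a path of length at most $k$ in $D$ if and only if some $v_m$ with $1 \le m \le k$ is $S$-reachable from $u_0$ in $H$ (the case $m=0$, i.e. $S \Rightarrow^* \epsilon$ with $u = v$, is handled separately and trivially). So after one all-pairs CFL reachability computation on $H$ we simply collect, for each $u$, all $v$ such that $u_0 \xrightarrow{S} v_m$ appears for some $m$.

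Next I would bound the running time. The graph $H$ has $|V(H)| = (k+1)|V| = \tilde{\mathcal{O}}(n)$ vertices since $k = \tilde{\mathcal{O}}(1)$, and $|E(H)| = k|E| = \tilde{\mathcal{O}}(n^2)$ edges, so $|H| = \tilde{\mathcal{O}}(n^2)$; but the relevant size parameter for Theorem~\ref{th:cflr_on_dag} is the number of vertices, giving an $\mathcal{O}(|V(H)|^\omega) = \tilde{\mathcal{O}}((k n)^\omega) = \tilde{\mathcal{O}}(n^\omega)$ bound, absorbing the $k^\omega = \tilde{\mathcal{O}}(1)$ factor into the $\tilde{\mathcal{O}}$. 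Constructing $H$ takes $\tilde{\mathcal{O}}(n^2)$ time and the final extraction of reachable pairs takes $\tilde{\mathcal{O}}(n^2)$ time as well, both dominated by $n^\omega$ since $\omega \ge 2$. This yields the claimed $\tilde{\mathcal{O}}(n^\omega)$ total.

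The main point requiring care — and the step I expect to be the real obstacle — is making sure the layered graph genuinely captures $S$-paths and not just raw paths: the CFL reachability computation on $H$ must certify a derivation of the label word by the grammar, and we need the correspondence between $D$-paths-of-bounded-length and $H$-paths to respect labels exactly, which it does because layering only duplicates vertices, never edges' labels or endpoints' adjacency structure within a ``time step.'' A secondary subtlety is the definition of ``length'': if one wants paths of length \emph{at most} $k$ rather than exactly $k$, taking the union over all target layers $1, \ldots, k$ (as above) handles it; alternatively one can add $\epsilon$-labeled ``stay'' edges $v_\ell \xrightarrow{\epsilon} v_{\ell+1}$ and only read off layer $k$, using the $\epsilon$-edge elimination described in Sec.~\ref{subsec:led} to keep the grammar fixed — but the union approach is cleaner and avoids touching the grammar. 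Finally I would remark that the construction preserves acyclicity regardless of whether $D$ itself is acyclic, which is exactly what lets us apply Theorem~\ref{th:cflr_on_dag} to arbitrary input graphs.
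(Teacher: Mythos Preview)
Your proposal is correct and follows essentially the same approach as the paper: build a $k$-layered time-expanded DAG, apply Theorem~\ref{th:cflr_on_dag} to it, and read off bounded-length $S$-reachability by taking a union over the target layers. Your treatment is in fact slightly more careful than the paper's (you explicitly address the $m=0$ case and the ``at most $k$'' versus ``exactly $k$'' distinction), but the core construction and analysis are identical.
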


\begin{figure}[!htp]
		
	\begin{center}  
		\includegraphics[scale=1]{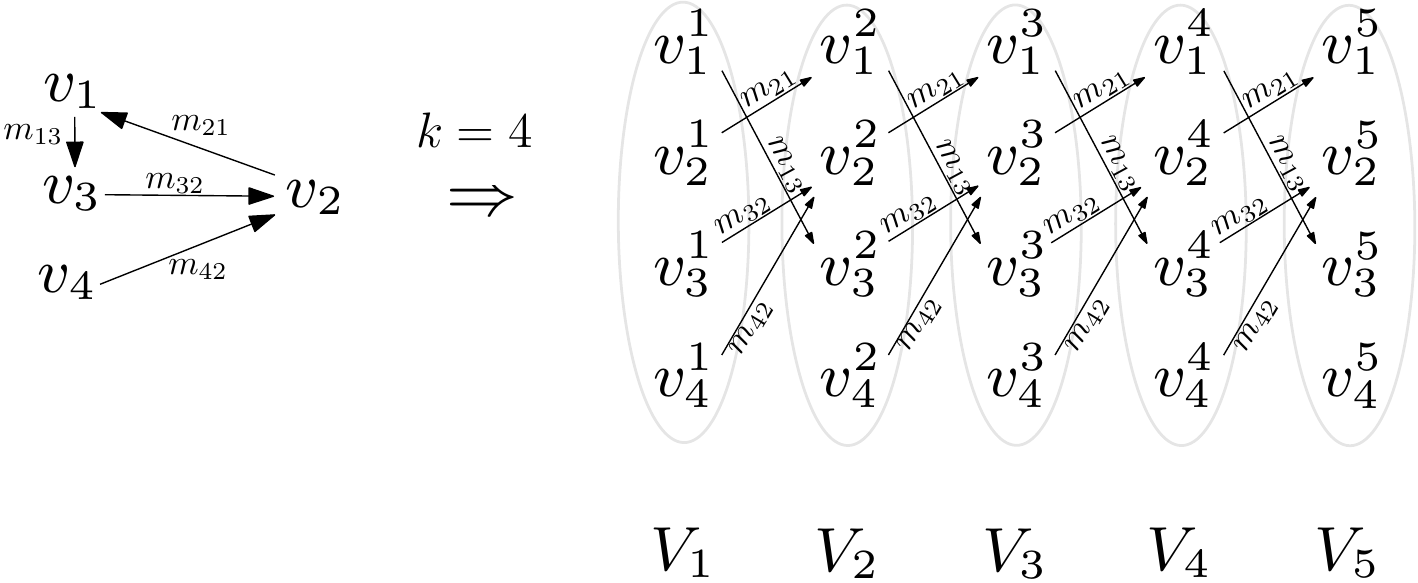}
	\end{center}
	
	\caption{Example of the reduction on graph with $4$ vertices and searching for CFL paths of length no more than $4$.}
	\label{fig:k-paths}
	
\end{figure}

\begin{proof}
Let $D = (V, E, L), V = \{v_1, \ldots, v_n\}$ be the given graph in all-pairs CFL reachability problem. We build a new labeled DAG $H = (V', E', L')$ of size $k = \tilde{\mathcal{O}}(n)$ as follows.

\begin{itemize}
    \item $V' = V_1 \cup \ldots \cup V_k$, where $V_i = \{v^i_1, \ldots, v^i_n\}, i \in \{1, \ldots, k\}$ is a copy of set of vertices $V$.
    \item $E' = \cup_{i = 1}^{k - 1} \{(v^i_p, m_{pl}, v^{i+1}_l)|(v_p, m_{pl}, v_l) \in E)\}$, where $(v_p, m_{pl}, v_l)$ denotes directed edge from $v_p$ to $v_l$ with label $m_{pl}$.
\end{itemize}

For visual representation of the reduction see Fig.~\ref{fig:k-paths}.

The number of vertices in graph $H$ is $|V'| = k \cdot n = \tilde{\mathcal{O}}(n)$. Notice that $H$ is acyclic. Now we can apply Theorem~\ref{th:cflr_on_dag} to $H$.  

We extract the reachability information for each pair of vertices $v_p, v_l$ as follows. $v_l$ is $S$-reachable from $v_p$ in $G$ via a path of length at most $k$ if at least one pair of vertices $(v_p^1, v_l^i), i \in \{1, \cdots, k\}$ is $S$-reachable in $H$. We need $\mathcal{O}(k)$ time to extract bounded-reachability information for every pair of vertices in $G$ from reachability in $H$. Thus we can get all-pairs reachability information in $\tilde{\mathcal{O}}(n^2)$ and it does not affect time complexity of our algorithm.

Time complexity of the algorithm is equal to the time complexity of application Theorem~\ref{th:cflr_on_dag} to $H$, which is $\mathcal{O}((n \cdot poly(\log n))^{\omega})$ = $\tilde{\mathcal{O}}(n^{\omega})$.
\end{proof}

We can extend the theorem above to the paths of greater lengths to get an algorithm with worse but still subcubic time complexity.

\begin{cor}\label{cor:paths}
There exists an algorithm for all-pairs CFL reachability with time complexity $\tilde{\mathcal{O}}(n^c), c < 3$ that finds all $S$-reachable pairs that are reachable with paths of length no more than $k = \tilde{\mathcal{O}}(n^{c'})$, where $c' < \frac{3}{\omega} - 1$.
\end{cor}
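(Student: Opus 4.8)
The plan is to reuse the construction from Theorem~\ref{lemma:constant-paths} with a larger number of layers and then rebalance the resulting time bound. Concretely, given the graph $D=(V,E,L)$ with $|V|=n$, I would build the layered DAG $H=(V',E',L')$ with $k$ copies of $V$, where now $k = \tilde{\mathcal{O}}(n^{c'})$ rather than $\tilde{\mathcal{O}}(1)$; the edge set is defined exactly as before, connecting layer $i$ to layer $i+1$ according to $E$. The graph $H$ is again acyclic, and a pair $(v_p, v_l)$ is $S$-reachable in $D$ via a path of length at most $k$ if and only if some pair $(v_p^1, v_l^i)$ with $i \in \{1,\ldots,k\}$ is $S$-reachable in $H$. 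Extracting this information for all pairs costs $\tilde{\mathcal{O}}(k n^2)$, which I will need to check does not dominate.

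The next step is to apply Theorem~\ref{th:cflr_on_dag} to $H$. Since $|V'| = k \cdot n = \tilde{\mathcal{O}}(n^{c'+1})$, this costs $\mathcal{O}(|H|^{\omega}) = \tilde{\mathcal{O}}(n^{\omega(c'+1)})$. For the algorithm to run in truly subcubic time $\tilde{\mathcal{O}}(n^c)$ with $c < 3$, I need $\omega(c'+1) < 3$, i.e. $c' < \frac{3}{\omega} - 1$, which is exactly the hypothesis. Note that $\frac{3}{\omega}-1 > 0$ since $\omega < 3$, so the bound on $k$ is non-trivial. Setting $c = \omega(c'+1)$ gives the claimed exponent; one should also verify $c > 1$ so that the extraction cost $\tilde{\mathcal{O}}(kn^2) = \tilde{\mathcal{O}}(n^{c'+2})$ is dominated by $\tilde{\mathcal{O}}(n^{\omega(c'+1)})$ — this holds because $\omega(c'+1) \ge 2(c'+1) = 2c'+2 > c'+2$.

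I expect the only real subtlety to be bookkeeping the polylogarithmic factors hidden in $k = \tilde{\mathcal{O}}(n^{c'})$: after multiplying through by $\omega$ in the exponent these become factors of the form $\mathrm{poly}(\log n)$ raised to a constant power, which are still absorbed by $\tilde{\mathcal{O}}(\cdot)$; the strict inequality $c' < \frac{3}{\omega}-1$ leaves enough slack so that $n^{\omega(c'+1)} \cdot \mathrm{poly}(\log n) = \mathcal{O}(n^{c})$ for an appropriate $c < 3$. Otherwise the argument is a direct quantitative rerun of the proof of Theorem~\ref{lemma:constant-paths}, so I would present it briefly, emphasizing the choice of $k$ and the final exponent arithmetic rather than repeating the correctness argument for the layered construction.
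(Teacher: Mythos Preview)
Your proposal is correct and follows essentially the same argument as the paper: build the $k$-layered DAG $H$ with $k=\tilde{\mathcal{O}}(n^{c'})$, apply Theorem~\ref{th:cflr_on_dag} at cost $\tilde{\mathcal{O}}(n^{\omega(1+c')})$, observe that the extraction cost $\tilde{\mathcal{O}}(n^{2+c'})$ is dominated, and set $c=\omega(1+c')<3$. Your bookkeeping is in fact slightly more explicit than the paper's (you justify $2+c' < \omega(1+c')$ via $\omega\ge 2$ and discuss the polylog slack), but the route is the same.
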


\begin{proof}
We build graph $H$ in the same way as in Theorem~\ref{lemma:constant-paths}. Graph $H$ has $k \cdot n$ vertices.

Time complexity of application Theorem~\ref{th:cflr_on_dag} to $H$ is: $\mathcal{O}((k \cdot n)^{\omega}) = \tilde{\mathcal{O}}((n^{1 + c'})^{\omega}) = \tilde{\mathcal{O}}((n^{(1 + c') \cdot \omega})$. As $(1 + c') \cdot \omega < 3$ it is subcubic.

Time complexity of extraction the reachability information from $H$ is: $\mathcal{O}(n^2 \cdot (k - 1)) = \tilde{\mathcal{O}}(n^{2 + c'})$. As $2 + c' < (1 + c') \cdot \omega < 3$ it is subcubic.

Total time complexity of the algorithm is $\tilde{\mathcal{O}}(n^c)$ for $c = (1 + c') \cdot \omega < 3$.
\end{proof}

Note that in CFL reachability problem some paths that we look for may be of exponential length~\cite{PIERRE1992279}. It implies that Theorem~\ref{lemma:constant-paths} and Corollary~\ref{cor:paths} can give us a faster algorithm in a very restricted case.

\section{CFL reachability on graphs with line-edges}
\label{sec:line_edges}

Suppose we want to solve all-pairs CFL reachability problem on a graph $D$ that is homeomorphic to graph $D', |V(D')|=n$ (i.e. $D$ is a subdivision of $D'$) but our vertices of interest in all-pairs reachability are exactly the vertices of corresponding to the vertices of $D'$. An example of such was in Section~\ref{subsec:pds}, see Fig.~\ref{fig:triangle_to_pds}: we have split the transition edges to make the system sparse, but the reachability information between auxiliary states was irrelevant to us. 

In general graphs even if the number of subdivisions that we can make on each edge is bounded by constant, number of vertices can increase from $\mathcal{O}(n)$ up to $\mathcal{O}(n^2)$. In that case the application of the standard algorithms for CFL reachability will take $\mathcal{\tilde O}(n^6)$ amount of time. We show how to preprocess $D'$ and the grammar if the number of subdivisions on each edge was small enough so that the time complexity of the algorithm will remain the same with respect to the polylogarithmic factors.

We say that $D$ is a \textit{subdivision-graph} if $D$ is a subdivision of graph $D'$. If every edge of $D'$ was subdivided less than $k$ times, $D$ is called \textit{$k$-subdivision-graph}

In $D$ we call vertices of $D'$ \textit{ordinary} vertices and vertices that were created through subdivision \textit{additional} vertices. Edge $e$ of $D'$ after subdivision is called \textit{line-edge} of $D$.

\emph{All-pairs CFL reachability on subdivision-graph $D$} is a variation of all-pairs CFL reachability problem on $D$ where we ask only about reachability between pairs of ordinary vertices.

$k$-subdivision-graph can rise up if the mark that we want to put on each edge does not consist of one symbol, but rather a word (see Fig.~\ref{fig:triangle_to_pds} and edges from $q_s$ for an example). In that case, a logical step would be to replace an edge with a word on it with a subdivision-graph, which symbols form this word. In this case $k$ equals the biggest length of the word on the edges of $G$.

\begin{theorem}
\label{th:line-edges-1}
Suppose there exists an all-pairs CFL reachability algorithm with time complexity $\mathcal{O}(n^c \cdot poly(|G|)) = \mathcal{O}(n^c)$. Then there exist all-pairs CFL reachability algorithm on $k$-subdivision-graphs, $k = o(\log n)$ with time complexity $\mathcal{O}(n^{c+c'}), \forall c' > 0$.
\end{theorem}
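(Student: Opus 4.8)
The goal is to reduce all-pairs CFL reachability on a $k$-subdivision-graph $D$ back to ordinary all-pairs CFL reachability on a graph whose vertex count is $\tilde{\mathcal{O}}(n)$, paying only a $poly(|G|)$ (hence, for $k = o(\log n)$, an $n^{c'}$) overhead. The natural idea is to \emph{contract each line-edge back into a single edge}, but now labeled not by a single terminal but by the whole word $w_e \in \Sigma^*$ of length $< k$ that the subdivided path spells. So the first step I would take is to define an auxiliary graph $D'$ on exactly the ordinary vertices of $D$, with one edge per line-edge, and replace "edge labeled by a word" with standard machinery: either allow word-labeled edges and then show they can be simulated, or — cleaner — modify the grammar $G$ into $G'$ so that each length-$\ell$ word $w_e$ appearing in $D$ becomes a single fresh terminal $t_{w_e}$, together with productions that derive exactly $w_e$ from a fresh nonterminal. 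This is exactly the $\epsilon$-edge trick used earlier in Section~\ref{subsec:led}, generalized from the empty word to arbitrary short words.

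**Key steps in order.** (1) Enumerate all distinct edge-words $w_e$; there are at most $|E(D')| \le n^2$ of them, each of length $< k$. (2) Build $G'$ from $G$: first put $G$ in CNF; for each distinct word $w = s_1 \cdots s_\ell$ occurring on an edge, add a fresh terminal $\sigma_w$ and a fresh nonterminal $B_w$ with a right-linear chain of $\ell$ productions deriving $s_1 \cdots s_\ell$ from $B_w$, and — crucially — wherever a terminal $s$ could be used inside the original derivation, now also allow the "packed" version. The right way to do this is to think of the line-edge as an $\epsilon$-style gadget: CFL reachability on $D$ over $G$ is equivalent to CFL reachability on $D'$ over $G'$ if we treat $\sigma_w$ as standing for the string $w$. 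So I would: build $G'$ so that $\mathcal{L}(G')$, after the substitution $\sigma_w \mapsto w$, equals $\mathcal{L}(G)$; this requires adding, for each nonterminal $A$ of $G$ and each word $w$, rules that let $A$ consume $\sigma_w$ exactly when $A \Rightarrow^* w$ in $G$ — but since $k = o(\log n)$, the number of words is at most $n^2$ and each "does $A \Rightarrow^*w$" check is decidable in $poly(|G|, k)$ time by running CYK on the length-$<k$ word, so $|G'| = \mathcal{O}(n^2 \cdot poly(|G|, k))$... which is too big. (3) So instead I would keep the gadget local: replace each line-edge $v \overset{w}{\leadsto} u$ by a single edge $v \xrightarrow{\sigma_w} u$, and extend $G$ by, for each distinct $w$, one fresh terminal $\sigma_w$ plus $|w| \le k$ productions chaining it out; then run the assumed algorithm on $D'$ with grammar $G'$ where $|G'| = |G| + \mathcal{O}(n^2 \cdot k) = \mathcal{O}(n^2)$ — still not $poly(|G|)$.

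**The real obstacle and how I would resolve it.** The tension is that naively the grammar blows up by a factor $n^2$ because there can be up to $n^2$ distinct edge-labels. The resolution must be that we do \emph{not} need one terminal per line-edge: a line-edge of length $\ell < k$ can be simulated by \emph{re-inflating} it, but only partially — or, better, we observe that since $k = o(\log n)$, the number of \emph{distinct possible words} of length $< k$ over the fixed alphabet $\Sigma$ is $|\Sigma|^{<k} = 2^{o(\log n)} = n^{o(1)}$. This is the point of the hypothesis $k = o(\log n)$! So there are only $n^{o(1)}$ distinct edge-words, hence only $n^{o(1)}$ fresh terminals $\sigma_w$ and $n^{o(1)} \cdot k = n^{o(1)}$ fresh productions, giving $|G'| = |G| \cdot n^{o(1)} = n^{o(1)}$, and therefore $poly(|G'|) = n^{o(1)} = \mathcal{O}(n^{c'})$ for every $c' > 0$. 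Then: build $D'$ on the $n$ ordinary vertices with $\le n^2$ edges each carrying one of the $n^{o(1)}$ fresh terminals; the assumed algorithm runs in $\mathcal{O}(|V(D')|^c \cdot poly(|G'|)) = \mathcal{O}(n^c \cdot n^{o(1)}) = \mathcal{O}(n^{c+c'})$. The correctness argument — that a pair of ordinary vertices is $S$-reachable in $(D,G)$ iff it is $S$-reachable in $(D',G')$ — goes by a routine induction on derivation length, using that $\sigma_w$ in $G'$ derives exactly $w$ (so any $S$-path in $D'$ expands to an $S$-path in $D$, and conversely any $S$-path in $D$ can only traverse a line-edge fully, since additional vertices have in-degree and out-degree one within the line-edge, so it factors through the $\sigma_w$-edges). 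The main thing to get right is this last structural observation and the counting $|\Sigma|^{<k} = n^{o(1)}$; everything else is bookkeeping.
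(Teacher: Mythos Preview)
Your overall plan and the crucial counting observation are exactly right and match the paper's approach: contract each line-edge of $D$ to a single edge of $D'$ labeled by its word $w$, use that there are at most $(|\Sigma|+1)^{k} = 2^{o(\log n)} = n^{o(1)}$ distinct such words (this is precisely the role of the hypothesis $k = o(\log n)$), and run the black-box algorithm on $(D', G')$ with $|V(D')| = n$ and $|G'| = n^{o(1)}$. Your structural remark that any $S$-path between ordinary vertices must traverse line-edges in full is also correct and is what makes the reduction sound.

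The gap is in how you build $G'$. What you need is a grammar over the new alphabet $\{\sigma_w\}$ with $\sigma_{w_1}\cdots\sigma_{w_m} \in \mathcal{L}(G')$ iff $w_1\cdots w_m \in \mathcal{L}(G)$, i.e.\ $\mathcal{L}(G') = h^{-1}(\mathcal{L}(G))$ for the homomorphism $h(\sigma_w) = w$. Your ``chaining'' construction --- a fresh nonterminal $B_w$ with a right-linear chain deriving $s_1\cdots s_\ell$ --- goes in the \emph{forward} direction: it lets something derive $w$ over the old alphabet, but gives no mechanism for the grammar to \emph{consume} the single terminal $\sigma_w$ wherever the original derivation would consume the string $w$. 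The real difficulty is that in a derivation of $w_1 w_2 \cdots w_m$ in $G$ the block boundaries need not align with subtree boundaries, so neither ``$B_w \Rightarrow^* w$'' nor the rule ``add $A \to \sigma_w$ whenever $A \Rightarrow^*_G w$'' (which you discarded earlier on size grounds) is sufficient.

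The paper closes this gap via the standard inverse-homomorphism construction through pushdown automata: pad all line-edges to length exactly $k$ with $\epsilon$-edges (so the new alphabet is the set of $k$-tuples over $\Sigma\cup\{\epsilon'\}$), convert $G$ to a PDA $\mathcal{A}$, build a PDA $\mathcal{A}'$ whose states record a state of $\mathcal{A}$ together with the current tuple and a position inside it, so that on reading a tuple $\mathcal{A}'$ internally simulates $\mathcal{A}$ reading its $k$ characters one by one, and convert $\mathcal{A}'$ back to a grammar $G'$. The state blowup is $|Q|\cdot(|\Sigma|+1)^{k}\cdot k = n^{o(1)}$, hence $|G'| = n^{o(1)}$, and the rest of your argument goes through verbatim. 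Once this step is fixed your proof is essentially identical to the paper's.
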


\begin{proof}
Let $D = (V, E, L)$ be an $k$-subdivision-graph graph, $k = o(\log n)$, on $n$ ordinary vertices and $G = (N, \Sigma, P, S)$ be the grammar in the given instance of all-pairs CFL reachability problem. We transform this instance to new instance $D', G'$ of the same problem. We suppose that $G$ is in CNF, otherwise, it can be converted to CNF in $\mathcal{O}(poly(|G|))$ time.

The proof consists of four steps that transform the words on the line-edges so that they are the terminals in the new grammar $G'$. $D'$ in the new instance is a graph, which was transformed to $D$ by subdivisions of the edges, i.e. $D'$ is a graph on ordinary vertices of $D$, vertices in $D'$ are connected by an edge if the corresponding vertices in $D$ are connected by an edge or a line-edge. Labels on the edges of $D'$ are defined below together with the grammar $G'$.

The steps of the transformation from an instance $D, G$ to an instance $D', G'$ are the following:

\begin{enumerate}
    \item First of all we make all line-edges in $D$ of the same length. 
    
    As $D$ is a $k$-subdivision-graph all line-edges of $D$ consist of no more than $k$ usual edges. We want to insert $\epsilon$-edges in every line-edge so that the length of every line-edge is exactly $k = o(\log n)$, the resulting graph is denoted by $D''$. We transform the grammar $G$ to $G''$ accordingly, see Sec.~\ref{subsec:led}, note about adding $\epsilon$-edges in the graph for the full description.
    
    Note that all words from $\mathcal{L}(G'')$ that we look for in all-pairs CFL reachability problem in $D$ now have length $k \cdot i$ for some $i \ge 0$.
    
    \item We define the new alphabet $\Sigma'$ as an alphabet of $k$-tuples of symbols over $\Sigma \cup \{\epsilon'\}$: $\Sigma' = \{(x_1, \ldots, x_k)|x_i \in \Sigma \cup \{\epsilon'\} \; \forall i\in \{1, \ldots, k\}\}$. We mark an edge in $D'$ by the $k$-tuple $(x_1, \ldots, x_k)$ from $\Sigma'$ if the marks on the corresponding transformed line-edge in $D''$ had marks $x_1, \ldots, x_k$ on the edges.

    \item We convert CFG $G''$ to PDA $\mathcal{A}$ with $l = poly(|G''|)$ number of states by a standard algorithm (see~\cite{10.5555/524279}, Lemma 2.21)

    \item After that we transform PDA $\mathcal{A}$ to PDA $\mathcal{A}'$ with $l \cdot |\Sigma|^{k} \cdot k = \mathcal{O}(n^{c'}), \forall c' > 0$ states that recognises all words from $\mathcal{L}(G'')$ that have length $k \cdot i$ for some $i$ and were split into $k$-tuples. That means that $\mathcal{A}'$ accepts the words over $k$-tuple language of the form $(w_1, \ldots, w_k)\ldots(w_{ik+1}, \ldots, w_{(i+1)k})$ if and only if $\mathcal{A}$ accepted the word $w_1\ldots w_{(i+1)k}$.
    
    We transform each state $q$ except $q_0$ of $\mathcal{A}$ into $|\Sigma|^{k} \cdot k$ states. We suppose that $\mathcal{A}$ can be in state $q_0$ only in the beginning of its work, otherwise we can make a duplicate state $q_0'$ with the same transitions as $q_0$ where $\mathcal{A}$ will come in the middle of its work. State $q_{(w_1, \ldots, w_k)}^i$ corresponds to the moment of the system when the automaton $\mathcal{A}$ has read the $i$-th symbol of the tuple (if they were written as a part of usual string $\ldots w_1 \ldots w_k \ldots$) and is in state $q$. 
    
    For each transition $(q, s, g, q', g') \in \delta$ we model the work of $\mathcal{A}$ and create the following transitions:
    
    $$(q_{(w_1, \ldots, w_k)}^i, \epsilon, g, q'^{i+1}_{(w_1, \ldots, w_k)}, g') \;\; \forall 0 \le i < k \;\; \forall (w_1, \ldots, w_{i+1} = s, \ldots, w_k) \in \Sigma'$$
    
    $$(q_{(w_1, \ldots, w_k)}^k, (w'_1, \ldots, w'_k), g, q'^{0}_{(w'_1, \ldots, w'_k)}, g') \;\; \forall (w_1, \ldots, w_k), (w'_1 = s, \ldots, w'_k) \in \Sigma'$$
    
    The set of final states consists of states: 
    
    $$q^k_{(w_1, \ldots, w_k)} \;\; \forall (w_1, \ldots, w_k) \in \Sigma' \;\; \forall q \in Q_f$$
    
    where $Q_f$ is a set of final states of $\mathcal{A}$.
    
    Transitions in the cases concerning the starting state $q_0$ are similar. For transition $(q_0, s, g, q', g') \in \delta$ in $\mathcal{A}$ we create in $\mathcal{A'}$ transitions:
    
    $$(q_0, (w_1, \ldots, w_k), g, q'^{0}_{(w_1, \ldots, w_k)}, g') \;\; \forall (w_1 = s, \ldots, w_k) \in \Sigma'$$
    
    By the construction $\mathcal{A'}$ makes the same transitions as $\mathcal{A}$, but "saves" a tuple that is read now in a memory.

    \item Finally we convert PDA $\mathcal{A}'$ to CFG $G'$ of size  $|G'| = \mathcal{O}(n^{c'}), \forall c' > 0$ by a standard algorithm (see~\cite{10.5555/524279}, Lemma 2.22)
\end{enumerate}

Now one can apply algorithm for CFL reachability with grammar $G'$ for graph $D'$ and get the answer to the usual all-pairs CFL reachability problem in time $\mathcal{O}(n^c \cdot poly(|G'|)) = \mathcal{O}(n^c \cdot poly(n^{c'})) = \mathcal{O}(n^{c+c'}), \forall c' > 0$. Note that $\mathcal{L}(G')$-reachable pairs of $D'$ correspond exactly to the $\mathcal{L}(G)$-reachable pairs of $D$.  

\end{proof}

Note that Theorem~\ref{th:line-edges-1} can be seen as a reformulation of a fact that if a grammar has size $o(n)$ then it does not affect the time complexity of an algorithm if it's time complexity depends polynomially on the size of the grammar, which is typical for CFL reachability algorithms. The result is useful when the words on the edges cannot be easily expressed as nonterminals of some grammar. 

One can ask a question whether Theorem~\ref{th:line-edges-1} can be generalized for the line-edges of the bigger length. Below we show that this kind of generalisations will be either hard to prove or highly unlikely for bigger lengths.

Let $A$ be the algorithm for all-pairs CFL reachability with time complexity $\mathcal{O}(T(n)~\cdot~poly(|G|))$. As generalisation should be correct for any time complexity $T(n)$ of algorithm $A$ we suppose that in the proof of generalisation of the theorem above, one makes several calls to algorithm $A$ on the graphs of size $\Theta(n)$. We suppose that these calls are made on directed graphs $D_1, \ldots, D_k, k \ge 1$. By the statement of the theorem we get that $k = \mathcal{O}(poly(\log n))$. With this assumption we can formulate the following theorem.

\begin{theorem}\label{lemma:no_expansion}
Suppose claim of Theorem~\ref{th:line-edges-1} holds for the $k$-subdivision-graphs, $k = \Omega(\log n)$. Then either:

\begin{itemize}
    \item at least one of $D_1, \ldots, D_k, k \ge 1$ has structure different from $D'$
    \item both APSP and 3SUM hypotheses are false
\end{itemize}
\end{theorem}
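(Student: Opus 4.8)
The plan is to show that under the stated assumptions any proof of the generalized Theorem~\ref{th:line-edges-1} for $k = \Omega(\log n)$ would yield a truly subcubic algorithm for AE-Mono$\Delta$, contradicting the conditional lower bound of~\cite{https://doi.org/10.4230/lipics.itcs.2020.53}. The starting point is the reduction of Section~\ref{subsec:pds}: AE-Mono$\Delta$ on $n$ vertices reduces to sparse all-pairs PDS reachability with stack depth $4\lceil \log n \rceil$, and PDS reachability is isomorphic to CFL reachability. In that reduction every transition carrying a word of length $\Theta(\log n)$ over the stack alphabet was split into auxiliary states — exactly the subdivision operation of this section. So the PDS $\mathcal{P}$ built there is (after reading it as a CFL-reachability instance) a $k$-subdivision-graph with $k = \Theta(\log n)$ over a base graph $D'$ whose ordinary vertices are the $\mathcal{O}(n)$ control states and whose structure mirrors the AE-Mono$\Delta$ graph $U$ (it is essentially five copies of $U$ glued in a path $x \to y \to z \to x' \to y'$). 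The ordinary vertices are precisely the pairs whose reachability we care about, so this is an instance of all-pairs CFL reachability on a subdivision-graph.

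First I would make this correspondence precise: the hypothetical generalization, applied to the base algorithm $A$ with $T(n) = n^c$, gives an algorithm that runs $A$ on graphs $D_1,\dots,D_k$ each of size $\Theta(n)$ with $k = \mathcal{O}(\mathrm{poly}(\log n))$, and whose total running time is $\tilde{\mathcal{O}}(n^{c})$ times a factor $\mathcal{O}(n^{c'})$ for every $c' > 0$ — i.e. $n^{c + o(1)}$ in the relevant reading. Second, I would instantiate $c$: plug in the exponent $c = 1.25$ coming from the best conceivable PDS-reachability running time in the AE-Mono$\Delta$ reduction (recall the base graph there has $\mathcal{O}(n)$ ordinary vertices, so an $\tilde{\mathcal{O}}(m^{1.25})$ all-pairs PDS algorithm on the sparse system of size $\Theta(n^2\log n)$ is what the corollary in Section~\ref{subsec:pds} is about). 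Composing: AE-Mono$\Delta$ on $n$ vertices $\to$ a $\Theta(\log n)$-subdivision CFL-reachability instance on $\Theta(n)$ ordinary vertices (base graph $D'$) $\to$ by the generalized theorem, $\mathcal{O}(\mathrm{poly}(\log n))$ calls to $A$ on graphs of size $\Theta(n)$ $\to$ since each call is to an ordinary CFL-reachability instance of size $\Theta(n)$ with no subdivisions, we may use the combinatorial/algebraic machinery available there. The key quantitative point is that the generalized theorem promises the blow-up in grammar size and the number of calls are both sub-polynomial, so the final exponent equals the exponent of $A$ up to $n^{o(1)}$, which lands strictly below the $n^{2.5}$ barrier for AE-Mono$\Delta$ — unless one of the calls $D_i$ is \emph{not} of the same shape as $D'$ (e.g. it secretly adds polynomially many extra vertices to absorb the subdivision, violating the ``size $\Theta(n)$, structure like $D'$'' hypothesis), which is exactly the first escape clause.

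The key steps, in order: (1) identify the AE-Mono$\Delta$ reduction of Section~\ref{subsec:pds} as producing a $\Theta(\log n)$-subdivision-graph whose base $D'$ has $\Theta(n)$ ordinary vertices and whose ordinary-vertex reachability encodes the AE-Mono$\Delta$ answer; (2) feed $D'$ and the Dyck grammar into the hypothetical generalized Theorem~\ref{th:line-edges-1}, obtaining an algorithm that, by hypothesis, makes $\mathcal{O}(\mathrm{poly}\log n)$ calls to a size-$\Theta(n)$ all-pairs CFL-reachability routine $A$ each on a graph structurally identical to $D'$; (3) observe that on such a structured instance the PDS-reachability exponent $1.25$ from Section~\ref{subsec:pds} applies (or, via Valiant/DAG methods of Section~\ref{sec:bounded_paths} if one prefers to route through bounded paths), so the total time is $n^{1.25 + o(1)} \cdot \mathrm{poly}(\log n) = \mathcal{O}(n^{2.5 - \epsilon})$ for some $\epsilon > 0$; (4) invoke~\cite{https://doi.org/10.4230/lipics.itcs.2020.53}: AE-Mono$\Delta$ has no $\mathcal{O}(n^{2.5-\epsilon})$ algorithm unless both 3SUM and APSP fail, giving the second escape clause. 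I expect the main obstacle to be step (1)–(3): pinning down exactly what ``structure different from $D'$'' must mean so that the dichotomy is clean — one has to argue that \emph{if} all the $D_i$ really are faithful $\Theta(n)$-vertex copies of $D'$, then the subdivision really was ``for free'' in the sense needed to keep the exponent at $1.25 + o(1)$, and rule out the algorithm smuggling the $\log n$-length words back in through a polynomially larger grammar that would push the true running time back up. Getting the bookkeeping of grammar size, number of calls, and per-call cost to multiply out to strictly sub-$n^{2.5}$ — rather than merely $n^{2.5 + o(1)}$ — is the delicate part, and is presumably why the theorem is phrased as ``either the structure differs or the hypotheses fail'' rather than as an outright impossibility.
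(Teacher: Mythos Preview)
Your overall framework is right: reduce AE-Mono$\Delta$ to a $\Theta(\log n)$-subdivision instance whose base $D'$ has $\Theta(n)$ ordinary vertices, and argue that if the hypothetical generalization preserved the structure of $D'$ in every $D_i$, one would solve AE-Mono$\Delta$ too fast. But your primary numerical route has a genuine gap. You propose to ``instantiate $c = 1.25$'' and speak of ``the best conceivable PDS-reachability running time''. There is no such algorithm: the exponent $1.25$ in Section~\ref{subsec:pds} is a \emph{lower} bound, not an upper bound, so you cannot plug it in as the running time of the black-box $A$. The generalized Theorem~\ref{th:line-edges-1} needs an actual algorithm $A$ with exponent $c$ to feed the $D_i$ into; you never supply one.

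The paper's argument supplies it via the structural observation you only mention in passing: the base graph $D'$ coming out of the AE-Mono$\Delta$ reduction is a DAG (five layers $x\to y\to z\to x'\to y'$, path length at most~$4$). If every $D_i$ has ``the same structure as $D'$'', then in particular every $D_i$ is a DAG on $\Theta(n)$ vertices, so Theorem~\ref{th:cflr_on_dag} solves each call in $\mathcal{O}(n^{\omega})$ time. With $\mathcal{O}(\mathrm{poly}\log n)$ calls this yields an $\tilde{\mathcal{O}}(n^{\omega})$ algorithm for AE-Mono$\Delta$, and since $\omega < 2.5$ this contradicts the conditional $n^{2.5}$ lower bound unless both 3SUM and APSP fail. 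That is the whole argument --- the bookkeeping you worry about (grammar blow-up, number of calls) is not the delicate part; the missing ingredient is simply naming the DAG property of $D'$ and invoking the $n^{\omega}$ DAG algorithm rather than a nonexistent $n^{1.25}$ one.
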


Before the proof of the theorem we present a reduction which was proposed in~\cite{10.1145/3571252}.

\begin{lemma}{\cite{10.1145/3571252}}\label{lemma:no_exp_help}
There exists an (AE-Mono$\Delta$, $n^3$) to (all-pairs CFL reachability on $k$-subdivision-graph, $k = \Theta(\log n)$, $n^3$) fine-grained reduction.
\end{lemma}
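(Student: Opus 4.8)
The plan is to translate the push-down system reduction of Section~\ref{subsec:pds} into the language of Dyck reachability: stack pushes and pops become opening and closing parentheses, and the binary encodings that were written on the stack become \emph{words} of parentheses placed on the edges. Once each such word-labeled edge is replaced by a line-edge (one parenthesis per sub-edge), the resulting graph is exactly a $\Theta(\log n)$-subdivision-graph, and an all-pairs CFL reachability query between ordinary vertices recovers the AE-Mono$\Delta$ answer. The grammar $G$ is the fixed Dyck-$2$ grammar, with parenthesis types $(_0,)_0$ and $(_1,)_1$ used to encode the two bit values.

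Concretely, given the colored graph $U=(V,E)$ I would take as ordinary vertices the five layers $\{x_i\}\cup\{y_i\}\cup\{z_i\}\cup\{x'_i\}\cup\{y'_i\}$, $5n$ in total, exactly as in the PDS construction. Encode each index $i$ by $\lceil\log n\rceil$ bits and each edge color by $2\lceil\log n\rceil$ bits (there are at most $n^2$ colors). For every $(x_i,x_j)\in E$ I would install word-labeled edges: $x_i\to y_j$ opens $[i]_2$, then $[j]_2$, then $[c(e_{ij})]_2$; $y_j\to z_k$ first closes $[c(e_{jk})]_2$ in reversed bit order (matching the color just opened) and then reopens $[c(e_{jk})]_2$; $z_k\to x'_i$ closes $[c(e_{ki})]_2$ in reversed bit order; and $x'_i\to y'_j$ closes $[j]_2$ and then $[i]_2$, again in reversed bit order. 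Each word has length at most $4\lceil\log n\rceil$, so subdividing every word-edge yields a $k$-subdivision-graph with $k=\Theta(\log n)$, whose ordinary vertices are the five layers and whose additional vertices are the subdivision points. We then solve all-pairs CFL reachability and, for each pair, test whether $y'_j$ is $S$-reachable from $x_i$.

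For correctness I would mirror the PDS lemma. Since the layered subdivision-graph is acyclic, any $S$-path from $x_i$ to a $y'$-vertex must traverse one vertex per layer in the order $x_i\to y_j\to z_k\to x'_{i'}\to y'_{j'}$. The concatenation of the edge words is a balanced Dyck word if and only if every closing parenthesis matches, by type, the most recently opened one of the same type; tracing the implicit stack shows the first color-closing forces $c(e_{ij})=c(e_{jk})$, the second forces $c(e_{jk})=c(e_{ki})$, and the final index-closings force $i'=i$ and $j'=j$. Thus $y'_j$ is $S$-reachable from $x_i$ precisely when some common neighbour $k$ yields a monochromatic triangle through $e_{ij}$, which is the AE-Mono$\Delta$ answer for that edge.

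Finally I would bound the reduction: the subdivision-graph has $O(n)$ ordinary vertices and $\tilde{\mathcal{O}}(n^2)$ vertices and edges in total, and is built in $\tilde{\mathcal{O}}(n^2)$ time, so a truly subcubic all-pairs CFL reachability algorithm on it gives a truly subcubic AE-Mono$\Delta$ algorithm, yielding the claimed $(n^3)\to(n^3)$ fine-grained reduction. The main obstacle will be the last-in-first-out bookkeeping: the closing parentheses must be emitted in reversed bit order relative to the openings so that bit $m$ of one color is matched against bit $m$ of the other, and I must verify that no unintended path (for instance one that closes or reopens colors in the wrong layer) can also produce a balanced word, so that the reachability answer contains no false positives.
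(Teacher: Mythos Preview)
Your proposal is correct and follows essentially the same approach as the paper: the paper's proof simply points back to the PDS reduction of Sec.~\ref{subsec:pds}, observes that pushes and pops become Dyck-$2$ parentheses on line-edges of length at most $4\lceil\log n\rceil$, and notes that the resulting graph is a $k$-subdivision-graph with $k=\Theta(\log n)$ on $5n$ ordinary vertices. You have spelled out in more detail exactly what the paper leaves implicit, including the bit-reversal bookkeeping and the false-positive check, but the construction and correctness argument are the same.
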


\begin{figure}[!htp]
		
	\begin{center}  
		\includegraphics[scale=1]{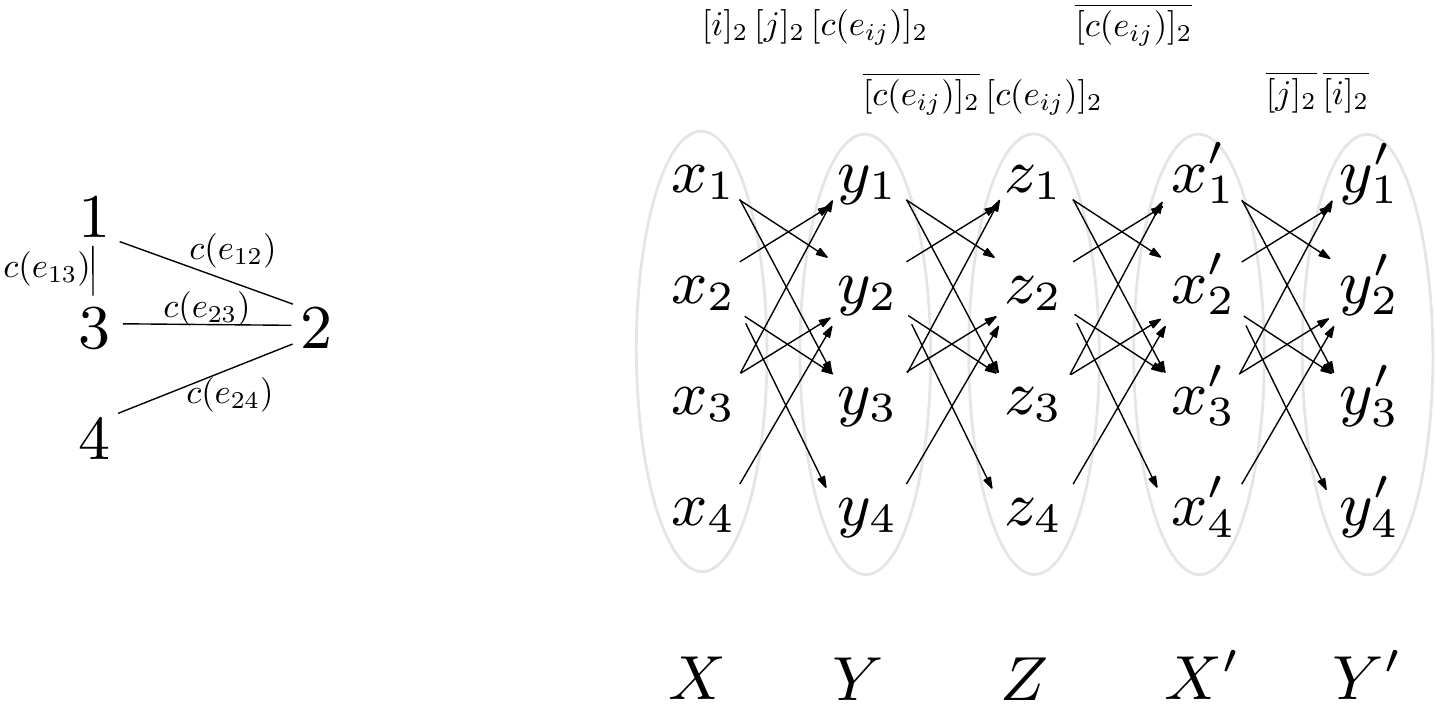}
	\end{center}
	
	\caption{The reduction from an instance of a AE-Mono$\Delta$ problem with $4$ vertices to the instance of all-pairs CFL reachability problem on subdivision-graph.}
	\label{fig:ae_mono_to_cflr}
	
\end{figure}

\begin{proof}
The reduction and its correctness is totally the same as in the reduction from (AE-Mono$\Delta$, $n^{2.5}$) to (sparse all-pairs PDS reachability with stack depth $b = 4 \lceil \log n \rceil$, $n^{1.25}$) presented in Sec.~\ref{subsec:pds}. 

Note that on every edge we get $4 \lceil \log n \rceil - 1$ additional vertices and thus the constructed graph is a $k$-subdivision-graph, $k = \Theta(\log n)$, with $5n = \Theta(n)$ ordinary vertices.
\end{proof}

Now we are ready to prove Theorem~\ref{lemma:no_expansion}.

\begin{proof}[Proof of Theorem~\ref{lemma:no_expansion}]
Suppose claim of Theorem~\ref{th:line-edges-1} works for $k$-subdivision-graphs, $k = \Omega(\log n)$. We want to apply it to the graph from the reduction in Lemma~\ref{lemma:no_exp_help} --- graph $D$. This is a $k$-subdivision-graph, $k = \Omega(\log n)$, moreover the corresponding graph $D'$ is an acyclic graph where the length of the paths between vertices do not exceed 4.

We have seen in Sec~\ref{sec:bounded_paths} that CFL reachability on DAGs can be solved in $\mathcal(n^{\omega})$ time. That means that at least one of $D_1, \ldots, D_k, k \ge 1$ should be not an acyclic graph. In other case AE-Mono$\Delta$ can be solved in $\tilde{\mathcal{O}}(n^{\omega})$ time by Theorem~\ref{th:cflr_on_dag}. Recall that $AE-Mono\Delta$ problem has $\mathcal{O}(n^{2.5})$ lower bound under 3SUM and APSP hypotheses. Together these two facts imply that both 3SUM and APSP hypotheses are false.

Moreover at least in one of $D_1, \ldots, D_k, k \ge 1$ the paths that we look for should be of length at least $k = \tilde{\Omega}(n^{c'}), c' \ge \frac{2.5}{\omega} - 1$ otherwise by Corollary~\ref{cor:paths} AE-Mono$\Delta$ can be solved in $\tilde{\mathcal{O}}(n^{2.5 - \epsilon}), \epsilon > 0$ time. 

In both of the cases in at least one of 3SUM and APSP hypotheses is true we need to considerably change the underlying graph $D'$ at least once before calling algorithm $A$ on it.  
\end{proof}

Note that the proven theorem also helps us to understand the structure of a graph that we should get in a reduction from AE-Mono$\Delta$ to CFL reachability if there exists one. Let $U$ be the graph on $n$ vertices in the instance of AE-Mono$\Delta$ problem. Then if at least one of APSP and 3SUM hypotheses is correct then the graph $D=(V, E, L)$ in CFL reachability problem should contain:

\begin{itemize}
    \item[-] cycle and the shortest $S$-path of length at least $\tilde{\Omega}(1)$ between some pair of vertices (at least $\tilde{\Omega}(n^{c'}), c' < \frac{2.5}{c \cdot \omega} - 1$ if $D$ has no more that $\mathcal{O}(n^c)$ vertices by Corollary~\ref{cor:paths})
    \item[-] or $n' = \Omega(n^{\frac{2.5}{\omega}})$ vertices (so that $\tilde{\mathcal{O}}(n'^{\omega}) \ge \tilde{\mathcal{O}}(n^{2.5})$)
\end{itemize} 

\section{Conclusion and future work}

In the course of this work, the following results were obtained:

\begin{enumerate}
    \item[1.] We have studied the existing algorithms for CFL reachability and proposed a faster algorithm for the subcase of finding paths of bounded length (see Sec.~\ref{subsec:basic} and~\ref{sec:bounded_paths}).
    \item[2.] We gave an overview on the reductions around CFL reachability problems (see Sec.~\ref{sec:map}).
    \item[3.1.] Some of the techniques used in existing conditional lower bounds were translated to CFL reachability and the problems connected to it, new conditional lower bounds were achieved (see Sec.~\ref{sec:lower} and Fig.~\ref{fig:map}). 
    \item[3.2.] We have presented new technique that may be used in creating new conditional lower bounds on CFL reachability and discussed its limitations (see Sec~\ref{sec:line_edges}).
    \item[3.3.] Several problems with lower bounds based on APSP, 3SUM and SAT hypotheses were studied. In Sec.~\ref{subsec:other_red} and~\ref{sec:line_edges} we discussed reasons why the reductions from them will be either hard, unlikely or will refute some hypotheses.
\end{enumerate}

There are still many open questions in the area of this work. We point out some of them. 

First question considers the existence of new conditional lower bounds for CFL reachability. We want to highlight two problems as a candidates to obtain them. They are APSP problem and OMV problem. We propose APSP problem and it's many subcubic equivalent analogues~\cite{10.1145/3186893} as APSP is connected to problems on paths, and OMV problem as it is connected to dynamic problems as is CFL reachability problem. 

One of the promising way of research is creating a reduction from all-pairs to s-t CFL reachability. There are known examples (e.g. All-Pairs Negative Triangle and Negative Triangle~\cite{Williams2009TriangleDV}) where such problems have the same complexity. Approach that was used in Triangle-connected problems is not naively applicable to CFL reachability as it is based on splitting sets of vertices in a tripartite graph. Still some existing ideas ideas may be taken into account when creating such a reduction between CFL reachability variations.

Despite the fact that Theorem~\ref{lemma:no_expansion} gives the intuition of the complexities behind the generalisation of the approach with line-edges it does admit the existence of other approaches: we can prove the analogous theorem for algorithms with some specific complexities, i.e. cubic or subcubic. It would be interesting to try another way to deal with long edges.

\printbibliography
\end{document}